\newcommand{\labeltext}[1]{%
	#1%
	\protected@edef\@currentlabel{#1}%
}
\renewcommand{\maketag@@@}[1]{\hbox{\m@th\normalsize\normalfont#1}}%
\renewcommand{\theequation}{\arabic{section}.\arabic{equation}}
\def\widebreve{\mathpalette\wide@breve}
\def\wide@breve#1#2{\sbox\z@{$#1#2$}%
	\mathop{\vbox{\m@th\ialign{##\crcr
				\kern0.08em\brevefill#1{0.8\wd\z@}\crcr\noalign{\nointerlineskip}%
				$\hss#1#2\hss$\crcr}}}\limits}
\def\brevefill#1#2{$\m@th\sbox\tw@{$#1($}%
	\hss\resizebox{#2}{\wd\tw@}{\rotatebox[origin=c]{90}{\upshape(}}\hss$}
\newcommand{\mathleft}{\@fleqntrue\@mathmargin0pt}
\newcommand{\mathcenter}{\@fleqnfalse}
\theoremstyle{definition}
\newtheorem{assumption}{Assumption}[section]
\newtheorem{theorem}{Theorem}[section]
\newtheorem{theorem*}{Theorem}[section]
\newtheorem{definition}{Definition}
\newtheorem{lemma}{Lemma}[section]
\renewenvironment{proof}{{\noindent\it (Proof).}}{\hfill$\blacksquare$\\\par}
\newcommand{\invn}{\frac{1}{n}}
\newcommand{\Jc}{\dgamma}
\newcommand{\cU}{\mathcal{U}}
\newcommand{\cV}{\mathcal{V}}
\newcommand{\Ta}{T_{1}}
\newcommand{\Tb}{T_{2}}
\newcommand{\Tc}{T_{3}}
\newcommand{\stl}{\setlength{\tabcolsep}}
\newcommand{\stla}{\setlength{\abovecaptionskip}}
\newcommand{\rx}{\mathrm{x}}
\newcommand{\suml}{\sum\limits}
\newcommand{\fancya}{\mathscr{A}}
\newcommand{\fancys}{\mathscr{S}}
\newcommand{\fancyk}{\mathscr{K}}
\newcommand{\Ka}{K_{1}}
\newcommand{\Kb}{K_{2}}
\newcommand{\Kc}{K_{3}}
\newcommand{\PKa}{P_{\Ka}}
\newcommand{\PKb}{P_{\Kb}}
\newcommand{\PKc}{P_{\Kc}}
\newcommand{\cM}{\mathcal{M}}
\newcommand{\E}{\mathrm{E}}
\newcommand{\da}{d_{\alpha}}
\newcommand{\cD}{\mathscr{D}}
\newcommand{\dxi}{d_{\xi}}
\newcommand{\norm}{\big\Vert}
\newcommand{\dgamma}{d_{\gamma}}
\newcommand{\dl}{d_{\mu}}
\newcommand{\dmu}{\dl}
\newcommand{\dtau}{d_{\tau}}
\DeclareMathAlphabet{\mathpzc}{OT1}{pzc}{m}{it}
\newcommand \ninf {n\rightarrow\infty}
\newcommand \irangen {i=1,\ldots,n}
\newcommand \calX {\mathcal{X}}
\newcommand \calZ {\mathcal{Z}}
\newcommand \calP {\mathcal{P}}
\newcommand{\dvartheta}{d_{\vartheta}}
\newcommand{\dtheta}{d_\theta}
\newcommand{\rdelta}{r_{\delta}}
\newcommand{\rG}{r_{G}}
\newcommand{\rlambda}{r_{\lambda}}
\newcommand{\rw}{r_{G\delta}}
\newcommand{\ry}{r_{\lambda\delta}}
\newcommand{\dlambda}{d_{\lambda}}
\newcommand{\dbeta}{d_{\beta}}
\newcommand{\RNum}[1]{\uppercase\expandafter{\romannumeral #1\relax}}
\newcommand{\titlename}{Wald inference on varying coefficients}
\begin{document}
		\title{\bf \titlename
        \thanks{Abhimanyu Gupta's research was supported by the Leverhulme Trust via grant RPG-2024-038. Xi Qu's research was supported by the National Natural Science Foundation of China via grants 72595872 and 72222007. Jiajun Zhang's research was supported by the National Natural Science Foundation of China via grant 72503134.}}
\author[1]{Abhimanyu Gupta\thanks{Email: abhimanyu.g@queensu.ca}}
\author[2]{Xi Qu\thanks{Email: xiqu@sjtu.edu.cn.}}
\author[3]{Sorawoot Srisuma\thanks{Email: s.srisuma@nus.edu.sg.}}
\author[4]{Jiajun Zhang\thanks{Email: jiajun30@suibe.edu.cn.}}

\affil[1]{Department of Economics, Queen's University, Dunning Hall, 94 University Avenue, Kingston, Ontario K7L 3N6, Canada, and Department of Economics, University of Essex, Wivenhoe Park, Colchester, CO4 3SQ, UK.}
\affil[2]{Department of Economics, Antai College of Economics and Management, Shanghai Jiao Tong University, 1954 Huashan Road, Shanghai, 200030, China PRC. }
\affil[3]{Department of Economics, National University of Singapore, 1 Arts Link, 117570, Singapore.}
\affil[4]{International Business School,
Shanghai University of International Business and Economics, Shanghai, 201620, China PRC.}

\renewcommand*{\Affilfont}{\small\it} 
		\maketitle	
	\bigskip

	\begin{abstract}
We present simple to implement Wald-type statistics that deliver a general nonparametric inference theory for linear restrictions on varying coefficients in a range of regression models allowing for cross-sectional or spatial dependence. We provide a general central limit theorem that covers a broad range of error spatial dependence structures, allows for a degree of misspecification robustness via nonparametric spatial weights and permits inference on both varying regression and spatial dependence parameters. Using our method, we first uncover evidence of constant returns to scale in the Chinese nonmetal mineral industry’s production function, and then show that Boston house prices respond nonlinearly to proximity to employment centers. A simulation study confirms that our tests perform very well in finite samples.
\end{abstract}	
\noindent%
{\it Keywords:  Spatial autoregression, varying coefficients, inference}\\
{\it JEL classification: C21, C31}
\vfill

\spacing{1.8} 

\section{Introduction}\label{sec:intro}
This paper develops a general framework for inference on varying coefficient regression models that is as easy to implement in practice as familiar Wald tests for finite-dimensional parameters. We also allow for spatial autoregressive (SAR) structure to permit interaction across the cross-section of economic agents. SAR models are a popular and parsimonious tool for researchers working in settings where economic agents interact via economic or social links. Such links can be geographic proximity, or proximity in some more general sense such as a social network. The SAR model can accommodate these types of general links easily, and this goes some way to explain its appeal, see e.g. \cite{Helmers2014} and \cite{hsieh2018} for examples of applications to peer effects. No surprise then that the baseline SAR model, due to \citet{cliff1981spatial}, has become much studied by econometricians. This model has long been popular with regional scientists, see \citet{anselin1988spatial} for an early book length treatment; important econometric contributions that established the foundation for rigorous econometric analysis of such models include \citet{kelejian1998generalized}, \citet{lee2004asymptotic} and \citet{Kuersteiner2020}. 

In keeping with the growing importance of SAR models and their ability to parsimoniously control for cross-sectional dependence in multiple regression designs, various strands of the literature have explored how to widen their scope. One such is the focus on varying coefficient SAR models, wherein model parameters are allowed to be unknown functions of some observed economic variable(s). This permits a flexible effect of both the regressors and spatial dependence on the outcome variable, and indeed varying coefficient models are abundant in statistics and econometrics (see e.g. \citet{Fan1999,Fan2008}). In the SAR context, varying coefficient models have been studied by \citet{Sun2014} and a sequence of important papers by \citet{Sun2016}, \citet{malikov2017semiparametric}, \citet{Sun2018} and \citet{Sun2024semiparametric}, for example. More generally, semiparametric SAR models have attracted much interest, see e.g. \citet{Su2010}, \citet{Su2012}, \citet{Robinson2012c} and \citet{Zhang2013}, but their focus tends to be on inference on the parametric component.

In this paper, we show how inference on varying coefficients in a range of SAR models can be conducted just like familiar parametric Wald tests for linear restrictions. Standard varying coefficient regression without cross-sectional dependence is naturally covered as a special case. The key idea is that if series approximations are used for the nonparametric varying coefficients then the series coefficients can be employed for very easy practical inference. This is motivated by the scientific aim of simplification of nonparametric/semiparametric inference. Indeed, a growing literature stresses such ideas, see e.g \citet{Ackerberg2012}, \citet{Gupta2018c} and \citet{Korolev2019}, amongst others. Naturally our methods are also applicable to the usual varying coefficient regression model with no cross-sectional dependence, see e.g. \cite{Ahmad2005}, which is a simply a particular case of our theory. 

To explain the intuition of our approach, suppose a sample of $n$ observations is available. With our method, the researcher can reduce  the econometric problem of inference on an infinite-dimensional object to inference on a growing number of series coefficients, say $p$. In practice, this simply means writing down the usual Wald test statistic $\mathscr{W}$ for $p$ linear restrictions, observing that $p\rightarrow\infty$ as $n\rightarrow\infty$ if it is the length of a series approximation, and appealing to a theorem that establishes $(\mathscr{W}-p)/\sqrt{2p}\overset{d}{\rightarrow} N(0,1)$ as $n\rightarrow\infty$, under the null hypothesis. The idea stems from the fact that as $p\rightarrow\infty$, a $(\chi^2_p-p)/\sqrt{2p}$ random variable approaches a standard normal variate, see e.g. \citet{DeJong1994}, \citet{Hong1995}, \citet{Gupta2018c} and \citet{Gupta2023}.

Using this series expansion approach, we theoretically justify Wald-type test statistics for inference on varying coefficients. We first cover a baseline high-order SAR model with varying regression coefficients but constant spatial lag parameters and extend the model to allow for spatial error dependence. The baseline model is similar to the one estimated by \citet{Sun2014}, but also features a growing number of spatial lags and covariates. We then show how to allow for a degree of misspecification robustness by incorporating nonparametric spatial weights \textit{\`a la} \citet{pinkse2002}. Our approach is also shown to work for models with varying coefficients on the spatial lags, in the spirit of \citet{malikov2017semiparametric}.

We then apply our methods in two different settings where varying coefficients
have been shown to play an important role. In the first application, we test
for constant returns to scale (CRS) in the production function of the
Chinese nonmetal mineral industry. This re-visits the empirical study in \citet{li2002semiparametric}, who showed that output elasticities of capital and labor vary with managerial expense. They also noted that the returns to scale, which is the sum of output elasticities of inputs, are close to one at many managerial expense levels suggesting CRS, but they did not have a test for this hypothesis. Using a newer dataset, our methods can test for CRS while allowing for spatial dependence. We do not reject the CRS hypothesis in all settings, which suggests that CRS technology is a salient feature of this industry. 

Our second study re-examines the Boston house price data of  \cite{harrison1978hedonic}. We estimate a model similar to
\citet{Sun2014}, who used a model selection procedure to recommend that location should have a varying effect on house prices with respect to some other features of the property. We formally test for
the varying location effects of these features using our statistics and indeed find them to be
statistically significant.

In Monte Carlo simulation studies, we  
experiment with a range of SAR models and spatial dependence structures to
demonstrate the applicability and generality of our approach. Results show that our tests have excellent finite-sample performances.

The rest of the paper is as follows: Section \ref{sec:sar} introduces our method in a baseline regression model allowing for higher-order SAR structure. Section \ref{sec:spaterror} extends the model to allow for spatial error dependence of the \citet{Kelejian2007} type. Section \ref{sec:sarlpnp} allows for nonparametric spatial weights, as in \citet{pinkse2002}, and thereby permits some degree of weight matrix robustness. Section \ref{sec:vcsar} shows how our approach can be adapted to the setting where the spatial lag coefficient varies, like in \citet{malikov2017semiparametric}. Section \ref{sec:app} applies our methods to the Chinese nonmetal mineral industry and Boston house prices. Section \ref{sec:mc} presents the simulation study, and Section \ref{sec:con} concludes. Proofs and additional simulation results are in the online appendix.
\section{Baseline higher-order spatial autoregression}\label{sec:sar}
\setlength\abovedisplayskip{6pt}
\setlength\belowdisplayskip{6pt}
Consider a vector of unknown functions $\delta\left(\cdot\right)=\left(\delta_{1}\left(\cdot\right),\ldots,\delta_{d_{\delta}}\left(\cdot\right)\right)'$, and the model 
\begin{equation}\label{sar_model}
	y_{in}=\sum_{j=1}^{d_\lambda}\lambda_j w_{in,j}'y_n+x_{in}'\beta+p_{in}'\delta\left(z_{in}\right)+\epsilon_{in},\irangen,
\end{equation}
where $y_n$ is the $n\times 1$ vector with typical element $y_{in}$, $w_{in,j}'$ is the $i$-th row of the spatial weight matrix $W_{jn}$, $j=1,\ldots,d_\lambda$, $d_\lambda\rightarrow\infty$ as $n\rightarrow\infty$, and $\epsilon_{in}$ is $i.i.d.$ with mean 0 and unit variance. Also,
$\left(x_{in},p_{in},z_{in}\right)\in\calX\times \calP\times\calZ \subseteq\mathbb{R}^{d_{\beta} \times d_{\delta} \times d_z}$, and $d_{\beta}\rightarrow\infty$ as $\ninf$ but $d_z$ and $d_\delta$ are fixed. For all $i=1,\ldots,n$, $z_{in}$ is throughout the paper uncorrelated with $\epsilon_{in}$ while $x_{in}$ is allowed to be correlated with $\epsilon_{in}$. 

It is worth noting that our results also hold when $d_\lambda$ and $d_\beta$ are held fixed but we state our theorems for the more complex case where they diverge. Allowing $d_\lambda$ and $d_\beta$ to diverge with sample size allows a flexible modeling approach, which has been studied in statistics since at least the work of \citet{huber1973robust}, but in spirit the sequence of experiments considered by \citet{LeCam1960} provides an even earlier reference. Econometricians have also studied such models in least squares and GMM settings, see e.g. \citet{Andrews1985}, \citet{Koenker1999} and \citet{Cattaneo2018}, to name a few. In a SAR setting, \citet{Gupta2013} and  \citet{Gupta2018c} emphasize the advantages of this approach, especially when clustered data imply asymptotic regimes where $d_\lambda$ diverges with $n$. 

We now drop $n$ subscripting, but occasionally remind the reader of the dependence on sample size of various quantities. We take spatial weight matrices $W_j$ to be non-stochastic everywhere except in Section \ref{sec:sarlpnp}. These weight matrices are also uniformly bounded in row and column sums, a commonly used restriction to control spatial dependence \citep{kelejian1998generalized, lee2004asymptotic}. This property is assumed to hold almost surely in Section \ref{sec:sarlpnp}, where the spatial weights are stochastic. 

Our aim is to do inference on $\delta(\cdot)$, specifically we wish to test a fixed number $m$ restrictions of the form $S\delta(z)=s, z\in\mathcal{Z}$,
where $S$ is a known, constant $m\times d_\delta$  matrix and $s$ is a known, constant $m\times 1$ vector. Because we can convert any linear restriction to an exclusion restriction, we will focus on tests of the null hypothesis
\begin{equation}\label{truenull_reparameterized}
	H_0^{true}:\delta(z)=0, z\in\mathcal{Z}.
\end{equation} 
Our test statistics will approximate $\delta(\cdot)$ by a series expansion and test if the coefficients in the expansion are jointly zero, thus converting the restriction in $H_0^{true}$ to an increasing-dimensional one, up to some suitably negligible approximation error.

\subsection{Test statistic}

We approximate $\delta_{k}(z)$ by $\psi_{k}^{h_k}(z)'\alpha_{k}^{h_k}$, where $\psi_{k}^{h_{k}}(z)=\left(\psi_{k1}(z),\ldots,\psi_{kh_k}(z)\right)'$ and $\alpha_{k}^{h_{k}}=\left(\alpha_{k1},\ldots,\alpha_{kh_k}\right)'$, for some basis functions $\psi_{k\ell}(\cdot)$, $\ell=1,\ldots,h_k$, $k=1,\ldots,d_{\delta}$. Thus each $\delta_{k}(z)$ is approximated by a linear combination of $h_k$ basis functions $\psi_{k}^{h_{k}}$ with coefficients $\alpha_{k}^{h_{k}}$. Define the $d_\alpha\times 1$ vector $\psi_{i}\equiv\psi_{i}\left(p_{i},z_{i}\right)=\left(p_{i1}\psi_{1}^{h_1}\left(z_{i}\right)',\ldots,p_{i d_{\delta}}\psi_{d_{\delta}}^{h_{d_{\delta}}}\left(z_{i}\right)'\right)'$ and $\alpha=\left(\alpha_{1}^{h_1}{'},\ldots,\alpha_{d_{\delta}}^{h_{d_{\delta}}}{'}\right)'$, with $d_\alpha=\sum_{k=1}^{d_{\delta}}h_k$. Then we can write  (\ref{sar_model}) as
\begin{equation}\label{sarlp_model_approx}
	y_{i}=\sum_{j=1}^{d_\lambda}\lambda_j w_{i,j}'y+x_{i}'\beta+\psi_{i}'\alpha+u_i,\irangen,
\end{equation}
with
$u_i=r_{i}+\epsilon_i$, where   $r_{i}=p_{i}'\delta\left(z_{i}\right)-\psi_{i}'\alpha$ is the approximation error. Let $X$ and $\Psi$ be matrices with typical rows $x_i'$ and $\psi_i'$, and $u$ be with elements $u_i$. Writing $Q$ for the $n\times d_\lambda$ matrix with typical columns $W_jy$, \eqref{sarlp_model_approx} can be written in matrix notation as
\begin{equation}\label{sar_model_approx_matrix}
	y=Q\lambda+X\beta+\Psi\alpha+u=L\xi+u,
\end{equation} 
where $L=[Q, X,\Psi]$ and $\xi=\left(\lambda',\beta',\alpha'\right)'$. The null hypothesis to approximate $H_0^{true}$ is
\begin{equation}\label{null:sar}
	H_0:\alpha=0,
\end{equation}
and our test will be implemented using the 2-stage least squares (2SLS) estimator
\begin{equation}\label{2sls_sar}
	\hat\xi\equiv \left(\hat\lambda',\hat\beta',\hat\alpha'\right)'=\left(L'{\PKa}L\right)^{-1}L'{\PKa}y,	
\end{equation} 
where $\PKa=\Ka\left(\Ka'\Ka\right)^{-1}\Ka'$ and $\Ka$ is an $n\times J_{1}$ instrument matrix
with $J_{1}\geq \dxi=\dlambda+\dbeta+\da$ but the same asymptotic order, i.e. ${J_1}/{\dxi}$ tends to a constant at least unity, as $n\rightarrow \infty$.  Let $\Ta=\invn L'{\PKa}L$ and   
$\cD_{1}=\invn R\Ta^{-1}R'$, where $R=\left[0_{d_\alpha\times\left(d_\lambda+d_\beta\right)}, I_{d_\alpha}\right]$, and $I_{d_\alpha}$ is the $d_\alpha\times d_\alpha$ identity matrix. Then, the test statistic is
\begin{equation}\label{stat:sar}
	\mathbb{W}_1=\frac{n\hat\alpha'\cD_{1}^{-1}\hat\alpha-d_\alpha}{\sqrt{2d_\alpha}}.
\end{equation}

\subsection{Asymptotic properties}
We begin with restrictions on moments of various objects, noting that throughout the paper $C$ denotes a generic positive constant, arbitrarily large but independent of $n$. Let $k_{ri}$ be elements of the instrumental variable matrix, corresponding to $\Ka, \Kb$ and $\Kc$ as defined in the following sections, and $l_{ri}$ denote the elements of $L$.

\begin{assumption}\label{ass:approxerrorsec2}
 $\sup_{i\geq 1}\E\left(r_i^2\right)=o(n^{-1})$.
\end{assumption}

\begin{assumption}\label{ass:errorssec2}
	$\E\left\vert\epsilon_i\right\vert^q<C$, for some $q>4$. 
\end{assumption}

\begin{assumption}\label{ass:secondmomentsec2}
$\E\left(l_{ri}^2\right)<C$ and	$\E\left(k_{ri}^2\right)<C$.
\end{assumption}

\noindent Assumption \ref{ass:approxerrorsec2} controls the approximation error. Sufficient conditions for various cases can be found in \citet{Chen2007}. We avoid using specific rates because many types of approximation errors occur in the paper and explicitly allowing different rates of decay introduces extra notation without adding much insight beyond our general conditions. Assumption \ref{ass:errorssec2} is fairly standard when establishing a central limit theorem for quadratic forms and is used to check a Lyapunov condition. Assumption \ref{ass:secondmomentsec2} imposes moment conditions on regressors and instruments.

To ease notation, we introduce the following definition that captures boundedness and non-multicollinearity of random or fixed matrices of growing dimension.
\begin{definition}\label{def:eigs}
	For any square, symmetric and positive semi-definite matrix $A$, let $\overline\alpha(A)$ and $\underline\alpha(A)$ denote its largest and smallest eigenvalues, respectively. If $A$ is random, we say that $A$ has \textit{Property G} if
	\[
	\overline{\alpha}(A)=O_p(1)\text{ and }\left\{\underline{\alpha}(A)\right\}^{-1}=O_p(1).
	\] 
	If $A$ is non-random, we say that $A$ has \labeltext{\textit{Property G}}\label{propG} if
	\[
	\limsup_{n\rightarrow\infty}\overline{\alpha}(A)<\infty\text{ and }\liminf_{n\rightarrow\infty}\underline{\alpha}(A)>0.
	\] 
\end{definition}

\begin{assumption}\label{ass:eigsec2}
	$n^{-1}\Ka'\Ka$, $n^{-1}L'\Ka\Ka'L$ and $n^{-1}\Psi'\Psi$ have \ref{propG}. 	
\end{assumption}

\noindent We first show that the statistic $\mathbb{W}_1$ can be approximated by a quadratic form in $\epsilon$. Define $\cM_1=\frac{1}{n}\Ka\cV_{1}\Ka'$, with $\cV_{1}=(\Ka'\Ka)^{-1}\Ka'L\Ta^{-1}R'\cD_{1}^{-1}R\Ta^{-1}L'\Ka(\Ka'\Ka)^{-1}$. Then, we have the following theorem.
\begin{theorem}\label{theorem:W1appr}
Under $H_0$, Assumptions \ref{ass:approxerrorsec2}-\ref{ass:eigsec2} and 
    \begin{flalign}\label{rc2.1}
    \frac{1}{\dlambda}+\frac{1}{\dbeta}+\frac{1}{\da}+\frac{\dxi}{n}\rightarrow 0,
    \end{flalign}
    as $n\rightarrow\infty$, we have
	\[
	\mathbb{W}_1-\frac{\epsilon'\mathcal{M}_1\epsilon-d_\alpha}{\sqrt{2d_\alpha}}=o_p(1).
	\]
\end{theorem}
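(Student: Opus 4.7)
The plan is to reduce $n\hat\alpha'\cD_1^{-1}\hat\alpha$ to a quadratic form in the composite disturbance $u=r+\epsilon$, and then show that the contributions involving the approximation error $r$ are asymptotically negligible compared with $\sqrt{2d_\alpha}$. Under $H_0$ the true parameter has zero $\alpha$-block, so from (\ref{sar_model_approx_matrix}) $y=L\xi_0+u$ with $R\xi_0=0$. Applying $R$ to (\ref{2sls_sar}) yields $\hat\alpha=R(L'P_{K_1}L)^{-1}L'P_{K_1}u=n^{-1}RT_1^{-1}L'P_{K_1}u$. Substituting this into the statistic, expanding $\cD_1^{-1}=n(RT_1^{-1}R')^{-1}$, and using $P_{K_1}=K_1(K_1'K_1)^{-1}K_1'$, the two powers of $n$ cancel and the sandwich collapses exactly to $u'\mathcal{M}_1 u$ by the definitions of $\mathcal{V}_1$ and $\mathcal{M}_1$. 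Hence
\[
n\hat\alpha'\cD_1^{-1}\hat\alpha - \epsilon'\mathcal{M}_1\epsilon \;=\; u'\mathcal{M}_1 u - \epsilon'\mathcal{M}_1\epsilon \;=\; 2r'\mathcal{M}_1\epsilon + r'\mathcal{M}_1 r,
\]
so the theorem reduces to proving both remainder terms are $o_p(\sqrt{d_\alpha})$.

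The crucial ingredient is a uniform bound on the largest eigenvalue of $\mathcal{M}_1$. I would first derive Property G for $T_1=n^{-1}L'P_{K_1}L$ from Property G of $n^{-1}K_1'K_1$ and $n^{-1}L'K_1K_1'L$ in Assumption \ref{ass:eigsec2}, using the dimension condition (\ref{rc2.1}) to secure invertibility as $n\to\infty$; Property G then transfers to the principal submatrix $RT_1^{-1}R'$, so that $\cD_1$ is also well conditioned and $\overline\alpha(\mathcal{M}_1)=O_p(1)$. Assumption \ref{ass:approxerrorsec2} delivers $\E\|r\|^2=\sum_{i=1}^n\E[r_i^2]\leq n\sup_i\E[r_i^2]=o(1)$, whence
\[
r'\mathcal{M}_1 r \;\leq\; \overline\alpha(\mathcal{M}_1)\,\|r\|^2 \;=\; o_p(1).
\]

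For the cross term, I would condition on the exogenous design (the regressors $p_i,z_i,x_i$ and the instrument matrix $K_1$): since $\epsilon_i$ is $i.i.d.$ mean zero with unit variance and uncorrelated with $z_i$, the conditional covariance is $\mathrm{Cov}(\epsilon\mid\text{design})=I_n$, giving
\[
\E\bigl[(r'\mathcal{M}_1\epsilon)^2\bigm|\text{design}\bigr] \;=\; r'\mathcal{M}_1\mathcal{M}_1 r \;\leq\; \overline\alpha(\mathcal{M}_1)^2\,\|r\|^2 \;=\; o_p(1),
\]
and Chebyshev yields $r'\mathcal{M}_1\epsilon=o_p(1)$. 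Both remainders are therefore $o_p(1)\subseteq o_p(\sqrt{d_\alpha})$, and dividing by $\sqrt{2d_\alpha}$ completes the argument.

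The main obstacle is the spectral control of $\mathcal{M}_1$ in the regime where $d_\xi$, $d_\alpha$, and $d_\lambda$ all grow with $n$; one must show $T_1$ is invertible with uniformly controlled extreme eigenvalues, which requires chaining Property G of the design matrices through a quadratic sandwich and exploiting $d_\xi/n\to 0$ from (\ref{rc2.1}). A secondary complication is that $L$ contains the random spatial-lag block $Q=[W_1y,\ldots,W_{d_\lambda}y]$, so Property G for the $L$-dependent matrices must be established in probability rather than deterministically; this is where the moment conditions in Assumption \ref{ass:secondmomentsec2} on $l_{ri}$ and $k_{ri}$ enter, via Markov-type inequalities applied to the stochastic components of $T_1$ and $\cD_1$.
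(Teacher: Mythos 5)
Your proposal is correct and follows essentially the same route as the paper: decompose $n\hat\alpha'\cD_1^{-1}\hat\alpha=u'\mathcal{M}_1u$ into $\epsilon'\mathcal{M}_1\epsilon$ plus the remainders $r'\mathcal{M}_1 r$ and $2r'\mathcal{M}_1\epsilon$, bound the first by $\overline\alpha(\mathcal{M}_1)\Vert r\Vert^2=o_p(1)$ via Assumptions \ref{ass:approxerrorsec2} and \ref{ass:eigsec2}, and the second by a conditional Chebyshev argument. The only cosmetic difference is that the paper exploits idempotence of $\mathcal{M}_1$ to write the conditional variance of the cross term as $4r'\mathcal{M}_1 r$, whereas you bound $r'\mathcal{M}_1^2 r$ by $\overline\alpha(\mathcal{M}_1)^2\Vert r\Vert^2$; both yield the same conclusion.
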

\noindent We remind the reader that results stated in this paper also hold when $d_\lambda$ and $d_\beta$ are held fixed and we do not state separate theorems for that case.
\begin{assumption}\label{ass:Sinv1}
	$\left(I_n-\sum_{j=1}^{d_\lambda} \lambda_j W_j\right)^{-1}$ exists and is uniformly bounded in row and column sums for all sufficiently large $n$.
\end{assumption}

\noindent Define $H_{\ell 1}\equiv H_{\ell 1,n}: \alpha=\alpha^{*}\equiv\nu_{1n}\da^{\frac{1}{4}}/(n\nu_{1n}'\Gamma_{1n}\nu_{1n})^{\frac{1}{2}}$, with $\nu_{1n}$ a $\da\times 1$ non-zero vector and $\Gamma_{1n}$ a $\da\times\da$ matrix defined in detail in the proofs. This sequence of local alternatives features a $d_\alpha^{1/4}$ damping factor that accounts for the cost of our nonparametric approach, and has been found in similar problems by \citet{Hong1995} and \citet{Gupta2018c}, amongst others. We can now state the main theorem of the section.

\begin{theorem}\label{theorem:sar} 
Under Assumptions \ref{ass:approxerrorsec2}-\ref{ass:Sinv1} and 
\begin{flalign}\label{rc2.2}
\frac{1}{\dlambda}+\frac{1}{\dbeta}+\frac{1}{\da}+\frac{\dxi^3}{n}\rightarrow 0,    
\end{flalign}
as $n\rightarrow\infty$, the following hold:
	(i) Under $H_0$, $\mathbb{W}_1\overset{d}{\rightarrow}N(0,1)$.
	(ii) $\mathbb{W}_1$ provides a consistent test. 
	(iii)  Under the sequence of
	local alternatives $H_{\ell 1}$, $\mathbb{W}_1\overset{d}{\rightarrow}N\left(2^{-1/2}, 1\right)$.	
\end{theorem}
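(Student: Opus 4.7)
The plan is to deduce all three parts from Theorem \ref{theorem:W1appr} together with a central limit theorem for quadratic forms in $\epsilon$. Under $H_0$, Theorem \ref{theorem:W1appr} gives $\mathbb{W}_1 = (\epsilon'\mathcal{M}_1\epsilon - d_\alpha)/\sqrt{2d_\alpha} + o_p(1)$, so part (i) reduces to showing the right-hand side converges in distribution to $N(0,1)$. The key structural fact is that $\mathcal{M}_1$ behaves like a rank-$d_\alpha$ orthogonal projection, so cyclic-trace manipulations using $\Ta = n^{-1}L'\PKa L$ yield $\mathrm{tr}(\mathcal{M}_1) = d_\alpha$ and $\mathrm{tr}(\mathcal{M}_1^2) = d_\alpha(1+o_p(1))$. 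Combined with the $q>4$ moment condition (Assumption \ref{ass:errorssec2}), bounded second moments on the rows of $L$ and $\Ka$ (Assumption \ref{ass:secondmomentsec2}), and the non-degeneracy in Assumption \ref{ass:eigsec2}, a standard CLT for quadratic forms in independent mean-zero errors then applies. The sharpened rate $\dxi^3/n \to 0$ (relative to $\dxi/n \to 0$ in Theorem \ref{theorem:W1appr}) is consumed here, both to verify a Lyapunov-type condition $\max_i \mathcal{M}_{1,ii}^2/\mathrm{tr}(\mathcal{M}_1^2)\to 0$ and to absorb the additional remainder arising from the randomness of $\mathcal{M}_1$.

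For (ii) and (iii), I would extend the analysis by decomposing the statistic around a non-zero $\alpha$. Using the 2SLS identity $\hat\alpha - \alpha = R(L'\PKa L)^{-1}L'\PKa u$,
\begin{equation*}
n\hat\alpha'\cD_1^{-1}\hat\alpha = n\alpha'\cD_1^{-1}\alpha + 2n\alpha'\cD_1^{-1}(\hat\alpha - \alpha) + n(\hat\alpha-\alpha)'\cD_1^{-1}(\hat\alpha-\alpha).
\end{equation*}
The last (pure-noise) term is precisely the object already handled in Theorem \ref{theorem:W1appr}, so after normalization it contributes the $N(0,1)$ limit. For (ii), under any fixed $\delta(\cdot)\neq 0$ the pseudo-true $\alpha$ (the $L^2$-projection coefficients) satisfies $n\alpha'\cD_1^{-1}\alpha/\sqrt{d_\alpha}\to\infty$, using Assumption \ref{ass:eigsec2} to bound the smallest eigenvalue of $\cD_1$ from below and Assumption \ref{ass:Sinv1} to keep the reduced form well-posed; hence $\mathbb{W}_1\overset{p}{\to}\infty$. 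For (iii), the calibration in $\alpha^* = \nu_{1n}\da^{1/4}/(n\nu_{1n}'\Gamma_{1n}\nu_{1n})^{1/2}$ is designed so that $n\alpha^{*\prime}\cD_1^{-1}\alpha^* = d_\alpha^{1/2}(1+o(1))$, which after normalization by $\sqrt{2d_\alpha}$ produces the mean shift $2^{-1/2}$. The cross term has mean zero and variance of order $n\alpha^{*\prime}\cD_1^{-1}\alpha^* = O(d_\alpha^{1/2})$ by a Cauchy--Schwarz bound against $\mathrm{Var}(\hat\alpha)$, hence is $O_p(d_\alpha^{1/4}) = o_p(\sqrt{d_\alpha})$ and vanishes after division by $\sqrt{2d_\alpha}$.

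The main obstacle is the CLT step in part (i), because $\mathcal{M}_1$ is random and correlated with $\epsilon$ through the endogenous spatial lags $W_j y$ inside $L$, so $\epsilon'\mathcal{M}_1\epsilon$ is not a standard quadratic form in i.i.d. variables. My strategy is to use Assumption \ref{ass:Sinv1} (invertibility and uniform row and column sum boundedness of $(I_n - \sum_{j=1}^{d_\lambda}\lambda_j W_j)^{-1}$), together with the row/column sum bounds on $W_j$, to show that $\Ta$ and $\cD_1$ concentrate on non-stochastic limits at a rate fast enough that $\mathcal{M}_1$ can be replaced by a deterministic analog $\overline{\mathcal{M}}_1$ in the quadratic form, with sandwich remainder $o_p(\sqrt{d_\alpha})$. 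The CLT then applies to $\epsilon'\overline{\mathcal{M}}_1\epsilon$ in the standard way, and it is precisely this replacement step where the strengthened rate $\dxi^3/n\to 0$ is used, beyond the weaker $\dxi/n\to 0$ that sufficed for Theorem \ref{theorem:W1appr}.
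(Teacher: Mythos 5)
Your treatment of parts (ii) and (iii) matches the paper's: the paper also decomposes $n\hat\vartheta'\cD^{-1}\hat\vartheta$ around the (pseudo-true) parameter, shows the leading term $n\vartheta'\cD^{-1}\vartheta/\sqrt{2\dvartheta}$ diverges for consistency, and chooses $\Gamma_{1n}=\cD_1^{-1}$ so that the local alternative contributes exactly the $2^{-1/2}$ mean shift, with the cross term killed by a variance bound. For part (i), however, you take a genuinely different route from the paper, and it is the weak link. The paper does not de-randomize $\cM_1$: it proves a general CLT for quadratic forms $v'\fancya v$ (Theorem \ref{thm:general_clt}) in which the weight matrix $\fancya=B'\fancys B$ is allowed to be random, subject only to \ref{propG} and the entry-level bounds $s_{ij}=O_p(J/n)$ and $\sum_{i}s_{ij}^2=O_p(J/n)$ uniformly in $i,j$. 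Theorem \ref{theorem:sar} is then obtained as the i.i.d.-error special case of Theorem \ref{theorem:sarlp} by setting $J=J_1$ and $\fancys=n^{-1}\Ka\cV_1\Ka'$, and verifying those bounds from Assumptions \ref{ass:secondmomentsec2} and \ref{ass:eigsec2} exactly as in the proof of Theorem \ref{theorem:sarvc}; the rate $\dxi^3/n\to 0$ enters only through the condition $J^{-1}+J^3/n\to 0$ of the general CLT, not through any replacement remainder.

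The concrete gap in your plan is the claim that $\Ta$ and $\cD_1$ concentrate fast enough that $\cM_1$ can be swapped for a deterministic $\overline{\cM}_1$ with sandwich remainder $o_p(\sqrt{\da})$. A spectral-norm bound gives $\vert\epsilon'(\cM_1-\overline{\cM}_1)\epsilon\vert\le\Vert\cM_1-\overline{\cM}_1\Vert\,\Vert\epsilon\Vert^2=O_p\left(n\Vert\cM_1-\overline{\cM}_1\Vert\right)$, so you would need concentration at rate $o_p(\sqrt{\da}/n)$, which is far beyond what any law of large numbers for $n^{-1}L'\Ka$ delivers; exploiting the rank-$2\da$ structure of the difference improves this to needing $\Vert\cM_1-\overline{\cM}_1\Vert=o_p(\da^{-1/2})$ for the mean and still leaves the problem that the difference is correlated with $\epsilon$ through $y$ in $L$, so the trace/variance calculation is not available without conditioning. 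Moreover, the paper's assumptions (\ref{ass:secondmomentsec2}, \ref{ass:eigsec2}) do not posit the existence of a non-stochastic limit for $\Ta$, so the object $\overline{\cM}_1$ you want to center on is not even defined under the stated hypotheses. The fix is to abandon the de-randomization and instead verify, as the paper does, the entry-wise conditions of Theorem \ref{thm:general_clt} directly for the random $\fancys=n^{-1}\Ka\cV_1\Ka'$, namely $\vert s_{ij}\vert\le n^{-1}\Vert k_i\Vert\Vert k_j\Vert\Vert\cV_1\Vert=O_p(J_1/n)$ and the analogous row-sum-of-squares bound, which require only \ref{propG} and bounded second moments.
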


\section{SAR with error spatial dependence}\label{sec:spaterror}
Having demonstrated our methods with a baseline model, we now consider (\ref{sar_model}) but relax the $i.i.d$ assumption on $\epsilon_i$ to 
\begin{equation}\label{error_spatial_structure}
	\epsilon_i\equiv \epsilon_{in}=\sum_{j=1}^n b_{ij}v_j,
\end{equation}
where the $v_j$ are $i.i.d$ with mean 0 and variance 1, and $b_{ij}\equiv b_{ijn}$ can depend on $n$. This is the \citet{Kelejian2007} approach to error spatial dependence, and indeed we will employ the spatial heteroskedasticity and autocorrelation consistent (SHAC) method pioneered in that paper. This type of linear process assumption is widely employed in the literature, see e.g. \citet{Robinson2011}, \citet{Robinson2012c}, \citet{Hidalgo2017} and \cite{Conley2023}.
\subsection{Test statistic}

Let $\cD_2=\frac{1}{n}R\cU_1R'$ with $\cU_1= \Ta^{-1}L'\Ka(\Ka'\Ka)^{-1}\Xi_1(\Ka'\Ka)^{-1}\Ka'L\Ta^{-1}$ and $\widehat\cD_2=\frac{1}{n}R\widehat\cU_1R'$ with $\widehat\cU_1=\Ta^{-1}L'\Ka(\Ka'\Ka)^{-1}\hat\Xi_1(\Ka'\Ka)^{-1}\Ka'L\Ta^{-1}$, where $\hat\Xi_1$ is the \citet{Kelejian2007} SHAC estimate of $\Xi_1=\invn\Ka'\Sigma \Ka$, with $\Sigma=E(\epsilon\epsilon')$ being the error variance matrix. We still use the 2SLS estimator $\hat\xi$ and the test statistic is now
\begin{equation}\label{stat:sarlp}
	\mathbb{W}_2=\frac{n\hat\alpha'\widehat\cD_2^{-1}\hat\alpha-d_\alpha}{\sqrt{2d_\alpha}}.
\end{equation}
We introduce the following assumptions:

\begin{assumption}\label{ass:eigsec3}
	$\Sigma$ has \ref{propG}. 	
\end{assumption}

\begin{assumption}\label{ass:bsums}
	$\sup_{i\geq 1}\sum_{j=1}^n\left\vert b_{ij}\right\vert+\sup_{j\geq 1}\sum_{i=1}^n\left\vert b_{ij}\right\vert<\infty$.
\end{assumption}

\begin{assumption}\label{ass:kernel}
	$\fancyk(\cdot): \mathbb{R}\rightarrow [-1,1]$ is a kernel function that satisfies $\fancyk(0)=1$, $\fancyk(x)=\fancyk(-x)$, $\fancyk(x)=0$ for $\vert x \vert>1$ and $\vert \fancyk(x)-1 \vert\leq C\vert x \vert^\varrho, \vert x \vert\leq 1$, for some $\varrho\geq 1$.
\end{assumption}

\noindent Assumption \ref{ass:bsums} restricts the spatial dependence in the errors to a manageable degree, see e.g. \citet{Kelejian2007} and \citet{Delgado2015} for similar assumptions. Assumption \ref{ass:kernel} is a standard assumption on kernels in the HAC setting, see e.g. \citet{Kelejian2007}.

Now we introduce distance measures $d_{ij,m}=d_{ji,m},m=1,\ldots,M$. As in \citet{Kelejian2007}, we allow for
measurement errors and so,  in the following, let $d^*_{ij,m}=d^*_{ji,m}\geq 0$ be the actual distance measures used in practice.
Corresponding to each measure, assume that the researcher can select a distance $d_m>0$ satisfying $d_m\uparrow \infty$ as $n\rightarrow\infty$. For each unit $i=1,\ldots,n$, let $\ell_{i}=\sum_{j=1}^n \left(1-\prod_{m=1}^M\mathbf{1}(d^*_{ij,m}>d_m)\right)$ and set $\ell=\max_{i}\ell_i$. Observe that $\ell_i$ is the number of units $j$ for which $d^*_{ij,m} \leq d_m$ for at least  one $m=1,\ldots,M$. Also let $\sigma_{ij}$ be a typical element of $\Sigma$.
\begin{assumption}\label{ass:distances}
(a) $\E(\ell^2)=o\left(n^{2\eta}\right)$ where $\eta<\frac{1}{2}(q-2) /(q-1)$ with $q>4$ in Assumption \ref{ass:errorssec2}.
	(b) $\sum_{j=1}^n \left\vert \sigma_{ij}\right\vert d^{\chi}_{ij,1}<C$ for some $\chi\geq 1$. (c) $d^*_{ij,m}=d_{ij,m}+\nu_{ij,m}\geq 0$, with $\left\vert \nu_{ij,m}\right\vert<C$ and $\nu_{ij,m}$ independent of $v_i$ for all $m=1,\ldots,M$.	
\end{assumption}

\noindent Then, the $(r,s)$-th element of $\hat\Xi_1$ is
\begin{equation}\label{shac_definition}
	\hat\Xi_{1,rs}=n^{-1}\sum_{i=1}^n\sum_{j=1}^nk_{ri}k_{sj}\hat {u}_{i}\hat {u}_{j}\fancyk\left(\min_m \left\{d^*_{ij,m}/d_m\right\}\right), r,s=1,\ldots,J_1,	
\end{equation}
where $\hat{u}_i$ are elements of the estimation residual vector $\hat u=y-L\hat\xi$.
\subsection{Asymptotic properties}
In all subsequent lemmas/theorems, $q$ is defined in Assumption \ref{ass:errorssec2} and $\eta$ is defined in Assumption \ref{ass:distances}. In Lemma \ref{lemma:shac1} and Theorems \ref{theorem:W2appr}-\ref{theorem:sarlp}, $\dxi$ is defined as $\dxi=\dlambda+\dbeta+\da$. Furthermore, for any matrix $A$, let $\Vert A \Vert=\left\{\overline\alpha\left(A'A\right)\right\}^{\frac{1}{2}}$ i.e. the spectral norm of $A$.
\begin{lemma}\label{lemma:shac1}
Let Assumptions \ref{ass:approxerrorsec2}, \ref{ass:secondmomentsec2}, \ref{ass:eigsec2}, \ref{ass:eigsec3}-\ref{ass:distances} hold, and 
\begin{flalign}\label{rcV1}
 \frac{1}{\dlambda}+\frac{1}{\dbeta}+\frac{1}{\da}+n^{\eta-1}\dxi^2+n^{\eta(q-1)/q-(q-2)/2q}\dxi\rightarrow 0, \text{ as } n\rightarrow\infty.
\end{flalign}
Then, $\norm\hat{\Xi}_1-\Xi_1\norm=o_p(1)$.
\end{lemma}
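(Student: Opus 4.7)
The plan is to decompose
\[
\hat\Xi_1 - \Xi_1 = (\hat\Xi_1 - \tilde\Xi_1) + (\tilde\Xi_1 - \Xi_1),
\]
where the infeasible/oracle estimator $\tilde\Xi_1$ has $(r,s)$-entry $n^{-1}\sum_{i,j}k_{ri}k_{sj}\epsilon_i\epsilon_j\fancyk(\min_m\{d^*_{ij,m}/d_m\})$, i.e.\ (\ref{shac_definition}) with the true errors $\epsilon_i$ in place of the residuals $\hat u_i$. I would show each piece is $o_p(1)$ in spectral norm by working with arbitrary quadratic forms $x'(\cdot)x$ over unit vectors $x\in\mathbb{R}^{J_1}$ and repeatedly invoking Assumption \ref{ass:eigsec2} to convert $\sum_i(x'k_i)^2$ into $O_p(n)$ uniformly in $x$. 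Throughout, the sparsity of the kernel cutoff (at most $\ell$ nonzero $j$ indices per $i$ by construction) controls the effective number of summands.

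For $\tilde\Xi_1 - \Xi_1$, I would split into bias and fluctuation. The bias $\E(\tilde\Xi_1)-\Xi_1$ equals the contribution of pairs with $\min_m d^*_{ij,m}> d_m$ plus $\sigma_{ij}(\fancyk(\cdot)-1)$ inside the cutoff; Assumptions \ref{ass:kernel} and \ref{ass:distances}(b) reduce both to an $o(1)$ element-wise bound that transfers to the spectral norm via Assumption \ref{ass:secondmomentsec2}. The fluctuation $\tilde\Xi_1-\E(\tilde\Xi_1)$ is, by Assumption \ref{ass:bsums}, a centered quadratic form in the i.i.d.\ $v_j$; its second moment is controlled by combining Assumptions \ref{ass:errorssec2}, \ref{ass:secondmomentsec2} with the neighborhood count $\ell$ from Assumption \ref{ass:distances}(c). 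The tail/exponent interplay in Assumption \ref{ass:distances}(a) is then used in a truncation/Lyapunov style argument to turn $q$-th moment bounds into a per-entry variance rate, amplified across the $J_1\times J_1$ array to yield precisely the second summand $n^{\eta(q-1)/q-(q-2)/(2q)}\dxi$ of (\ref{rcV1}).

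For $\hat\Xi_1-\tilde\Xi_1$, I would expand $\hat u_i=\epsilon_i+r_i-l_i'(\hat\xi-\xi_0)$ so that the difference becomes a sum of cross-products in $r_i\epsilon_j$, $\epsilon_i l_j'(\hat\xi-\xi_0)$, $r_i l_j'(\hat\xi-\xi_0)$ and $(\hat\xi-\xi_0)'l_il_j'(\hat\xi-\xi_0)$ (plus symmetric counterparts) weighted by the kernel. The 2SLS rate $\Vert\hat\xi-\xi_0\Vert^2=O_p(\dxi/n)$ continues to hold in the dependent-error setting since $\E\Vert n^{-1/2}\Ka'\epsilon\Vert^2 = n^{-1}\tr(\Ka'\Sigma\Ka)=O(\dxi)$ by Assumptions \ref{ass:secondmomentsec2}, \ref{ass:eigsec3}, \ref{ass:bsums}, while Assumption \ref{ass:approxerrorsec2} dispatches the $r_i$ terms. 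Bounding the resulting quadratic forms by Cauchy--Schwarz, with the row-sums of the kernel-weighted array bounded by $\ell$, produces a contribution of order $\ell\dxi^2/n$, which under $\E(\ell^2)=o(n^{2\eta})$ matches the first summand $n^{\eta-1}\dxi^2$ in (\ref{rcV1}).

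The main obstacle will be performing all of these bounds uniformly over unit vectors $x$ as $J_1\to\infty$: element-wise control of $\hat\Xi_{1,rs}-\Xi_{1,rs}$ is insufficient, so the argument must carefully combine the spatial sparsity induced by the kernel truncation (summarised by $\ell$) with the spectral control on $\Ka$ and $L$. The two summands in (\ref{rcV1}) are precisely the two places where the growing dimension of the instrument space $\dxi$ confronts the spatial dependence strength, controlled through $\ell$ and hence through $\eta$.
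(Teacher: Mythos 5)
Your decomposition and overall architecture coincide with the paper's: the proof (given for the analogous Lemma \ref{lemma:shac3} and invoked for Lemma \ref{lemma:shac1}) writes $\hat\Xi_{1,rs}-\Xi_{1,rs}=a_{1,rs}+a_{2,rs}+a_{3,rs}$, where $a_{1,rs}$ is your $\hat\Xi_1-\tilde\Xi_1$ (residuals replaced by true errors), $a_{2,rs}$ is your oracle fluctuation, and $a_{3,rs}$ is your kernel bias; the latter two are dispatched by citing the arguments for $b_{rs,n}$ and $c_{rs,n}$ in Theorem 1 of Kelejian and Prucha (2007), and the first is handled exactly as you propose, via the expansion of $\hat u_i\hat u_j-\epsilon_i\epsilon_j$, the 2SLS rate $\Vert\hat\xi-\xi\Vert=O_p(\sqrt{\dxi/n})$ obtained from the trace bound on $n^{-1}\Ka'\Sigma\Ka$, and the neighborhood count $\ell$.

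One bookkeeping point in your plan is wrong and would send you looking in the wrong place: you attribute the second summand $n^{\eta(q-1)/q-(q-2)/(2q)}\dxi$ of (\ref{rcV1}) to a truncation/Lyapunov control of the oracle fluctuation $\tilde\Xi_1-\E(\tilde\Xi_1)$. In the paper \emph{both} summands of the rate condition arise inside your first block $\hat\Xi_1-\tilde\Xi_1$: the term $n^{\eta-1}\dxi^2$ comes from the quadratic piece $(\hat\xi-\xi)'l_il_j'(\hat\xi-\xi)$ (as you correctly say), while $n^{\eta(q-1)/q-(q-2)/(2q)}\dxi$ comes from the cross terms $\epsilon_i\, l_j'(\hat\xi-\xi)$ and $\epsilon_i r_j$, bounded by H\"older's inequality using the $q$-th moment of Assumption \ref{ass:errorssec2}, which yields $\ell^{1-1/q}\cdot O_p(\dxi/n^{1/2-1/q})$ and hence, via $\E(\ell^2)=o(n^{2\eta})$, exactly that exponent. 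The oracle fluctuation and bias terms are shown to be $o_p(1)$ by the Kelejian--Prucha argument and do not generate either displayed summand. You will need the H\"older step on the cross terms regardless, so make sure it is there; otherwise the $a_{14}$--$a_{17}$ analogues are uncontrolled. Your instinct that element-wise control must be amplified to the spectral norm across the $J_1\times J_1$ array is reasonable, and working with quadratic forms over unit vectors is a more careful route than the paper's element-wise bounds, but it is not where the stated rate condition originates.
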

\noindent Define $\cM_2=\frac{1}{n}B'\Ka\cV_2\Ka'B$, with $\cV_2=(\Ka'\Ka)^{-1}\Ka'L\Ta^{-1}R'\cD_{2}^{-1}R\Ta^{-1}L'\Ka(\Ka'\Ka)^{-1}$. We have the following theorem.
\begin{theorem}\label{theorem:W2appr}
Let (\ref{rcV1}) hold. Then, under $H_0$, Assumptions \ref{ass:approxerrorsec2}, \ref{ass:secondmomentsec2}, \ref{ass:eigsec2}, \ref{ass:eigsec3}-\ref{ass:distances}, as $n\rightarrow\infty$,
	\[
	\mathbb{W}_2-\frac{v'\mathcal{M}_2v-d_\alpha}{\sqrt{2d_\alpha}}=o_p(1).
	\]
\end{theorem}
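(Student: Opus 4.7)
The strategy is to express $n\hat\alpha'\widehat\cD_2^{-1}\hat\alpha$ as a quadratic form in $u = \epsilon + r$, isolate the pure-$\epsilon$ main piece, and reduce to $v'\cM_2 v$. Under $H_0$ the 2SLS identity $\hat\alpha = R(\hat\xi - \xi^{*}) = \tfrac{1}{n}RT_1^{-1}L'P_{K_1}u$ yields
\[
n\hat\alpha'\widehat\cD_2^{-1}\hat\alpha = \tfrac{1}{n}u'K_1\widetilde\cV_1 K_1'u,
\]
where $\widetilde\cV_1 := (K_1'K_1)^{-1}K_1'LT_1^{-1}R'\widehat\cD_2^{-1}RT_1^{-1}L'K_1(K_1'K_1)^{-1}$ is the analogue of $\cV_2$ with $\cD_2^{-1}$ replaced by $\widehat\cD_2^{-1}$. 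Since $\epsilon = Bv$ gives $v'\cM_2 v = \tfrac{1}{n}\epsilon'K_1\cV_2 K_1'\epsilon$, expanding $u = \epsilon + r$ splits the target discrepancy into the main-term correction $\tfrac{1}{n}\epsilon'K_1(\widetilde\cV_1 - \cV_2)K_1'\epsilon$ together with cross and quadratic $r$-terms.

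The $r$-terms I would handle via Assumption \ref{ass:approxerrorsec2}: $\E\|r\|^2 \leq n\sup_i \E r_i^2 = o(1)$ gives $\|r\| = o_p(1)$, and \ref{propG} from Assumption \ref{ass:eigsec2} yields $\|K_1'r\|^2 = O_p(n)\|r\|^2 = o_p(n)$. Writing $\widetilde\cV_1 = \Pi'\widehat\cD_2^{-1}\Pi$ with $\Pi = RT_1^{-1}L'K_1(K_1'K_1)^{-1}$ of bounded operator norm (via $M := (K_1'K_1)^{-1/2}K_1'L$, $M'M = nT_1$), and using the benchmark $\E\|K_1'\epsilon\|^2 = \tr(K_1'\Sigma K_1) = O(nd_\xi)$ from Assumptions \ref{ass:secondmomentsec2} and \ref{ass:eigsec3}, the Cauchy--Schwarz inequality together with the rate condition \eqref{rcV1} would render both $r$-terms $o_p(\sqrt{d_\alpha})$.

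The decisive step is the SHAC replacement in the main term. Setting $\hat\alpha_\epsilon := \tfrac{1}{n}RT_1^{-1}L'P_{K_1}\epsilon$, the main-term correction equals
\[
n\hat\alpha_\epsilon'(\widehat\cD_2^{-1} - \cD_2^{-1})\hat\alpha_\epsilon = -n\hat\alpha_\epsilon'\widehat\cD_2^{-1}(\widehat\cD_2 - \cD_2)\cD_2^{-1}\hat\alpha_\epsilon,
\]
while the projection identity
\[
\widehat\cD_2 - \cD_2 = \tfrac{1}{n}RT_1^{-1}L'K_1(K_1'K_1)^{-1}(\widehat\Xi_1 - \Xi_1)(K_1'K_1)^{-1}K_1'LT_1^{-1}R',
\]
Lemma \ref{lemma:shac1}, and \ref{propG} together give $\|\widehat\cD_2 - \cD_2\| = o_p(\|\cD_2\|)$, so $\widehat\cD_2^{-1}\cD_2 = I + o_p(1)$ in operator norm. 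The main obstacle is sharpening this into an $o_p(\sqrt{d_\alpha})$ bound on the bilinear form displayed above: a straight submultiplicative estimate is too loose by a factor of $\sqrt{d_\alpha}$. The remedy is to view $\widehat\cD_2^{-1} - \cD_2^{-1}$ as a one-sided multiplicative $(1+o_p(1))$ perturbation of the unfeasible Wald form $n\hat\alpha_\epsilon'\cD_2^{-1}\hat\alpha_\epsilon$, which concentrates around $d_\alpha$ with only $O_p(\sqrt{d_\alpha})$ fluctuation in the manner of a $\chi^2_{d_\alpha}$ statistic, so that the perturbation acts on the centered fluctuation rather than on the mean and contributes $o_p(\sqrt{d_\alpha})$, as required.
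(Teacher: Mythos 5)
Your decomposition is the paper's own: under $H_0$ the paper also writes $n\hat\alpha'\widehat\cD_2^{-1}\hat\alpha = v'\widehat\cM_2 v + A_{11}+A_{12}+2A_{21}+2A_{22}$, with the $A_{ij}$ collecting exactly the cross and quadratic approximation-error terms you describe, and your treatment of those terms via Assumption \ref{ass:approxerrorsec2}, \ref{propG} and Cauchy--Schwarz is in line with how the paper disposes of them. The gap is in your final, decisive step.

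The claim that a one-sided multiplicative $(1+o_p(1))$ perturbation of the unfeasible Wald form ``acts on the centered fluctuation rather than on the mean'' is not correct. Writing $\hat\alpha_\epsilon=\tfrac1n R\Ta^{-1}L'\PKa\epsilon$ and taking the scalar prototype $\widehat\cD_2^{-1}=(1+\varepsilon_n)\cD_2^{-1}$ with $\varepsilon_n=o_p(1)$, the correction is
\[
n\hat\alpha_\epsilon'\bigl(\widehat\cD_2^{-1}-\cD_2^{-1}\bigr)\hat\alpha_\epsilon=\varepsilon_n\, n\hat\alpha_\epsilon'\cD_2^{-1}\hat\alpha_\epsilon=\varepsilon_n\bigl(\da+O_p(\sqrt{\da})\bigr),
\]
whose leading term $\varepsilon_n\da$ survives division by $\sqrt{2\da}$ as $\varepsilon_n\sqrt{\da/2}$; there is no cancellation of the mean, and the same conclusion holds for a general matrix perturbation since $\mathrm{tr}\{\cD_2^{-1}(\widehat\cD_2-\cD_2)\cD_2^{-1}\E(n\hat\alpha_\epsilon\hat\alpha_\epsilon')\}$ is a trace over $\da$ eigenvalues each merely $o_p(1)$. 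So mere consistency of $\hat\Xi_1$ --- which is all Lemma \ref{lemma:shac1} asserts --- cannot close the argument; you need $\Vert\widehat\cD_2-\cD_2\Vert$ to vanish fast enough relative to $\sqrt{\da}$. This is exactly what the paper supplies: in the proof of Theorem \ref{theorem:W4appr} (to which the proof of Theorem \ref{theorem:W2appr} defers) it does not use the $o_p(1)$ conclusion of the SHAC lemma but the explicit rate $r_n=\ell\dgamma^2/n+\ell^{1-1/q}\dgamma/n^{1/2-1/q}$ extracted from its proof, bounds $\Vert\widehat\cD-\cD\Vert$ and then $v'(\widehat\cM-\cM)v$ by $Cr_n$ times explicit $O_p$ factors in the sieve dimension, and invokes the rate condition --- in particular the $n^{\eta-1}\dxi^2$ and $n^{\eta(q-1)/q-(q-2)/2q}\dxi$ terms of \eqref{rcV1} --- to kill the product after normalization by $\sqrt{2\da}$. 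To repair your proof, replace the qualitative $(1+o_p(1))$ argument by this quantitative bound on the SHAC error.
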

\noindent To derive the asymptotic distribution of $\mathbb{W}_2$, we need the following assumption, as also in \citet{Delgado2015}. 
\begin{assumption}\label{ass:errorssec3}
	$\E\left\vert v_i\right\vert^s<C$, for some $s\geq 8$. 
\end{assumption}
\noindent Define $H_{\ell 2}\equiv H_{\ell 2,n}: \alpha=\alpha^{*}\equiv\nu_{2n}\da^{\frac{1}{4}}/(n\nu_{2n}'\Gamma_{2n}\nu_{2n})^{\frac{1}{2}}$, with $\nu_{2n}$ a $\da\times 1$ non-zero vector and $\Gamma_{2n}$ a $\da\times\da$ matrix defined in the proof of the next theorem.

\begin{theorem}\label{theorem:sarlp} 
Under Assumptions \ref{ass:approxerrorsec2}, \ref{ass:secondmomentsec2}-\ref{ass:errorssec3}, with
\begin{flalign}\label{rc3.2}
\frac{1}{\dlambda}+\frac{1}{\dbeta}+\frac{1}{\da}+\frac{\dxi^3}{n}+n^{\eta-1}\dxi^2+ \ n^{\eta(q-1)/q-(q-2)/2q}\dxi\rightarrow 0,
\end{flalign}
as $n\rightarrow\infty$, the following hold:
	(i) Under $H_0$,  $\mathbb{W}_2\overset{d}{\rightarrow}N(0,1)$.
	(ii) $\mathbb{W}_2$ provides a consistent test. 
	(iii) Under the sequence of
	local alternatives $H_{\ell 2}$, $\mathbb{W}_2\overset{d}{\rightarrow}N(2^{-1/2}, 1)$.	
\end{theorem}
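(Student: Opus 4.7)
The plan is to deduce Theorem \ref{theorem:sarlp} from the quadratic-form approximation of Theorem \ref{theorem:W2appr} via a central limit theorem for quadratic forms in $i.i.d.$ random variables, then to handle fixed and local alternatives by separating a deterministic drift from the null-case fluctuation. For part (i), Theorem \ref{theorem:W2appr} reduces the problem to showing
\[
\frac{v'\cM_2 v-\da}{\sqrt{2\da}}\overset{d}{\to}N(0,1)
\]
under $H_0$. Because the $v_i$ are $i.i.d.$ with mean zero, unit variance, and $\E\vert v_i\vert^s<C$ for some $s\geq 8$ by Assumption \ref{ass:errorssec3}, a CLT for linear-quadratic forms of Kelejian--Prucha type is the natural device.

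Three ingredients have to be verified before that CLT can be applied, and this is the main technical obstacle. First, $\mathrm{tr}(\cM_2)=\da$: using $BB'=\Sigma$, $\Ka'\Sigma \Ka=n\Xi_1$, the identity $L'\Ka(\Ka'\Ka)^{-1}\Xi_1(\Ka'\Ka)^{-1}\Ka'L=\Ta\cU_1\Ta$, and $R\cU_1R'=n\cD_2$, all read off from the definitions, the trace of $\cM_2$ collapses to $\mathrm{tr}(I_{\da})=\da$. Second, $\mathrm{Var}(v'\cM_2 v)=2\,\mathrm{tr}(\cM_2^2)+(\E v_1^4-3)\sum_i(\cM_2)_{ii}^2=2\da(1+o(1))$, which requires $\mathrm{tr}(\cM_2^2)=\da(1+o(1))$ and $\sum_i(\cM_2)_{ii}^2=o(\da)$. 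Whereas the analogous $\cM_1$ in Theorem \ref{theorem:sar} is exactly idempotent, the factor $B$ in $\cM_2$ breaks this, and the discrepancy has to be controlled using Assumption \ref{ass:bsums} (bounded row and column sums of $B$) together with Property G on $n^{-1}\Ka'\Ka$, on $\cD_2$, and on the matrices appearing in $\cV_2$. Third, a Lindeberg-type bound $\max_i\sum_j(\cM_2)_{ij}^2=o(\da)$, again proved from Assumptions \ref{ass:secondmomentsec2} and \ref{ass:bsums} together with Property G.

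For part (ii), under a fixed alternative $\delta(\cdot)\not\equiv0$ the 2SLS estimator $\hat\alpha$ converges in probability to a non-negligible limit (up to approximation error), so $n\hat\alpha'\widehat\cD_2^{-1}\hat\alpha$ grows at rate $n$; dividing by $\sqrt{2\da}$ and invoking the rate condition (\ref{rc3.2}) gives $\mathbb{W}_2\overset{p}{\to}+\infty$, which yields consistency. For part (iii), I would decompose $\hat\alpha=\alpha^*+\tilde\alpha$, where $\tilde\alpha$ obeys under $H_{\ell 2}$ the same asymptotic behavior that $\hat\alpha$ does under $H_0$, and expand
\[
n\hat\alpha'\widehat\cD_2^{-1}\hat\alpha=n(\alpha^*)'\widehat\cD_2^{-1}\alpha^*+2n(\alpha^*)'\widehat\cD_2^{-1}\tilde\alpha+n\tilde\alpha'\widehat\cD_2^{-1}\tilde\alpha.
\]
With $\Gamma_{2n}$ chosen as the norming matrix associated with $\widehat\cD_2^{-1}$, the first term equals $\da^{1/2}(1+o_p(1))$, contributing the drift $\da^{1/2}/\sqrt{2\da}=2^{-1/2}$ to $\mathbb{W}_2$. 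The cross term is a linear form in the mean-zero noise $\tilde\alpha$ whose variance is controlled by the magnitude of the drift, and its contribution to $\mathbb{W}_2$ is $o_p(1)$ after normalization by $\sqrt{2\da}$. The quadratic-in-$\tilde\alpha$ term replicates the analysis of part (i) and supplies the $N(0,1)$ fluctuation, yielding the desired $N(2^{-1/2},1)$ limit.
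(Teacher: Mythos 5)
Your overall architecture coincides with the paper's: reduce $\mathbb{W}_2$ to the quadratic form $(v'\cM_2v-\da)/\sqrt{2\da}$ via Theorem \ref{theorem:W2appr}, prove a CLT for that form, and obtain (ii) and (iii) from the decomposition of $n\hat\alpha'\widehat\cD_2^{-1}\hat\alpha$ into a deterministic drift, a cross term and a null-type fluctuation, with $\Gamma_{2n}$ chosen as the inverse variance $\cD_2^{-1}$; your sketches of (ii) and (iii) match what the paper does for the analogous statistic in the proof of Theorem \ref{theorem:sarvc}. The genuine problem is in your choice of CLT for part (i). A Kelejian--Prucha linear-quadratic CLT normalizes by the exact standard deviation but requires $n^{-1}\mathrm{Var}(v'\cM_2v)$ to be bounded away from zero; here $\mathrm{Var}(v'\cM_2v)\asymp 2\da$ and (\ref{rc3.2}) forces $\da=o(n^{1/3})$, so that condition fails and the step ``apply a KP-type CLT'' breaks down. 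The form $v'\cM_2v$ is a degenerate quadratic form of growing rank, and the correct device is a de Jong-type or martingale CLT. That is exactly what the paper supplies: its Theorem \ref{thm:general_clt}, proved by checking the Lyapunov and conditional-variance conditions of Scott's martingale CLT, is applied with $J=J_1$, $\fancya=\cM_2$ and $\fancys=\invn\Ka\cV_2\Ka'$, and the hypotheses to verify are not moment conditions on $\cM_2$ directly but the structural ones of Assumption \ref{ass:clt_general} --- $s_{ij}=O_p(J/n)$ and $\sum_is_{ij}^2=O_p(J/n)$ for the entries of $\fancys$ (obtained from $\vert s_{ij}\vert\leq n^{-1}\Vert k_i\Vert\,\Vert\cV_2\Vert\,\Vert k_j\Vert$ and Assumptions \ref{ass:secondmomentsec2} and \ref{ass:eigsec2}) together with the bounded row and column sums of $B$ from Assumption \ref{ass:bsums}. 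Your three listed ingredients (trace, variance, Lindeberg-type maximum) are the right ones for a de Jong-style argument, so your program is salvageable, but the cited device must be replaced.

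Your second worry --- that ``the factor $B$ in $\cM_2$ breaks'' idempotency --- is unfounded, and recognizing this simplifies your own plan. The same identities you invoke for the trace, namely $\Ka'BB'\Ka=\Ka'\Sigma\Ka=n\Xi_1$ and $R\Ta^{-1}L'\Ka(\Ka'\Ka)^{-1}\Xi_1(\Ka'\Ka)^{-1}\Ka'L\Ta^{-1}R'=R\cU_1R'\propto\cD_2$, give $\cV_2\Xi_1\cV_2\propto\cV_2$ and hence $\cM_2^2\propto\cM_2$: the SHAC-consistent choice of $\cD_2$ is built precisely so that $\cM_2$ is (a scalar multiple of) a rank-$\da$ projection, which is why the paper's Theorem \ref{thm:general_clt} may take $\fancya$ idempotent. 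Consequently $\mathrm{tr}(\cM_2^2)=\mathrm{tr}(\cM_2)=\da$ holds exactly rather than ``up to a discrepancy to be controlled,'' and the only delicate piece of the variance is the fourth-cumulant term $\sum_i(\cM_2)_{ii}^2$, which is $o(\da)$ by the entrywise bound on $\fancys$ and Assumption \ref{ass:bsums}.
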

\noindent Observe that the condition $n^{\eta-1}\dxi^2\rightarrow 0$ implies ${\dxi^3}/{n}\rightarrow 0$ if $\eta\geq 1/3$, so depending on the value of $q$, the rate condition (\ref{rc3.2}) can be simplified.
\section{Misspecification robustness}\label{sec:sarlpnp}
Consider a degree of misspecification robustness in (\ref{sar_model}), where $\epsilon_i$ has the structure in Section \ref{sec:spaterror}. Focusing on the case $d_\lambda=1$ for notational simplicity, and writing $W_1=W$, (\ref{sar_model}) imposed the parametric form $\lambda Wy$ and took $W$ as known in this ``spatial lag'' term. We now allow this to take the nonparametric form $Gy$, where $G$ has elements $g\left(d_{rs}\right)$, $r,s=1,\ldots,n$, for some unknown function $g(\cdot)$, and a vector of exogenous (independent of $v_j$, $j=1,\ldots,n$) economic distance measures $d_{rs}$. This idea was introduced by \citet{pinkse2002}, and has been much used since (see e.g. \citealp{Sun2016} and \citealp{Gupta2024}).

\subsection{Test statistic}
Following \cite{pinkse2002}, we approximate $g\left(d_{ij}\right)$ of $G$ with a series of basis functions
\begin{equation}\label{g_series}
	g(d_{ij})=\sum_{l=1}^{\infty}\tau_le_l(d_{ij}),
\end{equation}
\newcommand{\fC}{\mathfrak{C}}
where $\tau_l$ are unknown coefficients, and $e_{l}$ form a basis of the function space to which $g(\cdot)$ belongs. Let $c_i'$ be the $i$-th row of the $n \times \dtau$ matrix $\fC$ with typical $(i,l)$-th element $\sum_{j\neq i}e_l\left(d_{ij}\right)y_j$ and $\tau=\left(\tau_1,\ldots,\tau_{\dtau}\right)'$, $\dtau\rightarrow\infty$ as $n\rightarrow\infty$. Then (\ref{sar_model})  is extended to
\begin{equation}\label{sarlpnp_model}
	y_{in}=g_{in}'y+x_{in}'\beta+p_{in}'\delta\left(z_{in}\right)+\epsilon_{in},\irangen,
\end{equation} 
and (\ref{sarlp_model_approx}) to
\begin{equation}\label{sarlpnp_model_approx}
	y_{i}=c_i'\tau+x_{i}'\beta+\psi_{i}'\alpha+u_i,\irangen,
\end{equation}
with
$u_i=r_{iG}+r_{i\delta}+\epsilon_i$, where $r_{iG}=\sum_{l=\dtau+1}^{\infty}\tau_l\sum_{j\neq i}e_l\left(d_{ij}\right)y_j$ and  $r_{i\delta}=p_{i}'\delta\left(z_{i}\right)-\psi_{i}'\alpha$. In the matrix notation, we now have
\begin{equation}\label{sarlpnp_model_approx_matrix}
	y=\fC\tau+X\beta+\Psi\alpha+u=F\theta+u,
\end{equation} 
where $F=[\fC, X,\Psi]$ and $\theta=\left(\tau',\beta',\alpha'\right)'$.
Our test is based on the 2SLS estimator
\begin{equation}\label{2sls_sarlpnp}
	\hat\theta=\left(F'{\PKb}F\right)^{-1}F'{\PKb}y,	
\end{equation} 
where $\PKb=\Kb\left(\Kb'\Kb\right)^{-1}\Kb'$, $\Kb$ being an $n\times J_2$ instrument matrix with $J_2\geq \dtheta=\dtau+\dbeta+\da$, but with $J_2$ and $\dtheta$ having the same asymptotic order.  \par

Let $\cD_3=\frac{1}{n}R\cU_2R'$ with $\cU_2=T_{2}^{-1}F'\Kb(\Kb'\Kb)^{-1}\Xi_2(\Kb'\Kb)^{-1}\Kb'FT_{2}^{-1}$ and $T_2=\invn F'{\PKb}F$, and let $\widehat\cD_3=\frac{1}{n}R\widehat\cU_2R'$ with $\widehat\cU_2= T_{2}^{-1}F'\Kb(\Kb'\Kb)^{-1}\hat\Xi_2(\Kb'\Kb)^{-1}\Kb'FT_{2}^{-1}$, where $R=\left[0_{d_\alpha\times\left(d_\tau+d_\beta\right)}, I_{d_\alpha}\right]$ and $\hat\Xi_2$ is the \citet{Kelejian2007} estimate of $\Xi_2=\invn\Kb'\Sigma \Kb$, with $\Sigma=E(\epsilon\epsilon')$. Then, the test statistic is
\begin{equation}\label{stat:sarlpnp}
	\mathbb{W}_3=\frac{n\hat\alpha'\widehat\cD_{3}^{-1}\hat\alpha-d_\alpha}{\sqrt{2d_\alpha}}.
\end{equation}
Denote the elements of $F$ as $f_{ri}$. We have following assumptions.
\begin{assumption}\label{ass:secondmomentsec4}
	$\E\left(f^2_{ri}\right)<C$ and	$\E\left(k_{ri}^2\right)<C$.
\end{assumption}

\begin{assumption}\label{ass:approxerrorsec4}
	$\sup_{i\geq 1}\E\left(r_{iG}^2\right)=o(n^{-1})$ and $\sup_{i\geq 1}\E\left(r_{i\delta}^2\right)=o(n^{-1})$.
\end{assumption}

\begin{assumption}\label{ass:eigsec4}
	$n^{-1}\Kb'\Kb$ and $ n^{-1}F'\Kb\Kb'F$ have \ref{propG}.
\end{assumption}
\begin{lemma}\label{lemma:shac2}
Let Assumptions \ref{ass:eigsec3}-\ref{ass:distances}, \ref{ass:secondmomentsec4}-\ref{ass:eigsec4} hold, and 
\begin{flalign}\label{rcV2}
 \frac{1}{\dtau}+\frac{1}{\dbeta}+\frac{1}{\da}+n^{\eta-1}\dtheta^2+\ n^{\eta(q-1)/q-(q-2)/2q}\dtheta\rightarrow 0, \text{ as } n\rightarrow\infty.
\end{flalign}
Then, $\norm\hat{\Xi}_2-\Xi_2\norm=o_p(1)$.
\end{lemma}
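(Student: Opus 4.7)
The plan is to follow the strategy of Lemma \ref{lemma:shac1} with modifications accounting for (i) the switch from the parametric spatial lag $Q\lambda$ to the nonparametric $\fC\tau$, which introduces an extra approximation error $r_{iG}$ alongside $r_{i\delta}$, and (ii) the systematic replacement of $L,\Ka,\hat\xi,\dxi$ by $F,\Kb,\hat\theta,\dtheta$. First I would use the identity $\hat{u}_i = \epsilon_i + r_{iG} + r_{i\delta} - f_i'(\hat\theta - \theta)$ to expand
\begin{equation*}
\hat{u}_i\hat{u}_j - \epsilon_i\epsilon_j = (\hat{u}_i - \epsilon_i)\epsilon_j + \epsilon_i(\hat{u}_j - \epsilon_j) + (\hat{u}_i - \epsilon_i)(\hat{u}_j - \epsilon_j),
\end{equation*}
and split $\hat{\Xi}_2 - \Xi_2 = (\hat{\Xi}_2 - \tilde{\Xi}_2) + (\tilde{\Xi}_2 - \Xi_2)$, where $\tilde{\Xi}_2$ is the infeasible SHAC matrix built from the true $\epsilon_i\epsilon_j$.

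For the infeasible piece $\tilde{\Xi}_2 - \Xi_2$, I would mimic \citet{Kelejian2007}: the kernel-bias contribution is controlled via Assumptions \ref{ass:kernel} and \ref{ass:distances}(b) (summability of $|\sigma_{ij}|d_{ij,1}^\chi$), while the variance contribution is bounded using Assumptions \ref{ass:bsums}, \ref{ass:distances}(a), and the Lyapunov-type control afforded by the moment bound on $v_i$ (via the induced bound on $\epsilon_i$ from Assumption \ref{ass:errorssec2}). The product $n^{\eta(q-1)/q-(q-2)/(2q)}\dtheta$ in (\ref{rcV2}) absorbs the cost of controlling $J_2^2 \asymp \dtheta^2$ spectral-norm entries simultaneously via the bound $\Vert A\Vert\le\{\sum_{r,s}A_{rs}^2\}^{1/2}$.

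For the residual-based piece $\hat{\Xi}_2 - \tilde{\Xi}_2$, each cross and quadratic term is attacked by Cauchy--Schwarz, leaving expressions such as $n^{-1}\sum_{i,j}|k_{ri}k_{sj}||\fancyk(\cdot)||r_{iG}||\epsilon_j|$, $n^{-1}\sum_{i,j}|k_{ri}k_{sj}||\fancyk(\cdot)||f_i'(\hat\theta-\theta)||\epsilon_j|$, and their quadratic analogues. These are controlled using Assumption \ref{ass:approxerrorsec4} on $\sup_i\E(r_{iG}^2)$ and $\sup_i\E(r_{i\delta}^2)$, Assumption \ref{ass:secondmomentsec4} on $\E(f_{ri}^2)$ and $\E(k_{ri}^2)$, the effective-neighborhood count $\ell$ (to constrain the non-negligible entries of the double sum), and the 2SLS rate $\Vert\hat\theta - \theta\Vert = \Op(\sqrt{\dtheta/n})$ implied by Assumption \ref{ass:eigsec4} together with (\ref{rcV2}).

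The main obstacle is the simultaneous rate bookkeeping across three growing objects: the spectral dimension $J_2\asymp\dtheta$, the effective spatial-neighborhood size $\ell$ with $\E(\ell^2)=o(n^{2\eta})$, and the compound approximation error $r_{iG}+r_{i\delta}$. The rate condition $n^{\eta-1}\dtheta^2 \to 0$ in (\ref{rcV2}) is precisely what renders the residual-variance contributions negligible after accounting for these three sources of dimensionality inflation, while the additional $\eta$-dependent term dominates whenever $\eta$ is close to $(q-2)/(2(q-1))$; jointly these conditions mirror (\ref{rcV1}) with $\dxi$ replaced by $\dtheta$, so no qualitatively new asymptotic argument is required beyond bookkeeping the extra $r_{iG}$ terms.
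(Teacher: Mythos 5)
Your proposal matches the paper's argument: the paper proves this lemma by deferring to the proof of Lemma \ref{lemma:shac3}, which uses exactly your decomposition --- a three-way split of $\hat\Xi-\Xi$ into a residual-driven term (your $\hat\Xi_2-\tilde\Xi_2$, handled via the expansion of $\hat u_i\hat u_j-\epsilon_i\epsilon_j$, the 2SLS rate $\Vert\hat\theta-\theta\Vert=O_p(\sqrt{\dtheta/n})$, the neighborhood count $\ell$ and the H\"older inequality) plus the two infeasible pieces $\epsilon_i\epsilon_j-\sigma_{ij}$ and $\sigma_{ij}(\fancyk(\cdot)-1)$ handled by the \citet{Kelejian2007} variance and kernel-bias arguments. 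The only minor divergence is bookkeeping: in the paper the $\dtheta^2$ and $\dtheta$ factors in (\ref{rcV2}) arise already in the uniform entrywise bounds (from $\Vert f_i\Vert=O_p(\sqrt{\dtheta})$ and the 2SLS rate), not from summing $J_2^2$ entries for a Frobenius-norm bound as you suggest, but this does not affect the validity of the approach.
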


\subsection{Asymptotic properties}
Define $\cM_3=\frac{1}{n}B'\Kb\cV_3\Kb'B$, with $\cV_3=(\Kb'\Kb)^{-1}\Kb'F\Tb^{-1}R'\cD_{3}^{-1}R\Tb^{-1}F'\Kb(\Kb'\Kb)^{-1}$. Then, we have the following theorem.
\begin{theorem}\label{theorem:W3appr}
Let (\ref{rcV2}) hold. Then, under $H_0$, Assumptions \ref{ass:eigsec3}-\ref{ass:distances},  \ref{ass:secondmomentsec4}-\ref{ass:eigsec4}, as $n\rightarrow\infty$, 
	\[
	\mathbb{W}_3-\frac{v'\mathcal{M}_3v-d_\alpha}{\sqrt{2d_\alpha}}=o_p(1).
	\]
\end{theorem}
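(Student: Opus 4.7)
\bigskip
\noindent\emph{Proof plan for Theorem \ref{theorem:W3appr}.} The argument mirrors that of Theorem \ref{theorem:W2appr}, with the additional bookkeeping needed to absorb the series truncation error $r_G$ introduced by the nonparametric approximation \eqref{g_series} of the spatial weights. The plan is: (i) reduce $\hat\alpha$ to a linear functional of $u$ under $H_0$; (ii) split $u = \epsilon + r_G + r_\delta$ and show only the $\epsilon$ piece contributes; (iii) replace $\widehat{\cD}_3$ by $\cD_3$ using Lemma \ref{lemma:shac2}; and (iv) identify the resulting quadratic form as $v'\mathcal{M}_3 v$.

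Concretely, under $H_0$ we have $\alpha=0$, so from \eqref{sarlpnp_model_approx_matrix}, $y = F\theta_0 + u$ with $R\theta_0 = 0$ and $u = r_G + r_\delta + \epsilon$, $\epsilon = Bv$. Formula \eqref{2sls_sarlpnp} then gives $\hat\alpha = R\hat\theta = n^{-1}R T_2^{-1}F'P_{\Kb} u$, whence
\[
n\hat\alpha'\widehat{\cD}_3^{-1}\hat\alpha = \frac{1}{n}\, u'P_{\Kb} F T_2^{-1}R'\widehat{\cD}_3^{-1} R T_2^{-1} F'P_{\Kb}\, u.
\]
Substituting $u = \epsilon + r_G + r_\delta$ and expanding yields one pure-$\epsilon$ quadratic form plus cross and quadratic terms involving $r_G, r_\delta$. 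Assumption \ref{ass:approxerrorsec4} gives $\sup_i \E r_{iG}^2 = o(n^{-1})$ and $\sup_i \E r_{i\delta}^2 = o(n^{-1})$; combining with Assumptions \ref{ass:secondmomentsec4}, \ref{ass:eigsec4}, Property \ref{propG} of $n^{-1}\Kb'\Kb$ and $n^{-1}F'\Kb\Kb'F$, and Cauchy–Schwarz, I would show each such remainder term is $o_p(\sqrt{d_\alpha})$, so that after division by $\sqrt{2d_\alpha}$ they vanish.

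To replace $\widehat{\cD}_3$ by $\cD_3$, I would use Lemma \ref{lemma:shac2} to get $\Vert\hat\Xi_2 - \Xi_2\Vert = o_p(1)$, then propagate this through the definitions of $\widehat{\cU}_2, \widehat{\cD}_3$ using the eigenvalue bounds of Assumption \ref{ass:eigsec4}, Assumption \ref{ass:eigsec3} on $\Sigma$, and the implied Property \ref{propG} of $\cD_3$, to obtain $\Vert\widehat{\cD}_3^{-1} - \cD_3^{-1}\Vert = o_p(1)$ uniformly at a rate that dominates the $\sqrt{d_\alpha}$ scaling. Once both substitutions are made, direct algebra reduces the leading term to $n^{-1} v'B'\Kb\, \cV_3\, \Kb'B v = v'\mathcal{M}_3 v$, giving the claim.

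The main obstacle will be handling the $r_G$ contribution, because, unlike the purely deterministic $r_\delta$, $r_G = \sum_{l > \dtau}\tau_l \sum_{j\neq i} e_l(d_{ij}) y_j$ depends on $y$ and hence on $v$ through the reduced form of \eqref{sarlpnp_model}. The cross-terms $\epsilon' P_{\Kb} F T_2^{-1}R'\widehat{\cD}_3^{-1} R T_2^{-1} F'P_{\Kb} r_G$ must therefore be bounded by exploiting the uniform row/column-sum control on the reduced-form inverse (an analogue of Assumption \ref{ass:Sinv1} in this nonparametric setting), together with the rate condition \eqref{rcV2} on $\dtheta$, to guarantee $o_p(\sqrt{d_\alpha})$. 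The rest of the argument is routine linear-algebraic manipulation following the Theorem \ref{theorem:W2appr} template.
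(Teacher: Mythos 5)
Your proposal is correct and follows essentially the same route as the paper: under $H_0$ write $\hat\alpha=R(F'\PKb F)^{-1}F'\PKb(r_G+r_\delta+Bv)$, decompose $n\hat\alpha'\widehat\cD_{3}^{-1}\hat\alpha$ into $v'\widehat{\mathcal{M}}_3v$ plus cross and quadratic remainders in $r_G+r_\delta$, use Lemma \ref{lemma:shac2} with Assumptions \ref{ass:eigsec3} and \ref{ass:eigsec4} to pass from $\widehat\cD_3,\widehat{\mathcal{M}}_3$ to $\cD_3,\mathcal{M}_3$, and eliminate the remainders via Assumption \ref{ass:approxerrorsec4} and Cauchy--Schwarz. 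Your concern about $r_G$ depending on $v$ through the reduced form is legitimate but is already absorbed by the high-level condition $\sup_{i}\E(r_{iG}^2)=o(n^{-1})$ in Assumption \ref{ass:approxerrorsec4}, which gives $\Vert r_G\Vert=o_p(1)$ so the cross term is controlled purely in norm (no conditional-mean-zero argument and no additional Assumption \ref{ass:Sinv2}-type reduced-form bound is needed for this theorem).
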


\noindent Define $H_{\ell 3}\equiv H_{\ell 3,n}: \alpha=\alpha^{*}\equiv\nu_{3n}\da^{\frac{1}{4}}/(n\nu_{3n}'\Gamma_{3n}\nu_{3n})^{\frac{1}{2}}$, with $\nu_{3n}$ a $\da\times 1$ non-zero vector and $\Gamma_{3n}$ a $\da\times\da$ matrix defined in the proofs. The following assumption generalizes Assumption \ref{ass:Sinv1}.
\begin{assumption}\label{ass:Sinv2}
	$\left(I_n-G\right)^{-1}$ exists and is uniformly bounded in row and column sums for all sufficiently large $n$, almost surely and uniformly over the support of the distance measure vector.  
\end{assumption}
\begin{theorem}\label{theorem:sarlpnp} 
Under Assumptions \ref{ass:eigsec3}-\ref{ass:Sinv2}, with
\begin{flalign}\label{rc4.2}
 \frac{1}{\dtau}+\frac{1}{\dbeta}+\frac{1}{\da}+\frac{\dtheta^3}{n}+n^{\eta-1}\dtheta^2+n^{\eta(q-1)/q-(q-2)/2q}\dtheta\rightarrow 0,
\end{flalign}
as $n\rightarrow\infty$, the following hold:
	(1) Under $H_0$, $\mathbb{W}_3\overset{d}{\rightarrow}N(0,1)$.
	(2) $\mathbb{W}_3$ provides a consistent test. 
	(3) Under the sequence of
	local alternatives $H_{\ell 3}$, $\mathbb{W}_3\overset{d}{\rightarrow}N\left(2^{-1/2}, 1\right)$.	
\end{theorem}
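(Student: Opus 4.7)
My plan is to mirror the structure of the proof of Theorem~\ref{theorem:sarlp}, exploiting the reduction already delivered by Theorem~\ref{theorem:W3appr}. Under $H_0$ and rate condition~(\ref{rcV2}), that theorem identifies $\mathbb{W}_3$ asymptotically with the standardized quadratic form $(v'\cM_3 v - d_\alpha)/\sqrt{2d_\alpha}$. Since $v$ has i.i.d.\ components with mean zero, unit variance and, by Assumption~\ref{ass:errorssec3}, bounded moments of order $s\geq 8$, the task reduces to a central limit theorem for this quadratic form together with consistency and local-power arguments. Assumption~\ref{ass:Sinv2} ensures that the reduced form $y=(I_n-G)^{-1}(X\beta+\Psi\alpha+\epsilon)$ is uniformly bounded in row and column sums almost surely, which is needed to bound moments of the rows of $F$ and of the instrument block $\Kb$.

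For part~(i), I would first verify that the first two moments of $v'\cM_3 v$ concentrate at the rates needed for the standardization $(v'\cM_3 v - d_\alpha)/\sqrt{2d_\alpha}$ to target $N(0,1)$, namely $\E[v'\cM_3 v] = d_\alpha + o(\sqrt{d_\alpha})$ and $\Var(v'\cM_3 v) = 2d_\alpha(1+o(1))$. These follow by substituting $\cV_3$ into $\cM_3 = n^{-1}B'\Kb\cV_3\Kb'B$, applying the trace cyclic identity, and collapsing the nested expression via $\cD_3 = n^{-1}R\cU_2 R'$; \ref{propG} of $n^{-1}\Kb'\Kb$ and $n^{-1}F'\Kb\Kb'F$ (Assumption~\ref{ass:eigsec4}) together with the spectral bound on $\Sigma$ (Assumption~\ref{ass:eigsec3}) controls the eigenvalues entering $\tr(\cM_3^k)$. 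A CLT for quadratic forms in i.i.d.\ variables, in the spirit of \citet{DeJong1994}, then applies, the Lindeberg-type condition following from the spectral norm bound on $\cM_3$ afforded by Assumption~\ref{ass:bsums} together with the eighth-moment condition in Assumption~\ref{ass:errorssec3}.

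Parts~(ii) and~(iii) are more mechanical. Under a fixed alternative with $\delta(\cdot)\not\equiv 0$, the probability limit of $\hat\alpha$ is non-zero, so $n\hat\alpha'\widehat\cD_3^{-1}\hat\alpha$ grows at rate $n$ while the standardizing denominator grows only at rate $\sqrt{d_\alpha}$, yielding $\mathbb{W}_3\to\infty$ in probability and hence consistency. Under $H_{\ell 3}$, the estimator decomposes as the mean-zero stochastic piece treated in part~(i) plus a deterministic drift whose scale $\nu_{3n}d_\alpha^{1/4}/(n\nu_{3n}'\Gamma_{3n}\nu_{3n})^{1/2}$ is calibrated so that its quadratic-form contribution equals $d_\alpha^{1/2}(1+o_p(1))$; after standardization this produces the $2^{-1/2}$ mean shift, while the cross term between drift and noise is $o_p(1)$ under~(\ref{rc4.2}), exactly as in Theorem~\ref{theorem:sarlp}.

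The main obstacle is the CLT step, because $\cM_3$ now blends three series-type components---the nonparametric spatial filter $G$, the growing regressor block and the sieve for $\delta$---with the linear-process error dependence $\epsilon=Bv$. Controlling $\tr(\cM_3^k)$ for $k=1,2$ and bounding the spectral norm of $\cM_3$ uniformly requires careful use of Assumptions~\ref{ass:bsums} and~\ref{ass:eigsec4} alongside the rate condition~(\ref{rc4.2}), but the overall argument mirrors Theorem~\ref{theorem:sarlp} with $\Ka$, $L$, $\Ta$ replaced by $\Kb$, $F$, $\Tb$.
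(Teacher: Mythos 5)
Your proposal follows essentially the same route as the paper: the paper's (omitted) proof defers to the argument for Theorem \ref{theorem:sarvc}, which reduces the statistic via the approximation theorem to the quadratic form $v'\cM_3 v$ and then invokes the general quadratic-form CLT of Theorem \ref{thm:general_clt} with $\fancya=\cM_3$ and $\fancys=n^{-1}\Kb\cV_3\Kb'$, verifying the entrywise and row-sum-of-squares bounds on $\fancys$ together with Assumption \ref{ass:bsums} on $B$; your De Jong--style CLT with trace/Lindeberg verification is the same family of martingale argument, and your consistency and local-power steps match the paper's (with $\Gamma_{3n}$ chosen as the inverse variance matrix to produce the $2^{-1/2}$ drift). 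The only soft spot is that your CLT step leans on a spectral-norm bound plus first two moments, whereas the actual work requires the uniform entrywise bounds $s_{ij}=O_p(J/n)$ and $\sum_i s_{ij}^2=O_p(J/n)$ exploited in the paper's Lyapunov and conditional-variance calculations, but you correctly identify this as the main obstacle and point to the right assumptions.
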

\section{Varying spatial coefficients}\label{sec:vcsar}
Now we revert to the case of known $W_j, j=1,\ldots,d_\lambda$, but allow spatial lag parameters $\lambda_j$ to vary. The model is closely related to \citet{malikov2017semiparametric} and \cite{Sun2018}, with the difference that we do not study panel data settings. Instead, we allow $d_\beta\rightarrow\infty$ and general spatial dependence in the errors. Our model also relates to \cite{Sun2024semiparametric}, wherein the model includes a nonparametric nuisance regression function but does not feature dependent errors. Thus, our model complements the extant literature.
\subsection{Test statistic}
Specifically, we have 
\begin{equation}\label{sarvc_model}
	y_{i}=\sum_{j=1}^{d_\lambda}\lambda_j\left(z_i\right) w_{i,j}'y+x_{i}'\beta+p_{i}'\delta\left(z_{i}\right)+\epsilon_{i},\irangen,
\end{equation}
with $\lambda_j(\cdot)$ unknown scalar-valued functions and $\epsilon_i=\sum_{j=1}^n b_{ij}v_j$, i.e. error with spatial dependence. Introduce the approximation $\lambda_j\left(z_i\right)=\sum_{k=1}^{l_j}\mu_{kj}\phi_{kj}\left(z_i\right)+r_{ij\lambda},j=1,\ldots,d_\lambda$. Set $\dl=\sum_{j=1}^{d_\lambda} l_j$, $\phi_j=\left(\phi_{1j},\ldots,\phi_{l_jj}\right)'$, and let $H(z)$ be the $n\times \dl$ matrix with $i$-th row $\left(y'w_{i,1}\phi_1\left(z_i\right),\ldots,y'w_{i,d_\lambda}\phi_{d_\lambda}\left(z_i\right)\right)$. Then,
\begin{equation}\label{sarvc_model_approx_matrix}
	y=G(z)\gamma+u,
\end{equation}
where $G(z)=\left[H(z),X,\Psi\right]$, $\gamma=\left(\mu',\beta',\alpha'\right)'$, $\mu=\left(\mu_1',\ldots,\mu_{\dl}'\right)$, and $u$ has $i$-th element $u_i=r_{i\lambda}+r_{i\delta}+\epsilon_i$ with $r_{i\lambda}=\sum_{j=1}^{d_\lambda}r_{ij\lambda}w_{i,j}'y$. Defining $\Lambda_j(z)=diag\left(\lambda_j\left(z_1\right),\ldots,\lambda_j\left(z_n\right)\right)$ and $S(z)=I-\sum_{j=1}^{d_\lambda}\Lambda_j(z)W_j$, under the usual invertibility conditions, 
\begin{equation}\label{sarvc_RF}
	y=S(z)^{-1}\left(X\beta+t(z)+\epsilon\right),
\end{equation} 
with $P$ having $i$-th row $p_i'$ and $t(z)$ serving to stack $p_i'\delta\left(z_i\right)$ into an $n\times 1$ vector with conformable subscript to $y_i$ and $\epsilon_i$.

The null hypothesis in this setting is
\begin{equation}\label{null_W3}
	H_0:\vartheta=0,
\end{equation}
where $\vartheta$ is either $\mu$ or $\alpha$ and our test will again be implemented using the 2SLS estimator
\begin{equation}\label{2sls_W3}
	\hat\gamma=\left(G'{\PKc}G\right)^{-1}G'{\PKc}y,	
\end{equation} 
where $\PKc=\Kc\left(\Kc'\Kc\right)^{-1}\Kc'$ with $\Kc$ being an $n\times J_3$ instrument matrix with $J_3\geq\dgamma=\dmu+\dbeta+\da$, but with $J_3$ and $\dgamma$ having the same asymptotic order. Note that we have omitted the $z$ argument for brevity. 
Let $\cD_4=\frac{1}{n}R\cU_3R'$ with 
\sloppy
$\cU_3=T_{3}^{-1}G'\Kc(\Kc'\Kc)^{-1}\Xi_3(\Kc'\Kc)^{-1}\Kc'LT_{3}^{-1}$ and $T_3=\invn G'{\PKc}G$, and let 
$\widehat\cD_4=\frac{1}{n}R\widehat\cU_3R'$ with $\widehat\cU_3=T_{3}^{-1}G'\Kc(\Kc'\Kc)^{-1}\hat\Xi_3(\Kc'\Kc)^{-1}\Kc'LT_{3}^{-1}$, where 
$R=\left[I_{\dmu},0_{\dmu\times\left(d_\beta+d_\alpha\right)} \right]$ or $R=\left[0_{\da\times\left(\dmu+\dbeta\right)}, I_{\da}\right]$, 
and $\hat\Xi_3$ is the \citet{Kelejian2007} estimate of $\Xi_3=\invn\Kc'\Sigma \Kc$, with $\Sigma=E(\epsilon\epsilon')$. 
Then, the test statistic is
\begin{equation}\label{W4}
	\mathbb{W}_4=\frac{n\hat\vartheta'\widehat{\cD}_{4}^{-1}\hat\vartheta-\dvartheta}{\sqrt{2\dvartheta}},
\end{equation}
where $\dvartheta$ is either $\dmu$ or $\da$.
\fussy
Denote the elements of $G$ as $g_{ri}(z)$. We have the following assumptions:
\begin{assumption}\label{ass:secondmomentsec5}
	$\sup_{z\in\mathcal Z} \E\left(g^2_{ri}(z)\right)<C$ and	$\E\left(k_{ri}^2\right)<C$.
\end{assumption}

\begin{assumption}\label{ass:approxerrorsec5}
	$\sup_{i\geq 1}\E\left(r_{i\lambda}^2\right)=o(n^{-1})$ and $\sup_{i\geq 1}\E\left(r_{i\delta}^2\right)=o(n^{-1})$.
\end{assumption}

\begin{assumption}\label{ass:eigsec5}
$n^{-1}\Kc'\Kc$ and $ n^{-1}G(z)'\Kc\Kc'G(z)$ have \ref{propG}, the latter uniformly in $z\in \mathcal{Z}$.
\end{assumption}
\begin{lemma}\label{lemma:shac3}
Let Assumptions \ref{ass:eigsec3}-\ref{ass:distances}, and \ref{ass:secondmomentsec5} -\ref{ass:eigsec5} hold, and
\begin{flalign}\label{rcV3}
 \frac{1}{\dmu}+\frac{1}{\dbeta}+\frac{1}{\da}+ n^{\eta-1}\dgamma^2+ n^{\eta(q-1)/q-(q-2)/2q}\dgamma\rightarrow 0, \text{ as } n \rightarrow \infty.
\end{flalign}
Then, we have $\norm\hat{\Xi}_3-\Xi_3\norm=o_p(1)$.
\end{lemma}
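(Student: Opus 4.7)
The plan is to adapt the arguments used for Lemmas \ref{lemma:shac1} and \ref{lemma:shac2} to the varying spatial coefficient setting. Introduce the infeasible SHAC estimator
\[
\tilde\Xi_{3,rs}=n^{-1}\sum_{i=1}^n\sum_{j=1}^n k_{ri}k_{sj}\epsilon_i\epsilon_j\fancyk\bigl(\min_m\{d^*_{ij,m}/d_m\}\bigr),
\]
obtained by replacing the residuals in (\ref{shac_definition}) by the unobserved errors $\epsilon_i$, and decompose
\[
\norm\hat\Xi_3-\Xi_3\norm\leq \norm\hat\Xi_3-\tilde\Xi_3\norm+\norm\tilde\Xi_3-\Xi_3\norm.
\]
I will show each term is $o_p(1)$ under (\ref{rcV3}). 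To pass from element-wise bounds on $J_3\times J_3$ matrices to spectral-norm bounds, I will use $\Vert A\Vert^2\leq\sum_{r,s\leq J_3}|A_{rs}|^2$ together with the assumption that $J_3$ and $\dgamma$ have the same asymptotic order.

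For $\tilde\Xi_3-\Xi_3$, a direct adaptation of the arguments of \citet{Kelejian2007} and \citet{Delgado2015} applies: Assumption \ref{ass:distances} controls the kernel-truncation bias and distance measurement error, while Assumption \ref{ass:bsums} together with the linear-process structure (\ref{error_spatial_structure}) yields an order-$n^{-1}\E(\ell^2)$ variance bound for each element, using the moment bound on $\epsilon_i$ available from the framework of Section \ref{sec:spaterror} and Assumption \ref{ass:secondmomentsec5} for the moments of $k_{ri}$. Summing over $r,s\leq J_3$ with $J_3=O(\dgamma)$, the rate conditions $n^{\eta-1}\dgamma^2\to 0$ and $n^{\eta(q-1)/q-(q-2)/2q}\dgamma\to 0$ in (\ref{rcV3}) drive this piece to zero. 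For $\hat\Xi_3-\tilde\Xi_3$, the key identity is $\hat u_i-\epsilon_i=r_{i\lambda}+r_{i\delta}-[G(z)(\hat\gamma-\gamma)]_i$. Splitting the difference into quadratic-in-$(\hat\gamma-\gamma)$ pieces, symmetric cross terms with $\epsilon$, and pure approximation-error pieces, I handle the last via Assumption \ref{ass:approxerrorsec5}, and the first two using a preliminary consistency rate $\Vert\hat\gamma-\gamma\Vert^2=O_p(\dgamma/n)$ obtained from Assumption \ref{ass:eigsec5} and a standard 2SLS calculation, together with the uniform bound $\sum_j|\fancyk(\min_m\{d^*_{ij,m}/d_m\})|\leq\ell$ coming from Assumptions \ref{ass:kernel} and the definition of $\ell$.

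The main obstacle will be controlling the quadratic piece
\[
n^{-1}\sum_{i,j}k_{ri}k_{sj}[G(z)(\hat\gamma-\gamma)]_i[G(z)(\hat\gamma-\gamma)]_j\fancyk\bigl(\min_m\{d^*_{ij,m}/d_m\}\bigr)
\]
uniformly over $(r,s)$, because its Frobenius-norm accumulation is of order $\dgamma^2\ell/n$, which must be forced to $o_p(1)$. This is precisely where $n^{\eta-1}\dgamma^2\to 0$ enters, since $\ell=O_p(n^{\eta})$ follows from Assumption \ref{ass:distances}(a) and Markov's inequality. The uniformity in $z$ built into Assumption \ref{ass:eigsec5} is essential here, as $G(z)$ is evaluated at the observed $z_i$ and the second-moment bound in Assumption \ref{ass:secondmomentsec5} must hold along the realized design. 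Once this piece is controlled, the cross terms follow by Cauchy--Schwarz against the infeasible-error piece already bounded in step one, and the rest of the argument is bookkeeping that mirrors the proof of Lemma \ref{lemma:shac1} almost verbatim.
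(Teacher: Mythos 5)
Your proposal follows essentially the same route as the paper's own proof: the paper first derives $\Vert\hat\gamma-\gamma\Vert=O_p(\sqrt{\dgamma/n})$ from the 2SLS representation, then splits $\hat\Xi_{3,rs}-\Xi_{3,rs}$ into a residual-versus-error piece (your $\hat\Xi_3-\tilde\Xi_3$), a variance piece and a kernel-bias piece (together your $\tilde\Xi_3-\Xi_3$, handled by citing \citet{Kelejian2007}), and bounds the seven terms of $\hat u_i\hat u_j-\epsilon_i\epsilon_j$ using $\sum_j\left(1-\prod_m\mathbf{1}(d^*_{ij,m}>d_m)\right)\leq\ell$, $\ell=O_p(n^\eta)$ and H\"older's inequality, arriving at exactly the requirements $\ell\dgamma^2/n\to 0$ and $\ell^{1-1/q}\dgamma/n^{1/2-1/q}\to 0$ that you identify. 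Your Cauchy--Schwarz treatment of the cross terms and your explicit attention to aggregating element-wise bounds into a matrix-norm bound are minor variations that remain within the stated rate condition, so the argument goes through as in the paper.
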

\subsection{Asymptotic properties}
Define $\cM_{4}=\frac{1}{n}B'\Kc\cV_{4}\Kc'B$, with $\cV_{4}=(\Kc'\Kc)^{-1}\Kc'G\Tc^{-1}R'\cD_{4}^{-1}R\Tc^{-1}G'\Kc(\Kc'\Kc)^{-1}$. Then, we have the following theorem.  

\begin{theorem}\label{theorem:W4appr}
Let (\ref{rcV3}) hold. Then, under $H_0$, Assumptions \ref{ass:eigsec3}-\ref{ass:distances}, \ref{ass:secondmomentsec5}-\ref{ass:eigsec5}, as $n\rightarrow\infty$,
	\[
	\mathbb{W}_4-\frac{v'\mathcal{M}_4v-\dvartheta}{\sqrt{2\dvartheta}}=o_p(1).
	\]
\end{theorem}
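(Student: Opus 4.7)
The plan is to decompose $n\hat\vartheta'\widehat\cD_4^{-1}\hat\vartheta$ into the quadratic form $v'\cM_4 v$ plus errors of order $o_p(\sqrt{\dvartheta})$. Under $H_0$, $R\gamma=0$, so substituting (\ref{sarvc_model_approx_matrix}) into (\ref{2sls_W3}) yields
\[
\hat\vartheta = R\hat\gamma = \tfrac{1}{n}R\Tc^{-1}G'\Kc(\Kc'\Kc)^{-1}\Kc'u,\qquad u=Bv+r_\lambda+r_\delta,
\]
and therefore
\[
n\hat\vartheta'\widehat\cD_4^{-1}\hat\vartheta = \tfrac{1}{n}\,u'\Kc(\Kc'\Kc)^{-1}\Kc'G\Tc^{-1}R'\widehat\cD_4^{-1}R\Tc^{-1}G'\Kc(\Kc'\Kc)^{-1}\Kc'u.
\]
When $u=Bv$ and $\widehat\cD_4$ is replaced by $\cD_4$, this quadratic form is exactly $v'\cM_4 v$ by the definitions of $\cV_4$ and $\cM_4$, so the task reduces to controlling (a) the contribution of $r_\lambda+r_\delta$ inside $u$, and (b) the replacement error $\widehat\cD_4\mapsto\cD_4$.

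For (a), splitting $u$ linearly produces one pure-$r$ quadratic form and two cross terms with $v$. I would bound the pure-$r$ term by $\|\widehat\cD_4^{-1}\|\cdot n^{-1}\|\Kc'(r_\lambda+r_\delta)\|^2\cdot \|(\Kc'\Kc)^{-1}\Kc'G\Tc^{-1}\|^2$. Assumption \ref{ass:approxerrorsec5} gives $\E\|r_\lambda+r_\delta\|^2=o(1)$, Assumption \ref{ass:secondmomentsec5} converts this to $n^{-1}\|\Kc'(r_\lambda+r_\delta)\|^2=o_p(1)$, and the remaining factors are $O_p(1)$ by Assumption \ref{ass:eigsec5} together with Lemma \ref{lemma:shac3}. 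The cross terms then follow by Cauchy--Schwarz applied to the (eventually) positive semidefinite middle matrix, producing $O_p(\sqrt{\dvartheta})\cdot o_p(1)=o_p(\sqrt{\dvartheta})$ once the $O_p(\dvartheta)$ magnitude of the pure-$v$ form is in hand.

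For (b), Lemma \ref{lemma:shac3} gives $\norm\hat\Xi_3-\Xi_3\norm=o_p(1)$, and combined with Assumption \ref{ass:eigsec5} together with Property G of $\Xi_3$ (via Assumption \ref{ass:eigsec3} and Assumption \ref{ass:bsums}) yields $\norm\widehat\cD_4-\cD_4\norm=o_p(1)$ with both matrices having eigenvalues bounded away from $0$ and $\infty$ with probability approaching one. I would express the replacement error via the resolvent identity as $-\tr\bigl((\widehat\cD_4-\cD_4)\widehat\cD_4^{-1}(n\hat\vartheta\hat\vartheta')\cD_4^{-1}\bigr)$, which is bounded in absolute value by $\norm\widehat\cD_4-\cD_4\norm\cdot n\hat\vartheta'\cD_4^{-1}\hat\vartheta\cdot\|\cD_4\widehat\cD_4^{-1}\|$ up to constants. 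The middle factor is $O_p(\dvartheta)$ from the pure-$v$ analysis, so naively the bound is $o_p(\dvartheta)$; sharpening it to $o_p(\sqrt{\dvartheta})$ requires exploiting the rate of convergence implicit in Lemma \ref{lemma:shac3}, which under (\ref{rcV3}) is in fact $o_p(\dvartheta^{-1/2})$ in operator norm rather than merely $o_p(1)$. This parallels the template used in the proof of Theorem \ref{theorem:W2appr}.

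The hard part will be making this rate tightening rigorous. To close the gap I would unpack the SHAC decomposition in the proof of Lemma \ref{lemma:shac3}, breaking $\hat\Xi_3-\Xi_3$ into a residual-based estimation-error piece governed by the 2SLS rate and a kernel bias-variance piece, and verifying that (\ref{rcV3}) forces each component to $o_p(\dvartheta^{-1/2})$ in operator norm. Together with the $O_p(\sqrt{\dvartheta})$ concentration of $v'\cM_4 v$ around its mean---which follows from a Whittle-type variance inequality for quadratic forms in the independent $v_i$ (using the fourth-moment bound implicit via Assumption \ref{ass:errorssec2} in the rate condition, and Property G of $\cM_4$ coming from Assumptions \ref{ass:eigsec3}, \ref{ass:bsums} and \ref{ass:eigsec5})---steps (a) and (b) combine to give $n\hat\vartheta'\widehat\cD_4^{-1}\hat\vartheta - v'\cM_4 v=o_p(\sqrt{\dvartheta})$, equivalent to the stated conclusion.
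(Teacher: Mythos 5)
Your decomposition of $n\hat\vartheta'\widehat\cD_4^{-1}\hat\vartheta$ into the pure-$v$ quadratic form, a pure approximation-error term, cross terms, and a plug-in error from replacing $\widehat\cD_4$ by $\cD_4$ is exactly the paper's (its $v'\widehat\cM_4v+A_{11}+A_{12}+2A_{21}+2A_{22}$), and your treatment of the approximation-error pieces via Cauchy--Schwarz with Assumptions \ref{ass:approxerrorsec5}, \ref{ass:secondmomentsec5} and \ref{ass:eigsec5} matches the paper's handling of $A_{12}$ and $A_{22}$. The genuine gap is in your step (b). You bound the plug-in error by $\Vert\widehat\cD_4-\cD_4\Vert$ times the $O_p(\dvartheta)$ magnitude of the quadratic form, concede this only yields $o_p(\dvartheta)$, and then assert that under (\ref{rcV3}) the SHAC error is ``in fact $o_p(\dvartheta^{-1/2})$ in operator norm.'' That assertion is neither established nor implied by (\ref{rcV3}): the proof of Lemma \ref{lemma:shac3} delivers $\Vert\hat\Xi_3-\Xi_3\Vert=O_p(r_n)$ with $r_n=\ell\dgamma^2/n+\ell^{1-1/q}\dgamma/n^{1/2-1/q}$, and (\ref{rcV3}) makes $r_n=o_p(1)$ only; obtaining $r_n=o_p(\dvartheta^{-1/2})$ would require the strictly stronger condition $n^{\eta-1}\dgamma^2\dvartheta^{1/2}\rightarrow 0$, which is not assumed. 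So the verification you defer to the end of the proof would fail as described, and the $\dvartheta$ factor you lose by decoupling cannot be recovered.

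The paper avoids this loss by never splitting the plug-in error into (operator norm) $\times$ (magnitude of the form). It works with the perturbed quadratic form directly: the single rate $r_n$ controls $\Vert\widehat\cU_3-\cU_3\Vert$, $\Vert\widehat\cU_3^{-1}-\cU_3^{-1}\Vert$, $\Vert\widehat\cN_4-\cN_4\Vert$ and $\Vert\widehat\cM_4-\cM_4\Vert$, and each normalized error term, including $\dvartheta^{-1/2}\,v'(\widehat\cM_4-\cM_4)v$, is then bounded by $Cr_n\cdot O_p(1)$, so that $r_n=o_p(1)$ --- precisely what (\ref{rcV3}) supplies --- suffices. To repair your argument you must either strengthen the rate condition or reproduce this direct bound on the perturbed form rather than routing through $\Vert\widehat\cD_4^{-1}-\cD_4^{-1}\Vert$. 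A minor further point: your appeal to a Whittle-type variance inequality and to fourth moments of $v_i$ (you cite Assumption \ref{ass:errorssec2}, which concerns $\epsilon_i$ in Section \ref{sec:sar}) is neither needed for this theorem nor covered by its hypotheses --- the moment condition on $v_i$, Assumption \ref{ass:errorssec3}, is reserved for Theorem \ref{theorem:sarvc}, and the paper's proof here uses only $\E(v_i^2)=1$.
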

\noindent The following assumption is the appropriate version of Assumptions \ref{ass:Sinv1} and \ref{ass:Sinv2} for the model considered in this section.
\begin{assumption}\label{ass:Sinv3}
	$\left(I_n-S(z)\right)^{-1}$ exists and is uniformly bounded in row and column sums for all sufficiently large $n$, almost surely, uniformly in $z$.   
\end{assumption}
\noindent Define $H_{\ell 4}\equiv H_{\ell 4,n}: \vartheta=\vartheta^{*}\equiv\nu_{4n}\dvartheta^{\frac{1}{4}}/(n\nu_{4n}'\Gamma_{4n}\nu_{4n})^{\frac{1}{2}}$, with $\nu_{4n}$ a $\dvartheta\times 1$ non-zero vector and $\Gamma_{4n}$ a $\dvartheta\times\dvartheta$ matrix defined in the proof of the next theorem.
\begin{theorem}\label{theorem:sarvc} 
\sloppy 
Under Assumptions \ref{ass:eigsec3}-\ref{ass:errorssec3} and \ref{ass:secondmomentsec5}-\ref{ass:Sinv3}, suppose that
\begin{flalign}\label{rc5.2}
 \frac{1}{\dmu}+\frac{1}{\dbeta}+\frac{1}{\da}+\frac{\dgamma^3}{n}+ n^{\eta-1}\dgamma^2+ n^{\eta(q-1)/q-(q-2)/2q}\dgamma\rightarrow 0,
\end{flalign}
as $n\rightarrow\infty$, the following hold:
	(i) Under $H_0$,  $\mathbb{W}_4\overset{d}{\rightarrow}N(0,1)$.
	(ii) $\mathbb{W}_4$ provides a consistent test. 
	(iii) Under the sequence of
	local alternatives $H_{\ell 4}$, $\mathbb{W}_4\overset{d}{\rightarrow}N\left(2^{-1/2},1\right)$.	
\end{theorem}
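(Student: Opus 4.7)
The plan is to mirror the strategy used for Theorems 2.2, 3.2 and 4.2, adapted to the extra complications introduced by the varying spatial lag coefficients and the reduced form \eqref{sarvc_RF}. The starting point is Theorem 4.1, which under $H_0$ already gives $\mathbb{W}_4 = (v'\mathcal{M}_4 v - \dvartheta)/\sqrt{2\dvartheta} + o_p(1)$, so for part (i) it suffices to derive a CLT for this normalized quadratic form in the i.i.d. innovations $v_i$. First I would verify the two trace identities that underlie the standardization: by construction $\Xi_3 = n^{-1}\Kc'BB'\Kc$ and the sandwich structure of $\cV_4$ with $\cD_4^{-1}$ yields $\tr(\mathcal{M}_4)=\dvartheta$ exactly, while a direct expansion shows $\tr(\mathcal{M}_4^2) = \dvartheta + o(\dvartheta)$ using $\text{Property G}$ of $n^{-1}\Kc'\Kc$, $n^{-1}G'\Kc\Kc'G$ and $\Sigma$ (Assumptions 4.7, 4.8, 3.1).

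Then I would invoke a standard CLT for quadratic forms in i.i.d.\ variables with finite eighth moment (Assumption 3.4), in the spirit of de Jong (1987), which requires $\|\mathcal{M}_4\|=O_p(1)$ (immediate from Property~G and uniform row/column sum boundedness of $B$ via Assumption 3.3) together with a Lyapunov-type condition $\max_i (\mathcal{M}_4)_{ii}^2 / \tr(\mathcal{M}_4^2)\to 0$. The latter is controlled by $\dgamma^3/n\to 0$ in \eqref{rc5.2}, exactly as in the earlier sections. Combining these ingredients delivers part~(i).

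For part (ii), under a fixed alternative with $\vartheta\neq 0$ I would show that the 2SLS estimator $\hat\vartheta$ from \eqref{2sls_W3} satisfies $\hat\vartheta = \vartheta + o_p(1)$ under Assumption 5.4 (which guarantees $S(z)^{-1}$ is well-behaved so that $y$ in \eqref{sarvc_RF} has bounded moments) together with Property~G of $T_3$. Hence $n\hat\vartheta'\widehat\cD_4^{-1}\hat\vartheta$ is of exact order $n$ whereas the centering and scaling are $O(\dvartheta)$ and $O(\sqrt{\dvartheta})$, so $\mathbb{W}_4\to\infty$ in probability since $\dvartheta^{3/2}/n\to 0$ by \eqref{rc5.2}. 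For part (iii), under $H_{\ell 4}$ the 2SLS estimator satisfies $\hat\vartheta = \vartheta^* + n^{-1/2}\Delta_n$ for some $\Delta_n=O_p(1)$, and a direct calculation of $n\hat\vartheta'\widehat\cD_4^{-1}\hat\vartheta$ expands as $n\vartheta^{*\prime}\cD_4^{-1}\vartheta^* + v'\mathcal{M}_4 v + o_p(\sqrt{\dvartheta})$. The normalisation in $H_{\ell 4}$ is precisely chosen so that $n\vartheta^{*\prime}\cD_4^{-1}\vartheta^*\to \sqrt{\dvartheta}$ after centering and scaling contributes the $2^{-1/2}$ shift, while the quadratic form term produces the $N(0,1)$ limit from part~(i).

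The main obstacle will be the trace and Lyapunov computations underpinning the CLT, specifically showing $\tr(\mathcal{M}_4^2) = \dvartheta + o(\dvartheta)$ uniformly despite the more intricate block structure of $G$ compared to $L$ or $F$: the block $H(z)$ in $G$ involves products of $y$, $W_j$ and the sieve $\phi_j$, so one must carefully bound cross-block contributions using Assumption 5.4 together with the uniform-in-$z$ version of Property~G in Assumption 5.3. A secondary subtlety is that the same proof template must cover both choices of $R$ (testing $\mu=0$ and $\alpha=0$); this is handled by noting that in either case $R$ is a selector of a $\dvartheta$-dimensional sub-block, so all trace identities and rate conditions carry over without modification provided $\dvartheta\to\infty$ at the rate dictated by \eqref{rc5.2}.
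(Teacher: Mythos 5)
Your overall architecture matches the paper's: part (i) reduces to a CLT for the quadratic form delivered by Theorem \ref{theorem:W4appr}, part (ii) expands $n\hat\vartheta'\widehat{\cD}_4^{-1}\hat\vartheta$ around the true $\vartheta$ using the $O_p(\sqrt{\dvartheta/n})$ rate for $\hat\vartheta-\vartheta$, and part (iii) shows the drift $n\vartheta^{*\prime}\cD_4^{-1}\vartheta^{*}/\sqrt{2\dvartheta}\rightarrow 2^{-1/2}$ by the normalization built into $H_{\ell 4}$ (the paper implements this by choosing $\Gamma_{4n}=\widehat\cD_4^{-1}$). Parts (ii) and (iii) are essentially the paper's argument and are fine, modulo checking that the cross term $2n\vartheta^{*\prime}\cD_4^{-1}\Delta_\vartheta$ is $o(\sqrt{\dvartheta})$, which it is since $\Vert\vartheta^*\Vert\asymp\dvartheta^{1/4}n^{-1/2}$.

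The gap is in part (i), and it is precisely where the paper does its real work. You propose to invoke an off-the-shelf de Jong-type CLT for $v'\cM_4 v$ under the conditions $\Vert\cM_4\Vert=O_p(1)$ and $\max_i(\cM_4)_{ii}^2/\tr(\cM_4^2)\rightarrow 0$. These conditions are not sufficient as stated. First, since the $v_i$ need not be Gaussian, $\Var(v'\cM_4 v)=2\tr(\cM_4^2)+(\mu_4-3)\sum_i(\cM_4)_{ii}^2$, so the $\sqrt{2\dvartheta}$ scaling requires $\sum_i(\cM_4)_{ii}^2=o(\dvartheta)$; your condition only controls the maximum diagonal entry and does not imply this (one needs roughly $\max_i(\cM_4)_{ii}=O_p(\dvartheta/n)$, not $o(\dvartheta^{1/2})$). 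Second, the martingale conditional-variance condition cannot be verified from spectral-norm bounds on $\cM_4$ alone. The paper resolves both points by proving a purpose-built CLT (Theorem \ref{thm:general_clt}) for matrices of the form $\fancya=B'\fancys B$, exploiting the uniform row/column summability of $B$ (Assumption \ref{ass:bsums}) together with the entrywise bounds $s_{ij}=O_p(J/n)$ and $\sum_i s_{ij}^2=O_p(J/n)$ on $\fancys=n^{-1}\Kc\mathcal{V}_4\Kc'$; verifying these for the varying-coefficient design is where Assumptions \ref{ass:secondmomentsec5} and \ref{ass:eigsec5} (the uniform-in-$z$ Property G) enter, and the lengthy bounding of the terms $e_1,e_2$ in that proof is what actually consumes the rate $\dgamma^3/n\rightarrow 0$. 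Your remark that the Lyapunov condition ``is controlled by $\dgamma^3/n\to 0$ exactly as in the earlier sections'' assumes away this computation. To complete your route you would either have to reprove something equivalent to Theorem \ref{thm:general_clt}, or supply the missing diagonal and conditional-variance verifications directly for $\cM_4$, which again forces you back to the $B'\fancys B$ factorization.
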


\allowdisplaybreaks
\section{Empirical applications}
\newcommand{\bK}{\mathbb{K}}
\newcommand{\bL}{\mathbb{L}}
\label{sec:app}

\subsection{Testing for the CRS in a production function}
\label{empcrs} 
In this subsection, we are interested in testing whether the production function in China's nonmetal mineral manufacturing industry has a constant returns to scale (CRS) technology. CRS refers to a scenario where
a simultaneous proportional increase in all inputs results in an identical proportional increase in output. Returns to scale is a classical concept in economics to assess the efficiency of a production function, and it has accordingly received much attention (see %
\citet{ackerberg2015identification,basu2017uncertainty,attanasio2020estimating,combes2021production}%
). Existing works, however, mostly do not account for cross-sectional dependence between firms. The role of cross-firm correlation has recently been highlighted empirically by \cite{iyoha2023estimating}, who shows that this can capture spillover effects and would lead to model misspecification if left unaccounted for.

Our models generalize \cite{li2002semiparametric}'s production function by incorporating spatial dependence. Their analysis centers on a Cobb-Douglas production function that allows elasticities of capital and labor to be nonparametrically varying with management expenses. Management expenses are costs that are indirectly associated with output production, such as research and development
(R\&D), equipment upgrades and employee training. CRS corresponds to the elasticities of capital and labor summing to one.

Using data from the Third Industrial Census of China (conducted by the National Statistical Bureau in 1995), \citet{li2002semiparametric} find that ignoring the varying nature of these coefficients leads to an underestimation of the returns to scale. They suggest their estimate of returns to scale of the technology is close to
being constant for most values of managerial expense, but they do not have a test for the CRS hypothesis. Our data comes from the Chinese Industrial Enterprise Database in 2014 as it is the most recent information available. We use all firms in the nonmetal mineral manufacturing industry with the survey code 31, resulting in a sample of 7,355 observations. As advocated by \cite{li2002semiparametric}, production functions of firms in this industry are expected to be homogeneous since it has a very small proportion of foreign entity
ownership. This is important because evidence suggests that production performance may vary across
ownership types (\citet{murakami1994technical}). 

Variables are in logs: output ($y_{i}$), capital ($p_{i1}$), labor ($p_{i2}$), and management expenses ($z_{i}$). As in \cite{li2002semiparametric}, we include a parametric benchmark model: 
\begin{flalign}\label{CRSbase}
y_{i}=\delta_{0}+\delta_{1}p_{i1}+\delta_{2}p_{i2}+\beta_{1}z_{i}+\beta_{2}z_{i}^2+\epsilon_{i}.
\end{flalign}Elasticities of capital and labor are $\delta
_{1}$ and $\delta _{2}$, respectively. The semiparametric model where the elasticities of of capital
and labor vary with management expenses is: 
\setlength\abovedisplayskip{3pt}
\setlength\belowdisplayskip{3pt}
\begin{flalign*}
	y_{i}=\delta_{0}(z_i)+\delta_{1}(z_i)p_{i1}+\delta_{2}(z_i)p_{i2}+\epsilon_{i}.
\end{flalign*}

\noindent Based on the CRS hypothesis, $H_{0}^{true}:\delta _{1}(z_{i})+\delta _{2}(z_{i})=1$, we reparameterize the model as 
\begin{flalign*}
	y_{i}^{*}=\delta_{0}(z_i)+\delta_{2}(z_i)(p_{i2}-p_{i1})+(\delta_{1}(z_i)+\delta_{2}(z_i)-1)p_{i1}+\epsilon_{i},
\end{flalign*}where $y_{i}^{\ast }=y_{i}-p_{i1}$. Then, we estimate the following model with no SAR structure:
\begin{flalign*}
	\begin{split}			
		& y_{i}^{*}=\beta_0+\suml_{k=1}^{h}\alpha_{1k}\psi_{ik}+\suml_{k=1}^{h}\alpha_{2k}(p_{i2}-p_{i1})\psi_{ik}+\suml_{k=1}^{h}\alpha_{3k}p_{i1}\psi_{ik}+\epsilon_{i}, 
	\end{split}
\end{flalign*} where $\psi _{ik}$ is a basis function, with argument $z_i$, for $k=1,...,h/2$.  We approximate the unknown functions by polynomial and trigonometric functions of different orders. We define the series approximation based Wald statistic derived from this model as $\mathbb{W}_{0}$. 
Next, we estimate a number of models and perform tests based on the various configurations and statistics discussed in previous sections:%
\setlength\abovedisplayskip{5pt}
\setlength\belowdisplayskip{5pt}
\begin{flalign}
\label{eq:sec6.1_sar}
y_{i}^{*}
= \beta_{0}
+ \lambda\suml_{j=1}^{n} w_{ij} y_j
+ \suml_{k=1}^{h}\alpha_{1k}\psi_{ik}
+ \suml_{k=1}^{h}\alpha_{2k}(p_{i2}-p_{i1})\psi_{ik}
+ \suml_{k=1}^{h}\alpha_{3k}p_{i1}\psi_{ik}
+ \epsilon_{i}.
\end{flalign}
We compute the Wald statistics as follows:
$\mathbb{W}_{1}$ from \eqref{eq:sec6.1_sar} with $i.i.d$ errors $\epsilon_{i}$; 
$\mathbb{W}_{2}$ from \eqref{eq:sec6.1_sar} with $\epsilon_{i}=\sum_{j=1}^{n}b_{ij}v_{j};$
$\mathbb{W}_{3}$ from \eqref{eq:sec6.1_sar} replacing $\lambda\sum_{j=1}^{n}w_{ij}y_j$ by
$\sum_{l=1}^{d_\tau}\tau_{l} c_{il}$
and setting $\epsilon_{i}=\sum_{j=1}^{n} b_{ij} v_{j};$
$\mathbb{W}_{4}$ from \eqref{eq:sec6.1_sar} with  
$\lambda\sum_{j=1}^{n}w_{ij}y_j$
replaced by $\sum_{m=1}^{d_\mu}\mu_m \mathfrak{H}_{im}$
and $\varepsilon_i=\sum_{j=1}^{n} b_{ij}v_j$, where $\mathfrak{H}_{im}$ is the $(i,m)$-th entry of the $n\times d_{\mu}$ matrix $H(z)$ defined in Section~\ref{sec:vcsar}.
The coefficient on $p_{i1}$ remains the measure of returns to scale technology even with a SAR feature, although this notion differs from the overall returns to scale that combines technological and other spillovers effects. Let $d_{ij,m}^{\ast }$ be the geographical distance between firms $i$ and $j$, and $d_{m}$ represent the 10th percentile of all geographical distances. We use two different
row-normalized spatial weight matrices, denoted 
\begin{flalign*}
	\begin{split}
		w_{ij}^{p}=[W_{n}^{p}]_{ij}=\begin{cases}
			1  \quad \text{if} \ i \ \text{and} \ j \ \text{ in same city},\\
			0  \quad \text{otherwise}.
		\end{cases} 
		w_{ij}^{d}=[W_{n}^{d}]_{ij}=\begin{cases}
			\frac{1}{d_{ij,m}^{*}}  \quad \text{if} \ d_{ij,m}^{*}<d_m,\\
			0  \quad \text{otherwise}.
		\end{cases} 
	\end{split}
\end{flalign*}

For the test statistics, $\mathbb{W}_{i}$, $i=0,...,3$, $\alpha _{3}=(\alpha _{31},...\alpha
_{3h})^{\prime }$, and the null is $H_{0}:\alpha _{3}=0$. For $\mathbb{W}%
_{4} $, the null is $H_{0}:\alpha _{3}=0$ or $H_{0}:\mu =0$, where $\mu
=(\mu_{1},...,\mu_{d_\mu})^{\prime}$. Throughout our empirical analysis, we use $\psi _{ik}(z)=z_{i}^{k}$ as the
polynomial function for $k=1,...,h$, and $\psi _{ik}(z)=[\sin (kz_{i}),\cos
(kz_{i})]$ as the trigonometric function $k=1,...,h/2$ for $h=2,4$. To
compute $\mathbb{W}_{2}$, $\mathbb{W}_{3}$ and $\mathbb{W}_{4}$, we use the
Epanechnikov kernel function. For $\mathbb{W}_{3}$, let $\mathcal{E}_{l}$ be the matrix whose typical entry is, for $i\neq j$ and $l=1,...,d_\tau$, $e_{l}(d_{ij})=[d_{ij,m}^{\ast }]^{l}\mathbf{1}(d_{ij,m}^{\ast
}<d_{m})$. Then, $c_{il}=\sum_{j=1}^{n}e_{l}(d_{ij})y_{j}$. For $\mathbb{W}_{4}$, we use 
$\phi_{im}(z)=\frac{1}{d_\mu}[\frac{2}{\pi}\tanh(z_{i})]^{m}$ as the polynomial function and $%
\phi_{im}(z)=\frac{1}{d_{\mu}}\sin(\frac{z_{i}}{2m})$ as the trigonometric function for $%
m=1,...,d_\mu$. Then, $\mathfrak{H}_{im}=(\sum_{j=1}^{n}w_{ij}y_{j})\phi_{im}(z)$, where $w_{ij}$ denotes the $(i,j)$-th entry of either $W_n^{p}$ or $W_n^{d}$. 
For the IVs, set $K_{1}=K_{3}=[X,\Psi,W_n^{p}X]$ when the specification uses $W_n^{p}$, and
$K_{1}=K_{3}=[X,\Psi,W_n^{d}X]$ when it uses $W_n^{d}$, for $\mathbb{W}_{1}$, $\mathbb{W}_{2}$, and $\mathbb{W}_{4}$, and $K_{2}=[X, \Psi, \mathcal{E}_{1}X,..., \mathcal{E}_{d_\tau}X]$ with
respect to $\mathbb{W}_{3}$.
We set $d_\mu=d_\tau=h$, and the results with different basis functions are in Table \ref{tab:emp1poly}. 
In all models, the hypothesis of CRS cannot be rejected. These
findings corroborate those in \cite{li2002semiparametric}.
These non-rejection results are economically meaningful. One can conclude that the CRS technology is a salient feature of the production function in this industry, as it is present in the parametric model as well as in semiparametric models with varying coefficients and cross-firm dependence.

We end this application with some remarks: 
(1) In addition to inference on CRS, our analysis also reveals other similarities with prior works. Table \ref{tab:emp1est} shows that the parametric method underestimates the returns to scale compared to the semiparametric method.\footnote{%
The returns to scale in a semiparametric model are computed by summing $\frac{1}{n}\sum_{i=1}^{n}%
\hat{\delta}_{1}(z_{i})$ and $\frac{1}{n}\sum_{i=1}^{n}\hat{\delta}_{2}(z_{i})$.}%
The returns to scale with respect to different input factors also differ, which highlights the crucial role of management expenses.  Figures \ref{fig:crs_comp} (a)-(b) in the online appendix illustrate this with the semiparametric estimates
with spatial weights $W_{n}^{d}$. They show that the output elasticity of capital is increasing in management expenses, while it is decreasing for labor. The decreasing elasticity
of labor is a known empirical fact in this industry due to \textquotedblleft concealed unemployment\textquotedblright\, where the government did not allow state-owned firms to lay off extra employees as a strategy to avoid social unrest.
(2) The returns to scale are relatively flat in the semiparametric model and are within the 95\% bounds for most $%
z_{i}$, as illustrated in Figure \ref{fig:crs_comp} (c) in the online appendix. Further discussions on these findings can be found in \cite{li2002semiparametric}. 
(3) Our results are not unique to the estimates from $W_{n}^{d}$. Other choices of spatial weights give the same conclusion.
\begin{table}[H]
	\centering\footnotesize\stl{3mm}\stla{0mm}
	\caption{Comparison of parametric method and semiparametric method in Section \ref{empcrs}.}
	\begin{spacing}{0.5}
		\begin{tabular}{llcccccc}
			\toprule
			&       & \multicolumn{1}{c}{Without $W$} & \multicolumn{1}{c}{$W_{n}^{p}$} & \multicolumn{1}{c}{$W_{n}^{d}$} & \multicolumn{1}{c}{$\widehat{G}$} & \multicolumn{1}{c}{$W_{n}^{p}$} & \multicolumn{1}{c}{$W_{n}^{d}$} \\
			\midrule
			\multirow{3}[2]{*}{parametric} & $\hat{\delta}_{1}$    & 0.5536  & 0.5438  & 0.5330  & 0.5989  & 0.5996  & 0.5960  \\
			& $\hat{\delta}_{2}$    & 0.4022  & 0.3794  & 0.3725  & 0.3145  & 0.3226  & 0.3099  \\
			& $\hat{\delta}_{1}$+$\hat{\delta}_{2}$ & 0.9558  & 0.9232  & 0.9055  & 0.9134  & 0.9222  & 0.9060  \\
			\midrule
			\multirow{3}[2]{*}{polynomial, $h=2$} & $\hat{\delta}_{1}$   & 0.6228  & 0.6488  & 0.6179  & 0.6263  & 0.6051  & 0.6108  \\
			& $\hat{\delta}_{2}$    & 0.3664  & 0.3212  & 0.3691  & 0.2930  & 0.3502  & 0.3220  \\
			& $\hat{\delta}_{1}+\hat{\delta}_{2}$ & 0.9892  & 0.9699  & 0.9869  & 0.9194  & 0.9553  & 0.9328  \\
			\midrule
			\multirow{3}[2]{*}{trigonometric, $h=2$} & $\hat{\delta}_{1}$    & 0.6233  & 0.6257  & 0.6149  & 0.6284  & 0.6298  & 0.6806  \\
			& $\hat{\delta}_{2}$    & 0.3517  & 0.3099  & 0.3381  & 0.3040  & 0.3377  & 0.2816  \\
			& $\hat{\delta}_{1}+\hat{\delta}_{2}$ & 0.9750  & 0.9356  & 0.9530  & 0.9324  & 0.9675  & 0.9621  \\
			\midrule
			\multirow{3}[2]{*}{polynomial, $h=4$} & $\hat{\delta}_{1}$    & 0.6840  & 0.6390  & 0.5432  & 0.6077  & 0.6213  & 0.6342  \\
			& $\hat{\delta}_{2}$    & 0.3089  & 0.3395  & 0.4345  & 0.3252  & 0.3471  & 0.3025  \\
			& $\hat{\delta}_{1}+\hat{\delta}_{2}$ & 0.9929  & 0.9785  & 0.9777  & 0.9330  & 0.9683  & 0.9368  \\
			\midrule
			\multirow{3}[2]{*}{trigonometric, $h=4$} & $\hat{\delta}_{1}$    & 0.6792  & 0.6896  & 0.6814  & 0.6152  & 0.6289  & 0.6569  \\
			&$\hat{\delta}_{2}$    & 0.3121  & 0.2702  & 0.3002  & 0.3043  & 0.3649  & 0.3199  \\
			& $\hat{\delta}_{1}+\hat{\delta}_{2}$& 0.9913  & 0.9598  & 0.9816  & 0.9195  & 0.9938  & 0.9768  \\
            \bottomrule
		\end{tabular}%
	\end{spacing}
	\label{tab:emp1est}%
\end{table}%

\subsection{Testing the impact of the distance on house prices}
\label{emphouse} 
We use the Boston house price dataset%
\footnote{%
The dataset consists the median
value of owner-occupied homes in 506 census tracts in the Boston Standard
Metropolitan Statistical Area in 1970 (\citet{harrison1978hedonic}), which is available at	\url{https://github.com/simonbrewer/geog6000/blob/main/datafiles/boston.tr.zip}%
.}
to explore how some factors 
affect the median value of owner-occupied homes in $1000$s (MEDV), and whether the effects of these factors
vary over location. Factors include the per
capita crime rate by town (denoted by CRIM), average number of rooms per
dwelling (RM), full-value property-tax rate per \$10,000 dollar
(TAX), percentage of lower socioeconomic status population (LSTAT) and index of accessibility to radial highways (RAD). The location variable DIS, which represents the weighted distances
between a property and the five Boston employment centers, will drive the varying coefficient. 
Let $y_{i},x_{i1},x_{i2},p_{i1},p_{i2},p_{i3},z_{i}$ denote the logarithm of MEDV,
RAD, LSTAT, CRIM, RM, TAX and DIS respectively. 

As house prices in a specific location or region are
commonly influenced by the prices of nearby houses due to factors such as
environmental and geographic considerations, it is natural to consider the spatial autocorrelation.\footnote{%
	There are also some studies that use the SAR semiparametric model with the Boston house price data, for example, \cite%
	{malikov2017semiparametric} and \cite{li2019tests}. However, the former
	considers the covariate $z_{i}$ as NO$_{2}$, which is more about the impact
	of environmental pollution on housing prices. The latter concentrates on the
	geographical weighted regression (GWR) method.} \cite{Sun2014} proposed a
semiparametric coefficient varying spatial model and derived the
distribution theory of their estimator. While they do not have inference results on functions of the random coefficients, they
have a model selection procedure to determine the parametric and
nonparametric components. We thus take the model they
selected as our starting point.

Guided by model selection
results, \cite{Sun2014} argue that DIS doesn't affect LSTAT and RAD. Intuitively,
for LSTAT, the geographical distribution of employment opportunities does
not necessarily correlate with a community's socio-economic status; a
community might be far from employment centers but could have good public
transportation links or a high number of remote work opportunities,
mitigating the impact on the community's lower socioeconomic status. For RAD, the accessibility to highways might depend more on urban planning and infrastructure investment rather than the proximity to employment centers. The baseline model, based on \cite{Sun2014}, is:%
\begin{flalign*}
	y_i=\lambda\suml_{j=1}^{n} w_{ij}y_j+\suml_{k=0}^{2}\beta_{k}x_{ik}+\suml_{m=1}^{3}\delta_{m}(z_{i})p_{im}+\epsilon_{i}.
\end{flalign*} We consider a more general version where the spatial lag coefficient varies with DIS: 
\begin{flalign*}
	y_i=\lambda(z_{i})\suml_{j=1}^{n} w_{ij}y_j+\suml_{k=0}^{2}\beta_{k}x_{ik}+\suml_{m=1}^{3}\delta_{m}(z_{i})p_{im}+\epsilon_{i}.
\end{flalign*}
Let $d_\tau=d_\mu=h$, we estimate: 
\begin{flalign*}
	\begin{split}		
		&y_i=\lambda\suml_{j=1}^{n} w_{ij}y_j+\suml_{k=0}^{2}\beta_{k}x_{ik}+\suml_{m=1}^{3}\suml_{k=1}^{h}\alpha_{mk}p_{im}\psi_{ik}+\epsilon_{i}, \ \text{and using} \ \mathbb{W}_{1};
        \\
		&y_i=\lambda\suml_{j=1}^{n} w_{ij}y_j+\suml_{k=0}^{2}\beta_{k}x_{ik}+\suml_{m=1}^{3}\suml_{k=1}^{h}\alpha_{mk}p_{im}\psi_{ik}+\epsilon_{i}, \  \ \epsilon_{i}=\suml_{j=1}^{n}b_{ij}v_{j} , \ \text{using} \ \mathbb{W}_{2};
        \\
&y_i=\suml_{l=1}^{h}\tau_{l}c_{il}+\suml_{k=0}^{2}\beta_{k}x_{ik}+\suml_{m=1}^{3}\suml_{k=1}^{h}\alpha_{mk}p_{im}\psi_{ik}+\epsilon_{i}, \ \ \epsilon_{i}=\suml_{j=1}^{n}b_{ij}v_{j} , \ \text{using} \ \mathbb{W}_{3};
		\\
		&y_i=\suml_{l=1}^{h}\mu_{l}\mathfrak{H}_{il}+\suml_{k=0}^{2}\beta_{k}x_{ik}+\suml_{m=1}^{3}\suml_{k=1}^{h}\alpha_{mk}p_{im}\psi_{ik}+\epsilon_{i}, \ \ \epsilon_{i}=\suml_{j=1}^{n}b_{ij}v_{j} , \ \text{using} \ \mathbb{W}_{4}.
	\end{split}
\end{flalign*}We consider two row-normalized spatial weight matrices. The first is the first-order queen matrix, $W_{n}^{q}$, discussed in \cite{malikov2017semiparametric}. The second, $W_{n}^{t}$, indicates
whether two locations are in the same tract. For $\mathbb{W}_{i}$, $i=0,...,3$, 
$\alpha=(\alpha_{11},...\alpha_{1h},...,\alpha_{3h})^{\prime }$, and
the null is $H_{0}:\alpha =0$. For $\mathbb{W}_{4}$, the null is $%
H_{0}:\alpha =0$ or $H_{0}:\mu=0$, where 
$\mu=(\mu_{1},...,\mu_{h})^{\prime}$.
Results are listed in Table~\ref{tab:emp2poly}.
All tests reject the null $H_{0}:\alpha =0$. In addition, testing with $%
\mathbb{W}_{4}$ leads to a rejection of $H_{0}:\mu =0$, indicating a significant nonlinear relationship between $log(DIS)$ and neighboring house prices. We further estimate the varying $\lambda (z_{i})$ and evaluate its
empirical mean $\hat{\lambda}=\frac{1}{n}\sum_{i=1}^{n}\lambda (z_{i})$. As shown in Panel E of Table \ref{tab:emp2poly}, the parametric method underestimates the spatial coefficient. Thus, our results based on more general
models complement those in \citet{Sun2014}.

\section{Monte Carlo Simulation}\label{sec:mc}
We present a basic set of Monte Carlo results in this section and a wider range of simulations in the online appendix. All experiments use 1000 replications.

\subsection{Basic setting}\label{basic}
Taking $n = 200, 500, 900,$ we choose two specifications to generate $y$ with $d_{\delta}=1$:
\begin{flalign*}
	y_{i}=\sum_{k=1}^{\dlambda}\lambda_{k}\suml_{j=1}^{n}w_{kij} y_j+x_{i}^{\prime}\beta+p_{i}\delta\left(z_{i}\right)+\epsilon_{i}, i=1, \ldots, n,
\end{flalign*}
and
\begin{flalign*}
	y_{i}=\sum_{k=1}^{\dlambda}\lambda_{k}\suml_{j=1}^{n}w_{kij} y_j+x_{i}^{\prime}\beta+p_{i}\delta\left(z_{i}\right)+v_{i}, \ v_i=\suml_{j=1}^{n}b_{ij}v_j+\epsilon_{i}, i=1, \ldots, n,
\end{flalign*}
where $\epsilon_i$ are $i.i.d$ with three zero mean and unit variance distributions: \labeltext{(V1)}\label{V1} $N(0,1)$, \labeltext{(V2)}\label{V2} $\sqrt{\frac{5}{4}}t(10)$ and \labeltext{(V3)}\label{V3} $\frac{1}{4}(\chi_{8}^{2}-8)$.
We generate $z_i\stackrel{i.i.d}{\sim}U[0,1]$, $p_i\stackrel{i.i.d}{\sim}U[-2,2]$, $x_{i1}=1$, $x_{i2}\stackrel{i.i.d}{\sim}N(1, 2)$ and set $\beta=(-1,1)'$. For $\lambda$, we fix the total weight at 0.9 and then define a decreasing vector $\lambda^{*}=(\dlambda,\dlambda-1,...,1)'$. We then set $\lambda=0.9\left(\sum_{k=1}^{\dlambda}\lambda_{k}^{*}\right)^{-1}\lambda^{*}$.
\par 
For the spatial weights matrices, we generate $W_{k}$ as circulants. Specifically, let $W_{k}^*$ be the symmetric circulant matrix whose first-row entries are
$
w_{k1j}^*= 0$ if  $j=1$ or $j=k+2, \ldots, n-k$ and $w_{k1j}^*= 1$ if $j=2, \ldots, k+1$ or $j=n-k+1, \ldots, n$.
Thus, the weight matrix $W_{k}^*$ encapsulates a binary neighbourhood criterion for $k$ neighbours on either `side' of a unit. 
Now define the normalized matrix $W_{k}=\{\overline{\alpha}(W_{k}^*)\}^{-1} W_{k}^*,$ recalling that $\overline{\alpha}(W_{k}^*)=2k$ for a circulant matrix.
Furthermore, $b_{ij}$ is the typical element of $(I_n-\sum_{k=1}^{\dlambda}\lambda_{k}W_{k})^{-1}$. 
\par 
To construct the SHAC statistics $\mathbb{W}_{2}, \mathbb{W}_{3}$ and $\mathbb{W}_{4}$, we proceed as follows. Generate $M=d_{\lambda}$ distinct distance measures by setting $\ell=[n^{\eta}]+1$ with $\eta=3/7$ and taking $q=8$, so that each unit has at least $\ell$ neighbors, where $\eta$ is defined in Assumption~\ref{ass:distances} (a).\footnote{From the symmetric circulant matrix $W_{\dlambda}$, we build a graph and compute its pairwise shortest‐path distances in MATLAB, then apply \texttt{cmdscale} to generate $n\times 2$ coordinate matrix. We next compute the full $n\times n$ Euclidean distance matrix $D$ with the typical elements $d_{ij}$.} 
For each $m=1, ..., M$, we add the
actual distance matrix, $D_{m}^{*}$, used in practice with unobserved symmetric measurement noise. Its typical entries are
$
d_{ij,m}^{*}=d_{ij}+\nu_{ij,m},$ where $\nu_{ij,m}=1/2(\mu_{ij,m}+\mu_{ji,m})$ with $\mu_{ij,m}\stackrel{i.i.d}{\sim} U[0,1]$.
Define for each unit $i$ the $\ell$-th nearest‐neighbor distance by
$
d_{(i,\ell),m}
=$ the $\ell$-th smallest entry of row $i$ of $D^{*}_{m}$, excluding the diagonal. We then set
$
d_{m}=\max_{1\leq i\leq n} d_{(i,\ell),m},
$
so that every unit has at least $\ell$ neighbors within distance $d_{m}$.  Collecting these over $m=1,\dots,M$ yields the $M\times1$ vector $d_{m}$. We set $\delta=0$ for the null hypothesis and $\delta(\rx)=1-\rx^2$ for any $\rx\in R$ for the alternatives. We choose polynomials $\psi_j(\rx)=\rx^j$ for $j=1,\cdots, h$ and the design of trigonometric functions is in the online appendix. The IV matrix for $\mathbb{W}_{1}$ and $\mathbb{W}_{2}$ is $K_{1}=[X, W_{1}X,..., W_{\dlambda}X,\Psi]$. Furthermore, asy-p represents the standard-normal p-values, whereas chi-p represents the chi-square calibration $(\chi^2_{\da}-\da)/\sqrt{2\da}$, which is to improve testing performance in small samples. This may also help in empirical applications, although in both applications in the previous section critical values from either distribution yield the same results. We discuss size performance in this section and present power performance in the online appendix, since under global alternatives the power of the asymptotic tests tends to one as $n\to\infty$.
\par 
Table~\ref{tab:mcW1size} reports the empirical sizes of $\mathbb{W}_{1}$, and Table~\ref{tab:mcW2size} reports those of the SHAC-corrected statistic $\mathbb{W}_{2}$, at nominal 1\%, 5\% and 10\% levels. When $n=200$, $\mathbb{W}_{2}$ exhibits substantially larger size distortion than $\mathbb{W}_{1}$. The reason is: Theorem~\ref{theorem:W1appr} requires $n^{-1}\,\dxi\to 0,$
whereas Theorem~\ref{theorem:W2appr} requires $n^{\eta-1}\,\dxi\to 0$ with $\eta=3/7$.
In our design $d_{\xi}=d_{\lambda}+d_{\beta}+d_{h}$ with $d_{\lambda}\in\{2,4\}$, $d_{\beta}=2$, $d_{h}\in\{2,4,8\}$, so $d_{\xi}/n$ is already small at $n=200$, but $n^{\eta-1}d_{\xi}^2$ is relatively large. As $n$ increases to 500 and 900, $n^{\eta-1}d_{\xi}^2$ shrinks and the size of $\mathbb{W}_{2}$ converges rapidly to its nominal levels.
Table~\ref{tab:mcW1size}-\ref{tab:mcW2size} also show a clear pattern across the polynomial order with $h\in\{2,4,8\}$.  For both $\mathbb W_{1}$ and $\mathbb W_{2}$, the smallest sieve yields the most accurate sizes, the largest sieve the greatest distortion, particularly for $\mathbb W_{2}$ at $n=200$.  This accords with the fact that larger $h$ increases $d_{\alpha}$ and hence $d_{\xi}=d_{\lambda}+d_{\beta}+d_{\alpha}$, which in turn slows the convergence rates $n^{-1}d_{\xi}\to0$ for $\mathbb W_{1}$ and $n^{\eta-1}d_{\xi}^2\to0$ for $\mathbb W_{2}\,$. A similar scenario obtains when increasing $d_{\lambda}$, since it also raises $d_{\xi}$ and worsens finite-sample size accuracy. 

\subsection{Nonparametric spatial matrix}
For the nonparametric spatial matrix, we generate $g^{*}_{ij}=\Phi\left(-b_{i j}\right) \mathbf{1}\left(c_{ij}<0.1\right)$ if $i \neq j$, and $w_{1ii}=0$, where $\Phi(\cdot)$ is the standard normal cdf, $b_{ij}\stackrel{i.i.d}{\sim} U[-3,3]$, and $c_{i j} \stackrel{i.i.d}{\sim} U[0,1]$. From this construction, we ensure that $G^*$ is sparse with no more than $10\%$ elements being nonzero. Then, define $G=G^{*} / 1.2 \bar{\alpha}(G^{*})$, ensuring the existence of $(I-G)^{-1}$. For $\mathbb{W}_{3}$, we approximate elements in $G$ by $\widehat{g}_{ij}=\sum_{l=0}^{\dtau}\tau_{l}e_{l}(d_{ij})=\sum_{l=0}^{\dtau}\tau_{l}[d_{ij,m}^{*}]^{l}\mathbf{1}(d_{ij,m}^{*}<d_{m})$ if $i \neq j$ with $\dtau=2,4$, where $d_{ij,m}^{*}$ and $d_{m}$ are discussed in Section \ref{basic}, and the IV matrix for $\mathbb{W}_{3}$ is $K_{2}=[X,\mathcal{E}_{1}X,...,\mathcal{E}_{\dtau}X,\Psi]$, where $e_{l}(d_{ij})$ are the typical elements of $\mathcal{E}_{l}$.\par 
Table~\ref{tab:mcW3size} reports the empirical sizes of $\mathbb{W}_{3}$ at nominal 1\%, 5\% and 10\% levels.  When $n=200$, $\mathbb{W}_{3}$ with $\dtau=2$ yields rejection rates very close to the nominal levels, whereas the larger basis $\dtau=8$ produces mild over-rejection.  As $n$ increases to 500 and 900, both choices of $\dtau$ rapidly converge to their nominal sizes. This scenario is similar to that of $\mathbb{W}_{2}$ and originates from the rate condition $n^{\eta-1}d_{\tau}^2\to0$ in Theorem \ref{theorem:W3appr}. In other words, it also requires an accurate approximation of the SHAC estimator. 

\subsection{Varying spatial coefficient}
We now consider varying $\lambda$ such that $\lambda=\lambda(z_i)$ and $d_{\lambda}=1,2$. When $d_{\lambda}=1$, we set $\lambda(z_i)=0.9\sin(\pi z_i)$, when $d_{\lambda}=2$, we set $\lambda_1(z_i)=0.6\sin(\pi z_i)$ and $\lambda_2(z_i)=0.3\sin(\pi z_i)$. Others are the same as those in Section \ref{basic}. For estimation, 
we use $\phi_{il}(z)=\frac{1}{h}[\frac{2}{\pi}\tanh(z_{i})]^{l}$ as the polynomial function and $\phi_{il}(z) =\frac{1}{h}\sin(\frac{z_i}{2l})$ as the trigonometric function for $l=1,...,h$.
The IV matrix $K_{3}=K_{1}$ with $d_{\lambda}=1,2$.\par 
Table~\ref{tab:mcW4size} reports the rejection probabilities of $\mathbb{W}_{4}$ at nominal 1\%, 5\% and 10\% levels when testing $\delta(z)$.  At $n=200$, the smallest sieve $(h=2,\,d_{\lambda}=1)$ achieves sizes close to nominal, whereas larger sieves, especially $h=8,\,d_{\lambda}=2$, over‐reject noticeably. As $n$ increases to 900, size distortion vanishes rapidly.  Table~\ref{tab:mcW4size_lambda} shows the sizes of $\mathbb W_{4}$ when testing $\lambda(z)$.  In particular, at $n=200$ with $d_{\lambda}=2$ and $h=8$, the over‐rejection is especially severe. By $n=900$, size accuracy is largely restored. This pattern reflects the rate condition in Theorem~\ref{theorem:W4appr},
$n^{\eta-1}d_{\gamma}^2 \to 0$,
since both $h$ and $d_{\lambda}$ enter the total sieve dimension $d_{\gamma}$, slowing the approximation when they are large. Across all designs, the chi-p delivers empirical sizes closer to nominal levels than asy-p in small samples, while the difference dissipates as $n$ grows. 
Results for trigonometric functions are similar, and left for the online appendix.

\section{Conclusion}\label{sec:con}

We provide a machinery for conducting tests on nonparametrically varying coefficients using ideas from parametric testing, and with similar ease of implementation. We allow for spatial dependence and provide for testing nonparametrically varying spatial dependence coefficients. It seems reasonable to conjecture that the methodology is applicable in other settings where one may be interested in testing linear restrictions on nonparametric functions. These provide directions for future research.

\setlength{\bibsep}{0.3ex}
\begin{spacing}{1}\small
	\normalem
\bibliographystyle{chicago}
\bibliography{AG_refs}
\end{spacing}

\begin{table}[H]
\centering\footnotesize\stl{1.5mm}\stla{0mm}
\caption{Estimation and testing results in Section \ref{empcrs}.}
	\begin{spacing}{0.4}
%
	\end{spacing}
	\label{tab:mcW4size}%
\end{table}%

\newpage
\appendix
\setcounter{assumption}{0}
\renewcommand{\theassumption}{\Alph{assumption}}%
\setcounter{equation}{0}
\renewcommand{\theequation}{\thesection.\arabic{equation}}%
\setstretch{1.2}
\begin{center}
    {\Large \textbf{A Supplement to}\\ \titlename}
\end{center}
\begin{center}
    \begin{minipage}{0.9\textwidth}
\centering
\textbf{Abhimanyu Gupta}\\{\small Department of Economics, Queen's University, Dunning Hall, 94 University Avenue, Kingston, Ontario K7L 3N6, Canada and Department of Economics, University of Essex, Wivenhoe Park, Colchester, CO4 3SQ, UK. Email: abhimanyu.g@queensu.ca.} 
\\
\textbf{Xi Qu}\\
{\small Department of Economics, Antai College of Economics and Management, Shanghai Jiao Tong University, 1954 Huashan Road, Shanghai, 200030, China PRC. Email: xiqu@sjtu.edu.cn.}
\\
\textbf{Sorawoot Srisuma}\\
{\small Department of Economics, National University of Singapore, 1 Arts Link, 117570, Singapore.} 
\\
\textbf{Jiajun Zhang}\\
{\small International Business School, Shanghai University of International Business and Economics, 201620, China PRC. E-mail: jiajun30@suibe.edu.cn.}
\end{minipage}
\end{center}
\setstretch{1.2}

\section{General Central Limit Theorem}\label{sec:app_genCLT}
Introduce an $n\times n$ matrix $\fancya$, which is symmetric, idempotent and has rank $J\rightarrow \infty$. 
\begin{assumption}\label{ass:clt_general}
Assume that 

(a)
\begin{equation}\label{fancya_definition}
\fancya=B'\fancys B,	
\end{equation}
where $B$ is the $n\times n$ matrix with entries $b_{ij}$ satisfying Assumption \ref{ass:bsums} and $\fancys$ is a symmetric matrix. (b) The entries $s_{ij}$ of $\fancys$ satisfy $s_{ij}=O_p(J/n)$ and $\sum_{i=1}^n s_{ij}^2=O_p(J/n)$, uniformly in $i,j=1,\ldots,n$. (c) $\fancys$ satisfies \ref{propG}. 
\end{assumption}
\begin{theorem}\label{thm:general_clt}
Let $v$ be an $n \times 1$ $i.i.d. (0,1)$ vector with elements satisfying Assumption \ref{ass:errorssec3}. Under Assumption \ref{ass:clt_general} and $J^{-1}+J^3/n\rightarrow 0$ as $n\rightarrow\infty$,
	\begin{equation}
		{\left(v ^{\prime }\fancya v -J\right)}/{\sqrt{2J}}\overset{d}{\longrightarrow}N(0,1).\label{clt_tgt2}
	\end{equation}
\end{theorem}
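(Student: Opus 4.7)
\textbf{Proof proposal for Theorem \ref{thm:general_clt}.}
The plan is to reduce the CLT for $v'\fancya v$ to a CLT for a clean (zero-diagonal) quadratic form in i.i.d.\ variables, exploiting the idempotency of $\fancya$ at every step. Since $\fancya$ is symmetric with $\fancya^2=\fancya$ and rank $J$, its eigenvalues lie in $\{0,1\}$, so $\tr(\fancya^k)=J$ for every integer $k\geq 1$; in particular $E[v'\fancya v]=\tr(\fancya)=J$. A Hoeffding-style decomposition yields
\[
v'\fancya v-J \;=\; \underbrace{\sum_{i=1}^n \fancya_{ii}(v_i^2-1)}_{D_n}+\underbrace{\sum_{i\neq j}\fancya_{ij}v_iv_j}_{Q_n},
\]
and I will show $D_n/\sqrt{2J}=o_p(1)$ and $Q_n/\sqrt{2J}\overset{d}{\rightarrow}N(0,1)$ separately.

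For $D_n$, combining the factorization $\fancya=B'\fancys B$ with the row/column sum bound on $B$ in Assumption \ref{ass:bsums} and the uniform entry bound $s_{jk}=O_p(J/n)$ in Assumption \ref{ass:clt_general}(b) gives
\[
\sup_i|\fancya_{ii}| \;\leq\; \bigl(\sup_{j,k}|s_{jk}|\bigr)\bigl(\sup_i\textstyle\sum_j|b_{ji}|\bigr)^2 \;=\; O_p(J/n),
\]
hence $\sum_i\fancya_{ii}^2\leq\sup_i|\fancya_{ii}|\cdot\tr(\fancya)=O_p(J^2/n)$. Because $s\geq 8$ in Assumption \ref{ass:errorssec3} ensures $E(v_i^2-1)^2<C$, independence across $i$ and Chebyshev give $D_n=O_p(J/\sqrt{n})$ and therefore $D_n/\sqrt{2J}=O_p(\sqrt{J/n})=o_p(1)$, using only $J/n\to 0$ (implied by $J^3/n\to 0$).

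For $Q_n$, I will invoke the \citet{DeJong1994} CLT for clean quadratic forms in independent variables, working conditionally on $\fancys$ on the high-probability event where Assumption \ref{ass:clt_general} holds, so that the coefficients $\fancya_{ij}$ are (conditionally) deterministic. Idempotency simplifies the key quantities: $\mathrm{Var}(Q_n)=2\sum_{i\neq j}\fancya_{ij}^2=2[\tr(\fancya^2)-\sum_i\fancya_{ii}^2]=2J(1+o_p(1))$, and the Lindeberg-type negligibility condition reduces to
\[
\frac{1}{J}\max_i\sum_{j\neq i}\fancya_{ij}^2 \;=\; \frac{1}{J}\max_i\bigl[(\fancya^2)_{ii}-\fancya_{ii}^2\bigr] \;=\; \frac{1}{J}\max_i\fancya_{ii}(1-\fancya_{ii}) \;=\; O_p(1/n)\to 0,
\]
where the replacement $(\fancya^2)_{ii}=\fancya_{ii}$ is used crucially. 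The $\geq 8$th moment on $v_i$ amply covers de Jong's fourth-moment requirement, so $Q_n/\sqrt{2J}\overset{d}{\rightarrow}N(0,1)$ conditionally on $\fancys$ and, by dominated convergence for characteristic functions, unconditionally.

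The main technical obstacle is the interplay between the random structure of $\fancys$ (only $O_p$ bounds are available in Assumption \ref{ass:clt_general}(b),(c)) and the deterministic hypotheses of de Jong's CLT. This will be handled by performing the argument conditionally on $\fancys$ over sets of probability at least $1-\varepsilon$ on which the bounds hold uniformly, deducing conditional convergence, and then letting $\varepsilon\downarrow 0$ to obtain the unconditional statement via Slutsky. It is worth noting that the proof sketched above uses only $J\to\infty$ and $J/n\to 0$; the stronger rate $J^3/n\to 0$ in the statement is dictated by the quadratic-form approximations (e.g., Theorems \ref{theorem:W1appr}, \ref{theorem:W2appr}) that feed their target matrices into this CLT, rather than by the CLT argument itself.
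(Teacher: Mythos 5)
Your proposal is correct in outline but takes a genuinely different route from the paper's. The paper writes $(v'\fancya v-J)/\sqrt{2J}=\sum_s\rho_s$ with $\rho_s=L^{-1}a_{ss}(v_s^2-1)+2L^{-1}\mathbf{1}(s\geq 2)v_s\sum_{t<s}a_{st}v_t$ and applies the martingale CLT of \citet{Scott1973}, so the bulk of its effort goes into the Lyapunov and conditional-variance conditions, which it controls by lengthy direct computations through the representation $\fancya=B'\fancys B$. You instead split off the diagonal, show it is $o_p(\sqrt{J})$, and hand the clean off-diagonal form to \citet{DeJong1994}, letting idempotency do most of the work via $\tr(\fancya^k)=J$ and $\sum_j\fancya_{ij}^2=\fancya_{ii}$. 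Your diagonal bound $\sup_i|\fancya_{ii}|=O_p(J/n)$, the resulting $D_n/\sqrt{2J}=O_p(\sqrt{J/n})$, and the variance computation $2\sum_{i\neq j}\fancya_{ij}^2=2J(1+o_p(1))$ are all correct, and the conditioning device you describe is the same one the paper uses implicitly. Your route is shorter and, as you observe, appears to need only $J\to\infty$ and $J/n\to 0$; the paper's own argument effectively uses $J^2/n\to 0$, and the stronger rate $J^3/n\to 0$ is consumed by the approximation theorems that feed into this CLT.

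There is one substantive omission. De Jong's CLT for $\sum_{i\neq j}a_{ij}v_iv_j$ requires, besides the max-row condition $\sigma^{-2}\max_i\sum_j a_{ij}^2\to 0$ that you verify, a second condition ensuring $E Q_n^4/\sigma^4\to 3$; this is a condition on the matrix, not merely a moment hypothesis on $v_i$, and the max-row condition alone is not sufficient (take $a_{ij}=m^{-1}$ on an $m\times m$ block with $a_{ij}=0$ elsewhere: the max-row condition holds, yet the limit is a recentred $\chi^2_1$). You must additionally check that quantities such as $\tr(\fancya^4)/\{\tr(\fancya^2)\}^2$ and $\sigma^{-4}\sum_i\big(\sum_j\fancya_{ij}^2\big)^2$ vanish. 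Idempotency again makes this immediate --- $\tr(\fancya^4)=J$ gives $1/J$ for the first ratio, and $\sum_i\fancya_{ii}^2=O_p(J^2/n)$ gives $O_p(1/n)$ for the second --- but the step has to be stated: writing that the eighth moment on $v_i$ ``covers de Jong's fourth-moment requirement'' conflates the moment assumption on the innovations with the required matrix condition.
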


\begin{proof}
${\left(v ^{\prime }\fancya v-J\right)}/{\sqrt{2J}}=\sum_{s=1}^{n }\rho_{s}$, where
\begin{center}
$\rho_{s}=L^{-1}a_{ss}\left(  v _{s}^{2}-1\right) +2L^{-1}\mathbf{1}(s\geq 2) v
_{s}\sum_{t<s}a_{st} v _{t},$
\end{center} writing $L=\sqrt{2J}$.
From \cite{Scott1973}, (\ref{clt_tgt2}) follows if
\begin{equation}
		\sum_{s=1}^{n}\E \rho_{s}^{4}\overset{p}{\longrightarrow }0,\text{ as }%
		n\rightarrow \infty ,  \label{pmle_lyap}
\end{equation}%
\begin{equation}
\text{and }\sum_{s=1}^{n}\left[ \E \left( \rho_{s}^{2}\left. {}\right \vert v _{t},t<s\right) -\E \left( \rho_{s}^{2}\right) \right] \overset{%
			p}{\longrightarrow }0,\text{ as }n\rightarrow \infty .  \label{pmle_var}
\end{equation}%
We show (\ref{pmle_lyap}) first. Evaluating the LHS and using Assumption \ref{ass:clt_general} yields
\begin{eqnarray*}
		\E \rho_{s}^{4} &\leq &CL^{-4}a_{ss}^{4}+CL^{-4}\sum_{t<s}a_{st}^{4}\leq
		CL^{-4}\left( \sum_{t\leq s}a_{st}^{2}\right) ^{2} \leq CL^{-4}\left( b_{s}^{\prime }\fancys\sum_{t\leq s}b_{t}b_{t}^{\prime
		}\fancys b_{s}\right) ^{2}\\
		&\leq& CL^{-4}\left( b_{s}^{\prime }\fancys^{2}b_{s}\right)
		^{2}= CL^{-4}\sum_{i,j,k=1}^n b_{is}b_{ks}s_{ij}s_{kj} \\
		&\leq& CL^{-4}\sum_{i,k=1}^n \left \vert b_{is}\right\vert \left\vert b_{ks}\right\vert\sum_{j=1}^n \left(s^2_{ij}+s^2_{kj}\right)=O_{p}\left( L^{-4}Jn^{-1}\left(\sum_{i=1}^n \left \vert b_{is}\right\vert\right)^2\right) ,
\end{eqnarray*}%
so $\sum_{s=1}^{n}\E \rho_{s}^{4} =O_{p}\left(
		L^{-4}Jn^{-1}\sum_{s=1}^{n}\left(\sum_{i=1}^n \left \vert b_{is}\right\vert\right)^2\right)
		=O_{p}\left( L^{-4}J\right) $.
Thus (\ref{pmle_lyap}) is established. Note that 
$\E \left( \left. \rho_{s}^{2}\right \vert v
	_{t},t<s\right)=
		4L^{-2}\left \{ \left( \mu _{4}-1\right)
		a_{ss}^{2}+2\mu _{3}\mathbf{1}(s\geq 2)\sum_{t<s}a_{st}a_{ss} v
		_{t}\right \} +4L^{-2}\mathbf{1}(s\geq 2)\left(
		\sum_{t<s}a_{st} v _{t}\right) ^{2},$ where $\E v_t^j=\mu_j, j=3, 4$,
	and $
	\E \rho_{s}^{2}=4L^{-2}\left( \mu _{4}-1\right) a_{ss}^{2}+4L^{-2}\mathbf{1}(s\geq 2)\sum_{t<s}a_{st}^{2},
	$
so that (\ref{pmle_var}) is bounded by 
\begin{equation}
		CL^{-2}\sum_{s=2}^{n}\sum_{t<s}a_{st}a_{ss} v _{t}+C\left \{
		\sum_{s=2}^{n}\left( \sum_{t<s}a_{st} v _{t}\right) ^{2}-\sum_{t<s}a_{st}^{2}\right \} .  \label{pmle_var_11_1}
\end{equation}%
By transforming the range of summation, the square of the first term in (\ref%
	{pmle_var_11_1}) has expectation bounded by
	\begin{equation}
		CL^{-4}\E \left( \sum_{t=1}^{n-1}\sum_{s=t+1}^{n}a_{st}a_{ss} v
		_{t}\right) ^{2}\leq CL^{-4}\sum_{t=1}^{n-1}\left(
		\sum_{s=t+1}^{n}a_{st}a_{ss}\right) ^{2},  \label{pmle_var_11_2}
	\end{equation}%
where the factor in parentheses on the RHS of (\ref{pmle_var_11_2}) is
\begin{eqnarray*}
		&&\sum_{s,r=t+1}^{n}b_{s}^{\prime }\fancys b_{s}b_{s}^{\prime }\fancys b_{t}b_{r}^{\prime
		}\fancys b_{r}b_{r}^{\prime }\fancys b_{t} \leq\sum_{s,r=t+1}^{n}\left \vert b_{s}^{\prime }\fancys b_{s}b_{r}^{\prime
		}\fancys b_{r}\right \vert \left \vert b_{s}^{\prime }\fancys b_{t}\right \vert \left
		\vert b_{r}^{\prime }\fancys b_{t}\right \vert \\
		&\leq & C\sum_{s,r=t+1}^{n}\sum_{i,j,k,l=1}^{n}\left \vert b_{is}\right \vert
		\left \vert s_{ij}\right \vert \left \vert b_{jr}\right \vert \left \vert
		b_{ks}\right \vert \left \vert s_{lk}\right \vert \left \vert b_{kr}\right
		\vert \left \vert b_{s}^{\prime }\fancys b_{t}\right \vert \left \vert
		b_{r}^{\prime }\fancys b_{t}\right \vert \\
		&\leq &C\left( \sup_{i,j}\left \vert s_{ij}\right \vert \right) ^{2}\left(
		\sup_{s\geq 1}\sum_{i=1}^{n}\left \vert b_{is}\right \vert \right)
		^{4}\sum_{s,r=t+1}^{n}\left \vert b_{s}^{\prime }\fancys b_{t}\right \vert \left
		\vert b_{r}^{\prime }\fancys b_{t}\right \vert \\
		&=&O_{p}\left( J^{2}n^{-2}\left( \sum_{s=t+1}^{n}\left \vert b_{t}^{\prime
		}\fancys b_{s}\right \vert \right) ^{2}\right) =O_{p}\left( J^{2}n^{-2}\left(
		\sum_{s=t+1}^{n}\sum_{i,j=1}^{n}\left \vert b_{it}\right \vert \left
		\vert s_{ij}\right \vert \left \vert b_{js}\right \vert \right)
		^{2}\right) ,
\end{eqnarray*}%
where we used Assumption \ref{ass:clt_general}. Now, by Assumption \ref{ass:clt_general},
\sloppy
$\sum_{s=t+1}^{n}\sum_{i,j=1}^{n}\left \vert b_{it}\right \vert \left
		\vert s_{ij}\right \vert \left \vert b_{js}\right \vert
		=O_p\left(\sup_{i,j}\left \vert s_{ij}\right \vert\sup_t\sum_{i=1}^{n}\left \vert b_{it}\right \vert \sum_{j=1}^{n}\sum_{s=t+1}^{n}\left \vert b_{js}\right
		\vert \right)=O_{p}\left( J\sup_t\sum_{i=1}^{n}\left \vert b_{it}\right \vert \right) ,$ and thus
 (\ref{pmle_var_11_2}) is
 $O_{p}\left( L^{-4}J^{4}n^{-2}\sup_{t}\left( \sum_{i=1}^{n}\left \vert b_{it}\right \vert \right) \left(
	\sum_{i=1}^{n}\left( \sum_{t=1}^{n-1}\left \vert b_{it}\right \vert
	\right) \right) \right)=O_{p}\left( L^{-4}J^{4}n^{-1}\right) $, whence the first term in (\ref{pmle_var_11_1}) is $O_{p}\left( J^2n^{-1}\right) $, which is negligible.
	
	Again transforming the range of summation and using the elementary inequality $|a+b|^2\leq C\left(a^2+b^2\right)$, we can bound the square of the second
	term in (\ref{pmle_var_11_1}) by
	\begin{equation}  \label{pmle_var_11_3}
		C\left(\sum_{t=1}^{n-1}\sum_{s=t+1}^n a_{st}^2
		\left( v_t^2-1\right)\right)^2+C\left(2\sum_{t=1}^{n-1}%
		\sum_{r=1}^{t-1}\sum_{s=t+1}^n a_{st} a_{sr}  v_t v_r\right)^2.
	\end{equation}
The expectations of the two terms in (\ref{pmle_var_11_3}) are bounded by a constant times $e _{1}$ and $e _{2}$, respectively, where
	$
	e _{1} =\sum_{t=1}^{n-1}\left( \sum_{s=t+1}^{n}a_{st}^{2}\right) ^{2},
	e _{2} =\sum_{t=1}^{n-1}\sum_{r=1}^{t-1}\left(
	\sum_{s=t+1}^{n}a_{st}a_{sr}\right) ^{2}.  $
	Thus (\ref{pmle_var_11_3}) is $O_{p}\left( e _{1}+e _{2}\right) $.
	Using Assumption \ref{ass:clt_general} and elementary inequalities, $e _{2}$ is bounded by
	\begin{eqnarray*}
		&&\sum_{t=1}^{n-1}\sum_{r=1}^{t-1}\sum_{s=t+1}^{n}\sum_{u=t+1}^{n}b_{s}^{%
			\prime }\fancys b_{t}b_{s}^{\prime }\fancys b_{r}b_{u}^{\prime }\fancys b_{t}b_{u}^{\prime }\fancys b_{r}
		\\
		&=&O_{p}\left( L^{-4}\sum_{s,r,t,u=1}^{n}\sum_{i,j=1}^{n}\left \vert
		b_{ir} \right \vert \left \vert s_{ij}\right \vert \left \vert
		b_{js} \right \vert \sum_{i,j=1}^{n}\left \vert b_{ir} \right
		\vert \left \vert s_{ij}\right \vert \left \vert b_{ju} \right \vert
		\sum_{i,j=1}^{n}\left \vert b_{it} \right \vert \left \vert
		s_{ij}\right \vert \left \vert b_{js} \right \vert
        \times
        \sum_{i,j=1}^{n}\left \vert b_{it} \right \vert \left \vert
		s_{ij}\right \vert \left \vert b_{ju} \right \vert \right) \\
		&=&O_{p}\left( L^{-4}pn^{-1}\sum_{s,r,t=1}^{n}\left(
		\sum_{i,j=1}^{n}\left \vert b_{ir} \right \vert \left \vert
		s_{ij}\right \vert \left \vert b_{js} \right \vert \right) \left(
		\sum_{i,j=1}^{n}\left \vert b_{ir} \right \vert \left \vert
		s_{ij}\right \vert \sum_{u=1}^{n}\left \vert b_{ju} \right \vert
		\right) \right. \\
		&\times &\left. \sum_{i,j=1}^{n}\left \vert b_{it} \right \vert \left
		\vert s_{ij}\right \vert \left \vert b_{js} \right \vert
		\sum_{i=1}^{n}\left \vert b_{it} \right \vert\sup_u \sum_{j=1}^{n}\left
		\vert b_{ju} \right \vert \right) \\
		&=&O_{p}\left( L^{-4}J^{2}n^{-2}\sum_{s,r=1}^{n}\left( \sum_{i,j=1}^{n}\left
		\vert b_{ir} \right \vert \left \vert s_{ij}\right \vert \left \vert
		b_{js} \right \vert \right) \sum_{i=1}^{n}\left \vert b_{ir}\right \vert\sum_{j=1}^{n}\left(\sum_{u=1}^n\left
		\vert b_{ju} \right \vert\right)\right.\\
		&\times&\left.\left( \sum_{i,j=1}^{n}\sum_{t=1}^{n}\left \vert b_{it}\right \vert \left \vert s_{ij}\right \vert \left \vert b_{js}\right \vert \right) \right) \\
		&=&O_{p}\left( L^{-4}J^{2}n^{-1}\sum_{i,j=1}^{n}\left( \sum_{r=1}^{n}\left
		\vert b_{ir} \right \vert \right) \left \vert s_{ij}\right \vert
		\left( \sum_{s=1}^{n}\left \vert b_{js} \right \vert \right)
		\left(\sup_{j} \sum_{i=1}^{n}\left \vert s_{ij}\right \vert \right)
		\sum_{j=1}^{n}\left \vert b_{js} \right \vert \right) \\
		&=&O_p\left(L^{-4}J^{2}n^{-1}\sup_{k}\sum_{i,j=1}^{n}\left \vert s_{ij}\right \vert \sum_{i=1}^{n}\left \vert s_{ik}\right \vert\right)=O_p\left(L^{-4}J^{2}n^{-1}\sup_{k}\sum_{i,j,\ell=1}^{n}\left \vert s_{ij}\right \vert \left \vert s_{\ell k}\right \vert\right)\\
		&=&O_p\left(L^{-4}J^{2}n^{-1}\sup_{k}\sum_{i,j,\ell=1}^{n}\left(s_{ij}^2+ s_{\ell k}^2\right)\right)=O_p\left(L^{-4}J^{2}n^{-1}\sum_{i,j,\ell=1}^{n}\left(s_{ij}^2+ s_{\ell j}^2\right)\right)\\&=&O_p\left(L^{-4}J^{2}n^{-1}\sum_{i,j=1}^{n}s_{ij}^2\right)=O_p\left(L^{-4}J^{2}\sup_{j}\sum_{i=1}^{n}s_{ij}^2\right)=O_p\left(Jn^{-1}\right),
	\end{eqnarray*}
	Similary $e
	_{1}$ is \allowdisplaybreaks%
	\begin{eqnarray*}
		&&O_{p}\left( L^{-4}\sum_{t=1}^{n-1}\left \{ \sum_{s=t+1}^{n}\left(
		\sum_{i,j=1}^{n}\left \vert s_{ij}\right \vert \left \vert b_{jt}\right \vert \left \vert b_{is}\right \vert \right) ^{2}\right \}
		^{2}\right) \\
		&=&O_{p}\left( L^{-4}\left(\sup_{i,j}\left\vert s_{ij}\right\vert\right)^4\sum_{t=1}^{n-1}\left
		\{ \sum_{s=t+1}^{n}\left(\sum_{i=1}^{n}\left\vert b_{is}\right\vert \sum_{j=1}^{n}\left \vert
		b_{jt}\right \vert \right) ^{2}\right \} ^{2}\right) \\
		&=&O_{p}\left( (LJN)^{-4}\sum_{t=1}^{n-1}\left
		\{ \sum_{s=t+1}^{n}\left(\sum_{i=1}^{n}\left\vert b_{is}\right\vert\right)^2\left( \sum_{j=1}^{n}\left \vert
		b_{jt}\right \vert \right) ^{2}\right \} ^{2}\right)\\
		&=&O_{p}\left( (LJN)^{-4}\sum_{t=1}^{n-1}\left( \sum_{s=t+1}^{n}\left(\sum_{i=1}^{n}\left\vert b_{is}\right\vert\right)^2\right)^2\left( \sum_{j=1}^{n}\left \vert
		b_{jt}\right \vert \right) ^{4}\right)\\
		&=&O_{p}\left( (LJN)^{-4}\left( \sum_{t=1}^{n-1}\sum_{j=1}^{n}\left \vert
		b_{jt}\right \vert \right) \left(\sum_{s=t+1}^{n}\sum_{i=1}^{n}\left\vert b_{is}\right\vert\right)^2
        \times
        \sup_s\left( \sum_{i=1}^{n}\left \vert
		b_{is}\right \vert \right) ^{2}\sup_t\left( \sum_{j=1}^{n}\left \vert
		b_{jt}\right \vert \right) ^{3}\right)\\
		&=&O_{p}\left( (L^{-4}J^{4}n^{-1}\right)=O_{p}\left( J^{2}n^{-1}\right).
	\end{eqnarray*}
\end{proof}

\section{Proofs of Theorems}\label{sec:appteststats}
\begin{proof}[Proof of Theorem \ref{theorem:W1appr}]
Denote $\rdelta=(r_{1\delta},\cdots,r_{n\delta})'$. Note that $\hat{\xi}=\xi+(L'\PKa L)^{-1}L'\PKa u=(L'\PKa L)^{-1}L'\PKa(\rdelta+\epsilon)$ and $\hat\alpha=\alpha+R(L'\PKa L)^{-1}L'\PKa(\rdelta+\epsilon)$. Under $H_0: \alpha=0$, we have $\hat\alpha=R(L'\PKa L)^{-1}L'\PKa(\rdelta+\epsilon)$. Then we are going to investigate
\begin{flalign}
	\begin{split}
		&n(\rdelta+\epsilon)'\PKa L(L'\PKa L)^{-1}R'\cD_{1}^{-1}R(L'\PKa L)^{-1}L'\PKa(\rdelta+\epsilon)\\
		=&\epsilon'\cM_1\epsilon+(A_1+A_2)
	\end{split}
\end{flalign}
where $\cM_1=\frac{1}{n}\Ka\cV_{1}\Ka'$ with $\cV_{1}=(\Ka'\Ka)^{-1}\Ka'L\Ta^{-1}R'\cD_{1}^{-1}R\Ta^{-1}L'\Ka(\Ka'\Ka)^{-1}$, $\Ta=\invn L'{\PKa}L$ and $\cD_{1}=\invn R\Ta^{-1}R'$, and $A_1=\rdelta'\cM_1\rdelta$ and $A_2=2\rdelta'\cM_1\epsilon$. Now, 
$|A_{1}|\leq \Vert r_{\delta} \Vert^2\Vert\cM_{1}\Vert=o_p(1)$, since $E\Vert\rdelta\Vert^2=o(1)$, and $\Vert\cM_{1}\Vert=O_p(1)$ by Assumptions \ref{ass:approxerrorsec2} and \ref{ass:eigsec2}. 
A similar argument holds for $A_2$ because it has (conditionally) zero mean and variance $4r_{\delta}'\cM_{1}E(\epsilon\epsilon')\cM_{1}r_{\delta}=4r_{\delta}'\cM_{1}r_{\delta}$, by idempotence of $\cM_{1}$. Thus $\frac{A_1}{\sqrt{d_\alpha}}$ and $\frac{A_2}{\sqrt{d_\alpha}}$ are both $o_p(1)$ and establishing the theorem.
\end{proof}
\begin{proof}[Proof of Theorem \ref{theorem:sar}]
This is a special case of Theorem \ref{theorem:sarlp} and therefore omitted.	
\end{proof}

\begin{proof}[Proof of Lemma \ref{lemma:shac1}]
This is a simpler version of the proof of Lemma \ref{lemma:shac3} below and is therefore omitted.
\end{proof}
\newcommand{\cN}{\mathscr{M}}
\begin{proof}[Proof of Theorem \ref{theorem:W2appr}]
$\hat\alpha=\alpha+R\hat\xi=\alpha+R(L'{\PKa}L)^{-1}L'{\PKa}(\rdelta+Bv)$ and under $H_0: \alpha=0$,
\begin{flalign}
	\begin{split}		n\hat\alpha'\widehat\cD_{2}^{-1}\hat\alpha=(v'\widehat\cM_2v+A_{11}+A_{12}+2A_{21}+2A_{22}),
	\end{split}
\end{flalign}
\sloppy where $A_{11}=\rdelta'(\widehat\cN_2-\cN_{2})\rdelta$, $A_{12}=\rdelta'\cN_{2}\rdelta$, $A_{21}=\rdelta'(\widehat\cM_{2}-\cM_{2})v$, $A_{22}=\rdelta'\cM_{2}v$, $\widehat\cM_2=\frac{1}{n}B'\Ka\widehat\cV_2\Ka'B$ with $\widehat\cV_2=(\Ka'\Ka)^{-1}\Ka'L\Ta^{-1}
R'\widehat\cD_{2}^{-1}R\Ta^{-1}
L'\Ka(\Ka'\Ka)^{-1}$, $\widehat\cD_2=\frac{1}{n}R\widehat\cU_1R'$, $\widehat\cU_1=\Ta^{-1}L'\Ka(\Ka'\Ka)^{-1}\hat\Xi_1(\Ka'\Ka)^{-1}\Ka'L\Ta^{-1}$, and 
$\widehat\cN_{2}=\frac{1}{n}\Ka\widehat\cV_2\Ka'$.
The remainder of the proof is similar to the proof of Theorem \ref{theorem:W4appr}.
\end{proof}
\begin{proof}[Proof of Theorem \ref{theorem:sarlp}]
The proof follows by setting $J=J_{1}$, $\fancya=\mathcal{M}_2$ and $\fancys=\invn\Ka\cV_{2}\Ka'$ in Theorem \ref{thm:general_clt}. Checking Assumption \ref{ass:clt_general} follows exactly as in the proof of Theorem \ref{theorem:sarvc} below and is omitted.	
\end{proof}
\begin{proof}[Proof of Lemma \ref{lemma:shac2}]
	This proof is a simpler version of the proof of Lemma \ref{lemma:shac3} below and is therefore omitted.
\end{proof}
\begin{proof}[Proof of Theorem \ref{theorem:W3appr}]
Define $\rG=(r_{1G},\cdots,r_{nG})'$ and $\rw=\rG+\rdelta$. Then, we have $\hat\alpha=\alpha+R\hat\xi=\alpha+R(F'{\PKb}F)^{-1}F'{\PKb}(\rw+Bv)$ and $\norm\rw\norm^2=o_p(1)$ by Assumption \ref{ass:approxerrorsec4}. Under $H_0: \alpha=0$,
\begin{flalign}
	\begin{split}		n\hat\alpha'\widehat\cD_{3}^{-1}\hat\alpha=(v'\widehat\cM_{3}v+A_{11}+A_{12}+2A_{21}+2A_{22}),
	\end{split}
\end{flalign}
where $A_{11}=\rw'(\widehat\cN_3-\cN_3)\rw, A_{12}=\rw'\cN_3\rw$, $A_{21}=\rw'(\widehat\cM_{3}-\cM_{3})v$, $A_{22}=\rw'\cM_{3}v$, $\widehat\cM_{3}=\frac{1}{n}B'\Kb\widehat\cV_{3}\Kb'B$ with $\widehat\cV_{3}=(\Kb'\Kb)^{-1}\Kb'F\Tb^{-1}R'\widehat\cD_{3}^{-1}R\Tb^{-1}F'\Kb(\Kb'\Kb)^{-1}$, $\widehat\cD_{3}=\frac{1}{n}R\widehat\cU_{2}R'$, 
$\widehat\cU_{2}=\Tb^{-1}F'\Kb(\Kb'\Kb)^{-1}\allowbreak
\hat\Xi_{2}(\Kb'\Kb)^{-1}\Kb'F\Tb^{-1}$, and $\widehat\cN_{3}=\frac{1}{n}\Kb\widehat\cV_{3}\Kb'$. The remainder of the proof is similar to the proof of Theorem \ref{theorem:W4appr}.
\end{proof}
\begin{proof}[Proof of Theorem \ref{theorem:sarlpnp}]
The proof is similar to that of Theorem \ref{theorem:sarvc} and omitted.	
\end{proof}
\begin{proof}[Proof of Lemma \ref{lemma:shac3}]
Define $r_{\lambda}=(r_{1\lambda},...,r_{n\lambda})'$.
First, note that
\begin{equation}\label{gammahatgamma}
	\hat\gamma-\gamma=(G'{\PKc}G)^{-1}G'\Kc(\Kc'\Kc)^{-1}\Kc'u=(G'{\PKc}G)^{-1}G'\Kc(\Kc'\Kc)^{-1}\Kc'(r_{\lambda}+r_{\delta}+\epsilon),
\end{equation}
and 
\begin{equation}\label{bound1}
\begin{split}
\E\left\Vert n^{-1} \Kc'(\rlambda+\rdelta)\right\Vert^2&\leq n^{-2}\left\Vert \Kc\right\Vert^2 \E\left\Vert \rlambda+\rdelta\right\Vert^2\\&
\leq
Cn^{-1}\overline{\alpha}(n^{-1}\Kc'\Kc)\left(\E\left\Vert r_{\lambda}\right\Vert^2+\E\left\Vert r_{\delta}\right\Vert^2\right)=o(1),
\end{split}
\end{equation}
because $\E(r_{i\lambda}^2)=o(n^{-1})$ and $\E(r_{i\delta}^2)=o(n^{-1})$, by Assumptions \ref{ass:approxerrorsec5} and \ref{ass:eigsec5}. 

Next,
\begin{eqnarray}
	\E\left\Vert n^{-1} \Kc'\epsilon\right\Vert^2&\leq& n^{-2}\sum_{i=1}^n\left\Vert k_i\right\Vert^2 \E \epsilon_i^2+n^{-2}\sum_{i\neq j}\left\Vert k_i\right\Vert\left\Vert k_j\right\Vert \left\vert \E \epsilon_i\epsilon_j\right\vert\nonumber\\
	&=& n^{-2}\sum_{i=1}^n\left\Vert k_i\right\Vert^2 \sum_{r}b_{ir}^2+2n^{-2}\sum_{j}\left\Vert k_j\right\Vert\sum_{i<j}\left\Vert k_i\right\Vert\sup_r\left\vert b_{jr}\right\vert \left(\sum_{r}\left\vert b_{ir}\right\vert\right)\nonumber\\
	&\leq& Cn^{-2}\sum_{i=1}^n\left\Vert k_i\right\Vert^2+Cn^{-2}\sup_{j}\left\Vert k_j\right\Vert\left(\sup_r\sum_{j}\left\vert b_{jr}\right\vert\right)\sum_{i<j}\left\Vert k_i\right\Vert\nonumber \\
	&=&O(\Jc/n), \label{bound2}
\end{eqnarray}
by Assumptions \ref{ass:bsums} and \ref{ass:secondmomentsec5}. 

Combining Assumption \ref{ass:eigsec5}, (\ref{gammahatgamma}), (\ref{bound1}) and (\ref{bound2}),
we obtain
\begin{equation}\label{gammabound}
	\left\Vert\hat\gamma-\gamma\right\Vert=O_p(\sqrt{\dgamma/n}).	
\end{equation}  
Now, observe that $\hat\Xi_{r,s}-\Xi_{r,s}=a_{1,rs}+a_{2,rs}+a_{3,rs},$ where
\begin{eqnarray}
	a_{1,rs}&=&n^{-1}\sum_{i,j=1}^nk_{ir}k_{sj}(\hat u_i\hat u_j-\epsilon_i\epsilon_j)\fancyk\left(\min_m \left\{d^*_{ij,m}/d_m\right\}\right),\label{a1def}\\
	a_{2,rs}&=&n^{-1}\sum_{i,j=1}^nk_{ir}k_{sj}(\epsilon_i\epsilon_j-\sigma_{ij})\fancyk\left(\min_m \left\{d^*_{ij,m}/d_m\right\}\right),\label{a2def}\\
	a_{3,rs}&=&n^{-1}\sum_{i,j=1}^nk_{ir}k_{sj}\sigma_{ij}\left(\fancyk\left(\min_m \left\{d^*_{ij,m}/d_m\right\}\right)-1\right),\label{a3def}
\end{eqnarray}
where recall that $\sigma_{ij}$ is a typical element of $\Sigma$. We will show $a_{l,rs}=o_p(1), l=1,2,3$. To begin, write $r_i=r_{i\lambda}+r_{i\delta}$, $g_i'$ for the $i$-th row of $G$ and note that
\begin{equation}\label{uhatuhatepseps}
	\hat u_i\hat u_j-\epsilon_i\epsilon_j=g_i'(\hat\gamma-\gamma)g_j'(\hat\gamma-\gamma)+g_i'(\hat\gamma-\gamma)r_j+r_ir_j+g_j'(\hat\gamma-\gamma)\epsilon_i+g_i'(\hat\gamma-\gamma)\epsilon_j+\epsilon_i r_j+\epsilon_j r_i.
\end{equation}

Now, by Assumptions \ref{ass:secondmomentsec5} and \ref{ass:approxerrorsec5}, $\left\vert r_i\right\vert=o_p(1/\sqrt{n})$, $\left\Vert g_i\right\Vert=O_p(\sqrt{d_\gamma})$ and, as in the proof of Theorem 2 of \citet{Kelejian2007}, $\sum_{j=1}^n \left(1-\prod_{m=1}^M\mathbf{1}(d^*_{ij,m}>d_m)\right)\leq \ell$, all uniformly in $i$. Then, we can write $\left\vert a_{1,rs}\right\vert\leq\sum_{l=1}^7a_{1l,rs}$,
where, writing $\pi_M=1-\prod_{m=1}^M\mathbf{1}(d^*_{ij,m}>d_m)$ and also using (\ref{gammabound}), 
\begin{eqnarray*}
	a_{11,rs}&=&n^{-1}\sum_{i,j=1}^n \pi_M\left\Vert g_i\right\Vert\left\Vert g_j\right\Vert\left\Vert \hat\gamma-\gamma\right\Vert^2\leq C \ell \cdot O_p\left({\dgamma^2}/{n}\right),\\
	a_{12,rs}&=&n^{-1}\sum_{i,j=1}^n \pi_M\left\Vert g_i\right\Vert\left\Vert \hat\gamma-\gamma\right\Vert\left\vert r_j\right\vert\leq C \ell\cdot O_p\left({ \dgamma}/{n}\right),\\
	a_{13,rs}&=&n^{-1}\sum_{i,j=1}^n \pi_M\left\vert r_i\right\vert\left\vert r_j\right\vert\leq C\ell\cdot O_p\left({1}/{n}\right),\\
	a_{14,rs}&=&n^{-1}\sum_{i,j=1}^n \pi_M\left\Vert g_j\right\Vert\left\Vert \hat\gamma-\gamma\right\Vert\left\vert \epsilon_i\right\vert\leq C\ell^{1-1/q}\cdot O_p\left({\dgamma}/{n^{1/2-1/q}}\right),\\
	a_{15,rs}&=&n^{-1}\sum_{i,j=1}^n \pi_M\left\Vert g_i\right\Vert\left\Vert \hat\gamma-\gamma\right\Vert\left\vert \epsilon_j\right\vert\leq C\ell^{1-1/q}\cdot O_p\left({\dgamma}/{n^{1/2-1/q}}\right),\\
	a_{16,rs}&=&n^{-1}\sum_{i,j=1}^n \pi_M\left\vert \epsilon_i\right\vert\left\vert r_j\right\vert\leq C\ell^{1-1/q}\cdot O_p\left({1}/{n^{1/2-1/q}}\right),\\
	a_{17,rs}&=&n^{-1}\sum_{i,j=1}^n \pi_M\left\vert \epsilon_j\right\vert\left\vert r_i\right\vert\leq C\ell^{1-1/q}\cdot O_p\left({1}/{n^{1/2-1/q}}\right),
\end{eqnarray*}
where the H\"older inequality is used to bound $a_{1j,rs}, j\geq 4$, as in the proof of Theorem 2 of \citet{Kelejian2007}. 
Therefore, we need $\frac{\ell \dgamma^2}{n}=o_p(1)$ and $\frac{\ell^{1-1/q}\dgamma}{n^{1/2-1/q}}=o_p(1)$ as $n\rightarrow\infty$.
By Assumption \ref{ass:distances}(a), we have $\frac{\ell \dgamma^2}{n}=o_p\left(n^{\eta-1}\dgamma^2\right)$ and $\frac{\ell^{1-1/q}\dgamma}{n^{1/2-1/q}}=o_p\left(n^{\eta(q-1)/q-(q-2)/2q}\dgamma\right)$.
We then have
$a_{1,rs}=o_p(1)$ under the sufficient rate in \eqref{rcV3} as $n\rightarrow\infty$.
$a_{2,rs}=o_p(1)$ and $a_{3,rs}=o_p(1)$ are established exactly as the negligibility of $b_{rs,n}$ and $c_{rs,n}$ is established in the proof of Theorem 1 of \citet{Kelejian2007}. This concludes the proof.
\end{proof}
\begin{proof}[Proof of Theorem \ref{theorem:W4appr}]
Denote $\rlambda=(r_{1\lambda},\cdots,r_{n\lambda})'$, $\ry=\rlambda+\rdelta$. Thus,
$\hat\vartheta=\vartheta+R\hat\xi=\vartheta+R(G'{\PKc}G)^{-1}G'{\PKc}(\ry+Bv)$ and $\norm\ry\norm^2=o(1)$ by Assumption \ref{ass:approxerrorsec5}. Under $H_0: \vartheta=0$,
\begin{flalign}
	\begin{split}		n\hat\vartheta'\widehat\cD_{4}^{-1}\hat\vartheta=(v'\widehat\cM_4v+A_{11}+A_{12}+2A_{21}+2A_{22})
	\end{split}
\end{flalign}
where $A_{11}=\ry'(\widehat\cN_4-\cN_4)\ry, A_{12}=\ry'\cN_4\ry$, $A_{21}=\ry'(\widehat\cM_{4}-\cM_{4})v$, $A_{22}=\ry'\cM_{4}v$, $\widehat\cM_{4}=\frac{1}{n}B'\Kc\widehat\cV_4\Kc'B$, $\widehat\cV_4=(\Kc'\Kc)^{-1}\Kc'G\Tc^{-1}R'\widehat\cD_{4}^{-1}R\Tc^{-1}G'\Kc(\Kc'\Kc)^{-1}$, $\widehat\cD_4=\frac{1}{n}R\widehat\cU_3R'$, 
$\widehat\cU_3=\Tc^{-1}G'\Kc(\Kc'\Kc)^{-1}\allowbreak
\hat\Xi_3(\Kc'\Kc)^{-1}\Kc'G\Tc^{-1}$, and $\widehat\cN_{4}=\frac{1}{n}\Kc\widehat\cV_4\Kc'$. 
Observe that
\sloppy
$\cU_3=\Tc^{-1}\left(\invn L'\Kc\right)\left(\invn \Kc'\Kc\right)^{-1}\Xi_{3}\left(\invn \Kc'\Kc\right)^{-1}\left(\invn \Kc'L\right)\Tc^{-1}$,
we have $\norm\widehat{\cU}_3-\cU_3\norm\leq Cr_n\cdot O_p\left(1\right)=o_p(1)$ according to the proof of Lemma \ref{lemma:shac3}
and Assumptions \ref{ass:secondmomentsec5} and \ref{ass:eigsec5}.
Thus, $\norm\widehat{\cU}_{3}^{-1}-\cU_3^{-1}\norm\leq\norm\widehat{\cU}_{3}^{-1}\norm\norm\cU_3-\widehat{\cU}_{3}\norm\norm\cU_{3}^{-1}\norm\leq Cr_n\cdot O_p\left(1\right)=o_p(1)$ by Assumptions \ref{ass:eigsec3} and \ref{ass:eigsec5}, where $r_n=\frac{\ell\dgamma^2}{n}+\frac{\ell^{1-1/q}\dgamma}{n^{1/2-1/q}}$.
Similarly, we have
$\norm\widehat\cN_{4}-\cN_{4}\norm=o_p(1)$, and $\norm\widehat{\cM}_4-\cM_4\norm=o_p(1)$ by the structure of $\cN_{4}$, $\cM_4$ and Assumptions \ref{ass:bsums} and \ref{ass:secondmomentsec5}.
\par 
Since $\E(v_{i}^2)=1$, we have
\begin{flalign}\label{approrate}
\frac{1}{\sqrt{\dvartheta}}\big|v'(\widehat{\cM}_4-\cM_4)v\big|\leq C r_n\cdot O_p\left(\dvartheta^{-1/2}\right)=o_p(1) 
\end{flalign}
\par 
Similar to \eqref{approrate} and the proof of Theorem \ref{theorem:sar}, we have $\frac{1}{\sqrt{\dvartheta}}\left\vert A_{11}\right\vert\leq Cr_n\cdot O_p\left(1\right)=o_p(1)$, 
$\frac{1}{\sqrt{\dvartheta}}|A_{12}|\leq Cr_n\cdot O_p\left(1\right)=o_p(1)$ since $\cN_{4}=O_p(1)$ by Assumptions \ref{ass:bsums} and \ref{ass:secondmomentsec5},
$\frac{1}{\sqrt{\dvartheta}}|A_{21}|\leq Cr_n\cdot O_p\left(1\right)=o_p(1)$
and
$\frac{1}{\sqrt{\dvartheta}}|A_{22}|=o_p(1)$.
Therefore, we obtain the sufficient rate from \eqref{rcV3} as $n\rightarrow\infty$ for the required quantities to be negligible.
\end{proof}
\begin{proof}[Proof of Theorem \ref{theorem:sarvc}]	
(i) Set $J=J_{3}$, $\fancya=\mathcal{M}_4$ and $\fancys=\invn\Kc\cV_4\Kc'$ in Theorem \ref{thm:general_clt}. 
We now note that the entries of $\fancys$ are $s_{ij,4}=\invn k_i'\cV_4 k_j,$
where $k_i'$ is the $i$-th row of $\Kc$, and so $\left \vert s_{ij,4}\right \vert =O_{p}\left( n^{-1}\left\Vert k_i\right\Vert\left\Vert k_j\right\Vert\right)= O_{p}\left( \Jc n^{-1}\right),$ uniformly in $i,j$, by Assumptions \ref{ass:secondmomentsec5} and \ref{ass:eigsec5}, and  arguments used in the proof of Theorem \ref{theorem:W4appr} above.
Similarly, we also observe that 
$\sum_{j=1}^n s_{ij,4}^2=O_p\left(\Jc n^{-1}\right),$ uniformly in $i$. Thus Assumption \ref{ass:clt_general}(b) is satisfied. Assumption \ref{ass:clt_general} is satisfied to due to Assumption \ref{ass:eigsec5}. The conclusion now follows by Theorem \ref{thm:general_clt}.\par 

(ii) By Lemma \ref{lemma:shac3}, we have 
\begin{flalign}\label{eq:A1}
\left\Vert\hat\vartheta-\vartheta\right\Vert=O_p\left(\sqrt{\dvartheta/n}\right).
\end{flalign}
From Assumptions \ref{ass:secondmomentsec5} and \ref{ass:eigsec5}, we have
$\left\Vert\widehat{\cD}_4-\cD_4\right\Vert=o_p(1) $
and 
\begin{flalign}\label{eq:A2}
\left\Vert\widehat{\cD}_{4}^{-1}-\cD_4^{-1}\right\Vert\leq\left\Vert\widehat{\cD}_{4}^{-1}\right\Vert\left\Vert\cD_4-\widehat{\cD}_{4}\right\Vert\left\Vert\cD_{4}^{-1}\right\Vert=o_p(1).
\end{flalign}
Write $\hat\vartheta=\vartheta+\Delta_\vartheta$ from \eqref{eq:A1} where $\Delta_\vartheta=O_p\left(\sqrt{\dvartheta/n}\right)$. Then
$
n\hat\vartheta'\hat{\cD}_4^{-1}\hat\vartheta
=n\vartheta'\cD_4^{-1}\vartheta+ 2n\vartheta'\cD_4^{-1}\Delta_\vartheta+n\Delta_\vartheta'\cD_4^{-1}\Delta_\vartheta+nR_n,
$ where 
$\left\vert R_n\right\vert=\left\vert\hat\vartheta'\hat{\cD}_4^{-1}\hat\vartheta-\hat\vartheta'\cD_4^{-1}\hat\vartheta\right\vert=o_p(1)\cdot\left\Vert\hat\vartheta\right\Vert^2=o_p\left(\Vert\vartheta\Vert^2+\dvartheta/n\right)=o_p\left(\Vert\vartheta\Vert^2\right)$
by \eqref{eq:A2}.
\par 
Using \eqref{rc5.2}, \eqref{eq:A1} and bounded eigenvalues in \eqref{eq:A2},
\begin{equation}\label{eq:A4}
\begin{split}
&\frac{1}{\sqrt{2\dvartheta}}2n\vartheta'\cD_4^{-1}\Delta_\vartheta
=O\left(\frac{n}{\sqrt{\dvartheta}}\right)O\left(\Vert\vartheta\Vert\right)O_p(1)O_p\left(\sqrt{\dvartheta/n}\right)
=o_p\left(\frac{n\Vert\vartheta\Vert}{\sqrt{\dvartheta}}\right)
\\
&\frac{1}{\sqrt{2\dvartheta}}n\Delta_\vartheta'\cD_4^{-1}\Delta_\vartheta
=O\left(\frac{n}{\sqrt{\dvartheta}}\right)O_p(1)O_p\left(\frac{\dvartheta}{n}\right)
=o_p\left(\frac{n}{\sqrt{\dvartheta}}\right)\\
&\frac{1}{\sqrt{2\dvartheta}}nR_n
=o_p\left(\frac{n\Vert\vartheta\Vert^2}{\sqrt{\dvartheta}}\right).
\end{split}
\end{equation}
Note that for the sparse case, $\Vert\vartheta\Vert=O(1)$, and for the dense case, $\Vert\vartheta\Vert_{\infty}=O(1)$ and $\Vert\vartheta\Vert=\sqrt{\dvartheta}$. 
Thus, in general, $\mathbb{W}_{4}=\frac{n\vartheta'\cD_{4}^{-1}\vartheta-\dvartheta}{\sqrt{2\dvartheta}}+o_p\Big(\frac{n\Vert\vartheta\Vert^2}{\sqrt{\dvartheta}}\Big)$.  For any nonstochastic sequence $\{C_n\}, C_n=o\left(\frac{n\Vert\vartheta\Vert^2}{\sqrt{\dvartheta}}\right), \lim_{n\to\infty}P\left(\left|\mathbb{W}_4\right|>C_n\right)=1$, so that consistency follows.
\par  
\newcommand{\He}{H_{\ell}}
(iii) We cover all local alternatives here.  
We define $H_{\ell i}\equiv H_{\ell i,n}$, with corresponding statistic $\mathbb{W}_{i}^{*}$, for $i=1,2,3,4$. For $\mathbb{W}_{1}^{*}$, we consider $\alpha=\alpha^{*}\equiv\nu_{1n}\da^{\frac{1}{4}}/(n\nu_{1n}'\Gamma_{1n}\nu_{1n})^{\frac{1}{2}}$ with $\nu_{1n}$ a $\da\times 1$ non-zero vector and $\Gamma_{1n}$ a $\da\times\da$ matrix. Under $H_{\ell 1}$, we have $\mathbb{W}_{1}^{*}=\mathbb{W}_{1}+\frac{1}{\sqrt{2\da}}\da^{\frac{1}{2}}\mathcal{A}_{1}\left(\frac{1}{n}\nu_{1n}'\Gamma_{1n}\nu_{1n}\right)^{-1}+o_p(1),$ where $\mathcal{A}_{1}=\frac{1}{n}\nu_{1n}'\cD_{1}^{-1}\nu_{1n}$. We choose $\Gamma_{1n}=\cD_{1}^{-1}$ and then finish the proof.
For $\mathbb{W}_{4}^{*}$, we consider
$\vartheta=\vartheta^{*}\equiv\nu_{4n}\dvartheta^{\frac{1}{4}}/\left(n\nu_{4n}'\Gamma_{4n}\nu_{4n}\right)^{\frac{1}{2}}$. Under $H_{\ell 4}$, we have 
$\mathbb{W}_{4}^{*}=\mathbb{W}_{4}+\frac{1}{\sqrt{2\dvartheta}}\dvartheta^{\frac{1}{2}}\mathcal{A}_{4}\left(\frac{1}{n}\nu_{4n}'\Gamma_{4n}\nu_{4n}\right)^{-1}+o_p(1),$ 
where $\mathcal{A}_{4}=\frac{1}{n}\nu_{4n}'\widehat\cD_{4}^{-1}\nu_{4n}$,
and $\left\Vert\frac{1}{n}\nu_{4n}'(\widehat\cD_{4}^{-1}-\cD_{4}^{-1})\nu_{4n}\right\Vert=o_p(1)$ by \eqref{rcV3}.
We choose $\Gamma_{4n}=\widehat\cD_{4}^{-1}$ and then finish the proof. The proofs of 
$\mathbb{W}_{2}^{*}$ and $\mathbb{W}_{3}^{*}$ 
\end{proof}

\setcounter{table}{0}
\renewcommand{\thetable}{S\arabic{table}}
\setcounter{figure}{0}
\renewcommand{\thefigure}{S\arabic{figure}}

\section{Supplementary materials}

This section presents additional simulation results. Additional results for the polynomial functions in the main text are shown in Tables~\ref{tab:mcW4size_lambda}-\ref{tab:mcW4power_lambda}, and those for the trigonometric functions in the main text are shown in Tables~\ref{tab:mcW1size_trig}-\ref{tab:mcW4power_trig_lambda}.

We also considered tests of the CRS hypothesis to facilitate the empirical part in Section \ref{empcrs}, with $d_{\delta}=2$. Our null of interest when testing CRS is
$
H_0^{true}: \delta_1(z)+\delta_2(z)=1, z\in\mathcal{Z}.
$
We reparameterized with $\theta(z)=\delta_1(z)+\delta_2(z)-1$ and the model is
$y_i-p_{i1}=\lambda_1 \suml_{j=1}^{n}w_{1ij}y_j+x_i^{\prime}\beta+p_{i1}\theta(z_i) +(p_{i2}-p_{i1})\delta_{2}(z_i)+\epsilon_{i}.$ 
For the alternatives, we set $\delta_{1}(\rx)=1-\rx^2$ and $\delta_{2}(\rx)=1+\rx^2$. We studied all four statistics proposed in the main text. These results are similar to our main results, and so the corresponding tables are omitted and available upon request.

\begin{table}[H]
	\centering\footnotesize\stl{2.3mm}\stla{0mm}
\caption{Empirical sizes of $\mathbb{W}_{4}$ for the varying spatial coefficient SAR model proposed in Section~\ref{sec:vcsar}. 
Null approximating $H_0^{true}$, $H_0:\mu=0$. 
Rejection probabilities at 1\%, 5\% and 10\% levels. Polynomial basis functions.}
	\begin{spacing}{0.9}
%
	\end{spacing}
	\label{tab:mcW4power_trig_lambda}%
\end{table}%

\begin{figure}[H]
  \centering
  \hspace*{-1.5cm}
  \begin{minipage}[t]{0.6\textwidth}
    \centering
    \includegraphics[width=\linewidth]{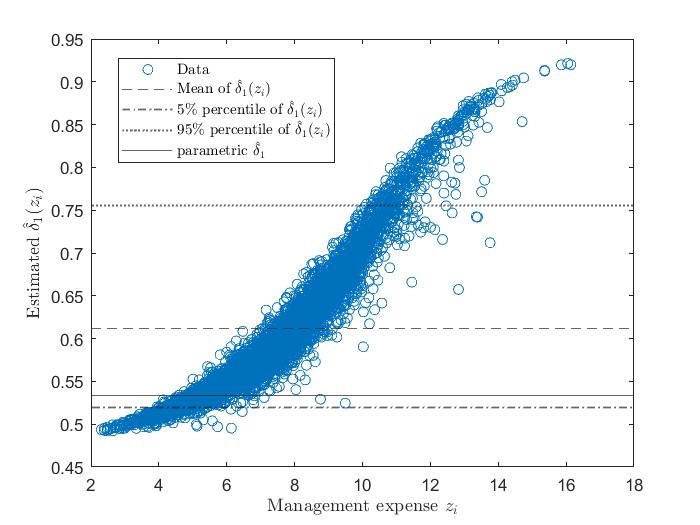}
\footnotesize{(a) Capital}
  \end{minipage}\hspace*{-0.6cm}
  \begin{minipage}[t]{0.6\textwidth}
    \centering
    \includegraphics[width=\linewidth]{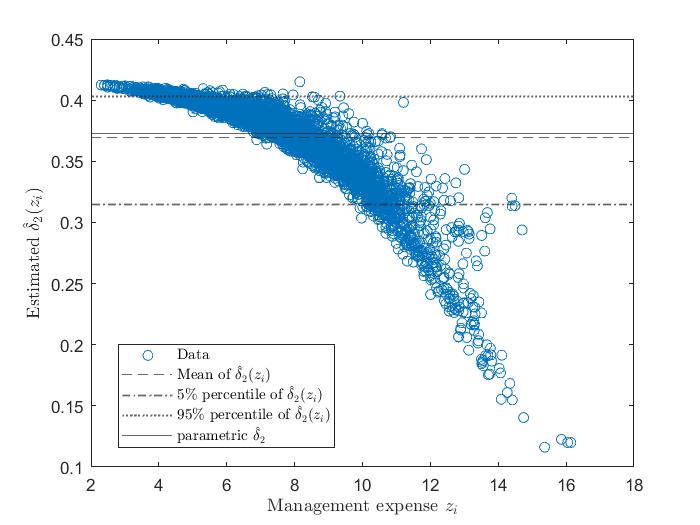}
 \footnotesize{(b) Labor}
\end{minipage}

  \begin{minipage}[t]{0.6\textwidth}
    \centering
    \includegraphics[width=\linewidth]{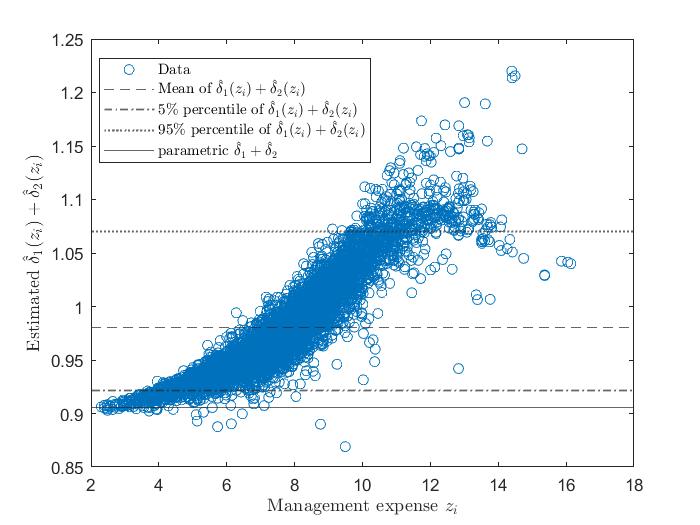}
  \footnotesize{(c) CRS}
\end{minipage}

  \caption{Cobb-Douglas versus semiparametric production.}
  \label{fig:crs_comp}
\end{figure}

\end{document}